\documentclass{llncs}
\usepackage{multicol}
\usepackage{wrapfig}
\usepackage{thmtools} 
\usepackage{thm-restate}
\usepackage{url}
\usepackage{amsmath,amssymb}
\usepackage{color}
\usepackage{nicefrac}
\usepackage{graphicx}
\usepackage{multirow}
\usepackage{rotate}
\usepackage{hyperref}
\usepackage{mathtools}
\usepackage[titletoc]{appendix}

%
\newif\ifcommentson\commentsonfalse
\def\mywidth{.9}        
\def\mywidthRep{.8} 
\ifcommentson
\newcommand{\commentAA}[1]{\begin{center} \parbox{\mywidth\textwidth}{\textbf{\textcolor{black}{Comment A.}} \textcolor{red}{#1 }}\end{center}}
\newcommand{\commentAB}[1]{\begin{center} \parbox{\mywidth\textwidth}{\textbf{\textcolor{black}{Comment B.}} \textcolor{red}{#1} }\end{center}}
\newcommand{\commentMA}[1]{\begin{center} \parbox{\mywidth\textwidth}{\textbf{\textcolor{black}{Comment M.}} \textcolor{red}{#1} }\end{center}}
\newcommand{\replyAA}[1]{\begin{center} \parbox{\mywidthRep\textwidth}{\textbf{Reply A.} \textcolor{blue}{#1} }\end{center}}
\newcommand{\replyAB}[1]{\begin{center} \parbox{\mywidthRep\textwidth}{\textbf{Reply B.} \textcolor{blue}{#1} }\end{center}}
\newcommand{\replyMA}[1]{\begin{center} \parbox{\mywidthRep\textwidth}{\textbf{Reply M.} \textcolor{blue}{#1} }\end{center}}
\newcommand{\commentA}[1]{\marginpar{\footnotesize \color{red} {\bf A:} \textsf{\scriptsize #1}}}
\newcommand{\commentB}[1]{\marginpar{\footnotesize \color{red} {\bf B:} \textsf{\scriptsize #1}}}
\newcommand{\commentM}[1]{\marginpar{\footnotesize \color{red} {\bf M:} \textsf{\scriptsize #1}}}
\newcommand{\replyA}[1]{\marginpar{\footnotesize \color{blue} {\bf A:} \textsf{\scriptsize #1}}}
\newcommand{\replyB}[1]{\marginpar{\footnotesize \color{red} {\bf B:} \textsf{\scriptsize #1}}}
\newcommand{\replyM}[1]{\marginpar{\footnotesize \color{red} {\bf M:} \textsf{\scriptsize #1}}}
\else
\newcommand{\commentAA}[1]{}
\newcommand{\commentAB}[1]{}
\newcommand{\commentMA}[1]{}
\newcommand{\replyAA}[1]{}
\newcommand{\replyAB}[1]{}
\newcommand{\replyMA}[1]{}
\newcommand{\commentA}[1]{}
\newcommand{\commentB}[1]{}
\newcommand{\commentM}[1]{}
\newcommand{\replyA}[1]{}
\newcommand{\replyB}[1]{}
\newcommand{\replyM}[1]{}
\fi


\newcommand{\calw}{\mathcal{W}}
\newcommand{\calx}{\mathcal{X}}
\newcommand{\caly}{\mathcal{Y}}
\newcommand{\calz}{\mathcal{Z}}

\newcommand{\calc}{\mathcal{C}}

\newcommand{\cupdot}{\mathbin{\mathaccent\cdot\sqcup}}
\newcommand{\hchoice}[1]{\;{{}_{\mathit{#1}}{\oplus}}\;} 
\newcommand{\hchoiceop}[1]{{}_{\mathit{#1}}{\oplus}} 
\newcommand{\vchoice}[1]{\;{{}_{\mathit{#1}}{\cupdot}}\;} 
\newcommand{\vchoiceop}[1]{{}_{\mathit{#1}}{\cupdot}} 


\newcommand{\eqbij}{\;\overset{\circ}{=}\;}

\newcommand{\reals}{\mathbb{R}}
\newcommand{\dist}{\mathbb{D}}

\newcommand{\qm}[1]{``#1''}

\newcommand{\hyperc}[2]{[#1 \, \rangle \, #2]} 

\newcommand{\nullchannel}{\overline{0}} 
\newcommand{\transparentchannel}{\overline{I}} 

\renewcommand{\equiv}{\approx}
\newcommand{\refines}{\sqsubseteq_{\circ}}

%


\newcommand{\review}[1]{#1}


\newcommand{\version}[2]{#2}

\begin{document}
\title{An Algebraic Approach for Reasoning About Information Flow}

\author{Arthur Am\'{e}rico\inst{1}
\and M{\'a}rio S. Alvim\inst{1}
\and Annabelle McIver\inst{2}}
\authorrunning{Am\'{e}rico et al.}
\institute{Universidade Federal de Minas Gerais, Belo Horizonte, Brazil
\and Macquarie University, Sydney, Australia}

\maketitle
\begin{abstract}
This paper concerns the analysis of information leaks in security systems.
We address the problem of specifying and analyzing large systems in the (standard)
channel model used in quantitative information flow (QIF). 
We propose several operators which match typical interactions between system 
components. 
We explore their algebraic properties with respect to the security-preserving
refinement relation defined by Alvim et al. and McIver et al.~\cite{Alvim:12:CSF,McIver:14:POST}.

We show how the algebra can be used to simplify large system specifications in order to
facilitate the computation of information leakage bounds. 
We demonstrate our results on the specification and analysis of the Crowds Protocol. 
Finally, we use the algebra to justify a new algorithm to compute leakage bounds for 
this protocol.
 


\end{abstract}

\section{Introduction}
\label{sec:introduction}
\commentM{Remember to remove authors' name in case the submission is supposed to be double-blind.}
\commentM{I added pagestyle{plain} to the header so pages are numbered.
You can comment it out for submission.}
Protecting sensitive information from unintended disclosure is a crucial goal for 
information security.
There are, however, many situations in which information leakage is unavoidable. 
An example \review{is a typical password checker}, which must always reveal some information
about the secret password---namely whether or not it matches the input provided by the 
user when trying to log in.
Another example \review{concerns} election tallies, which reveal information about individual 
votes by ruling out several configurations of votes (e.g., in the extreme case of an 
unanimous election, the tally reveals every vote).
The field of \emph{Quantitative Information Flow} (QIF) is concerned with quantifying 
the amount of sensitive information computational systems leak, and it has been extremely 
active in the past decade \cite{Clark:05:JLC,Kopf:07:CCS,Chatzikokolakis:08:JCS,Smith:09:FOSSACS,McIver:10:ICALP,Boreale:15:LMCS,axioms}.

In the QIF framework, systems are described as receiving \emph{secret inputs} from a set of 
values $\calx$, and producing  \emph{public}, or \emph{observable}, \emph{outputs} from 
a set $\caly$. 
Typical secret inputs are a user's identity, password, or current location, whereas public outputs are anything an adversary can observe about the behavior of the system, such as messages written
on the screen, execution time, or power consumption. 
A system is, then, modeled as an \emph{(information-theoretic) channel}, which is a function 
mapping each possible pair $x \in \calx$, $y \in \caly$ to the conditional probability 
$p(y \mid x)$ of the system producing output $y$ when receiving input $x$. 
Channels abstract technicalities of the system, while retaining the essentials 
that influence information leakage: the relation between secret input and public output values.

The QIF framework provides a robust theory for deriving security properties from a system's 
representation as a channel. 
However, obtaining an appropriate channel to model a system is often a non-trivial task.
Moreover, some channels turn out to be so large as to render most security analyses unfeasible 
in practice.

In this paper we provide an algebra for describing (larger, more complex) channels as 
compositions of other (smaller, simpler) channels. 
For that, we define a set of operators, each corresponding to a different way in which components 
can interact in a system---namely, \emph{parallel} composition, 
\emph{visible choice} composition, and \emph{hidden choice} composition.
We prove a series of algebraic properties of these operators, 
and use such properties to simplify system specifications so that bounds 
on the information leakage of a compound system can be inferred from the
information leakage of its components.
In this way, we allow for leakage analyses of systems which would be 
intractable with traditional QIF techniques.

This compositional approach seems particularly natural for modeling security protocols,
which often involve interactions among various entities.
Consider, for instance, the well-known \emph{Dining Cryptographers} 
anonymity protocol~\cite{dining}. 
A group of $n$ cryptographers has been invited for dinner by the NSA 
(American National Security Agency), who will either pay the bill, or secretly ask 
one of the cryptographers to be the payer.
The cryptographers want to determine whether one among them is the payer, 
but without revealing which one. 
For that, they execute the following protocol.
In a first phase all participants form a circle, and each tosses a coin and shares the result only with the cryptographer on his right.
In a second phase, each cryptographer computes the exclusive-or of the two coins 
tosses he observed (interpreting \emph{heads} as $0$ and \emph{tails} as $1$), 
and publicly announces the result.
The only exception is the paying cryptographer (if any), who announces the 
negation of his exclusive-or.
In a third phase, the cryptographers compute the exclusive-or of all announcements. One of them is the payer if, and only if, the 
result is $1$.
It has been shown that, if all coins are fair, no information is leaked about who the paying cryptographer is~\cite{dining}.

\begin{wrapfigure}{r}{0.5\linewidth}
\centering
\vspace{-7mm}
\includegraphics[width=0.45\textwidth]{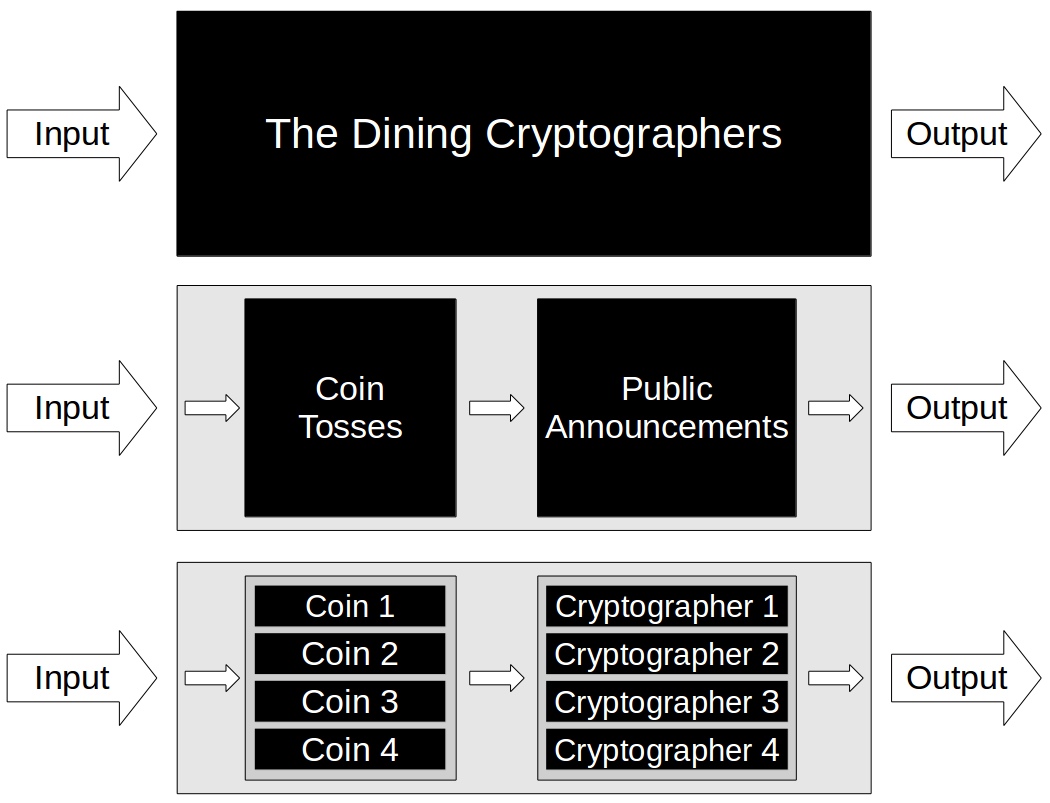}
\label{fig:scheme}
\vspace{-2mm}
\caption{Schematic representation of the Dining Cryptographers protocol as: 
(i) a monolithic channel (top); 
(ii) a composition of two channels (middle); and
(ii) a composition of eight channels (bottom).}
\vspace{-6mm}
\end{wrapfigure}
Despite the Dining Cryptographers relative simplicity, 
deriving its channel can be a challenging task.
Since each of the $n$ cryptographers can announce either $0$ or $1$, 
the size of the output set $\caly$, and, consequently, of the channel, 
increases exponentially with the number of cryptographers.
The problem is worsened by the fact that computing the probabilities 
constituting the channel's entries is not trivial.
The algebra we introduce in this paper allows for an intuitive and compositional way
of building a protocol's channel \review{from} each of its components.
To illustrate the concept, Figure~\ref{fig:scheme} depicts three alternative representations, using 
channels, for the Dining Cryptographers with $4$ cryptographers and $4$ coins.
In all models, the input is the identity of the payer (one
of the cryptographers or the NSA), and the output are the public announcements of all cryptographers.
The top model uses a single (enormous) channel to represent the protocol; 
the middle one models the protocol as the interaction between
two smaller components (the coins and the party of cryptographers);
the bottom one uses interactions between even smaller channels (one for each coin and each cryptographer).


The main contributions of this paper are the following.

\begin{itemize}
\item \review{We formalize several common operators for channel composition 
used in the literature, each matching a typical interaction between system components. 
We prove several relevant algebraic and information-leakage properties of these operators.} 

\item \review{We show that the substitution of 
components in a system may be subject to unexpected, and 
perhaps counter-intuitive, results.
In particular, we show that overall leakage may increase even 
when the new component is more secure than the one it is 
replacing (e.g., Theorems \ref{theorem:parord} and \ref{theorem:ressechchoicei}).}

\item We show how the proposed algebra can be used to simplify large 
system specifications in order to facilitate the computation of 
information leakage bounds, given in terms of the $g$-leakage framework~\cite{Alvim:12:CSF,Alvim:14:CSF,Alvim:16:CSF,McIver:14:POST}.

\item We demonstrate our results on the specification and analysis of the 
Crowds Protocol~\cite{crowds}. 
We use the proposed algebra to justify a new algorithm to compute leakage bounds for this protocol.
\end{itemize}

Detailed proofs of all of our technical results can be found \review{\version{in an accompanying technical report\cite{arxiv}}{in Appendix \ref{sec:proofs}}}.
\paragraph{Plan of the paper.} 
The remainder of this paper is organized as follows. 
In Section~\ref{sec:preliminaries} we review fundamental concepts from QIF.
In Section~\ref{sec:oper} we introduce our channel operators, 
and in Section~\ref{sec:algebraic} we provide their algebraic properties.
In Section~\ref{sec:leakage} we present our main results, concerning 
information leakage in channel composition. 
In Section~\ref{sec:casestudy} we develop a detailed case study of the 
Crowds protocol.
Finally, in Section~\ref{sec:related} we discuss related work, and in
Section~\ref{sec:conc} we conclude. 

\section{Preliminaries}
\label{sec:preliminaries} 

In this section we review some fundamentals from quantitative 
information flow.

\paragraph{Secrets, gain functions and vulnerability.}
A \emph{secret} is some piece of sensitive information that 
one wants to protect from disclosure.
Such sensitive information may concern, for instance, a \review{user's} 
password, identity, personal data, or current location.
We represent by $\calx$ the set of possible \emph{secret values} 
the secret may take. 

The \emph{adversary} is assumed to have, 
before observing the system's behaviour, 
some \emph{a priori} partial knowledge about the secret value.
This knowledge is modeled as a probability distribution 
$\pi \in \dist \calx$, where $\dist \calx$ denotes the set of all 
probability distributions on $\calx$. 
We call $\pi$ a \emph{prior distribution}, or simply a \emph{prior}.

To quantify how \emph{vulnerable} a secret is---i.e., how prone it is
to exploitation by the adversary\review{---} we employ a function that
maps probability distributions to the real numbers (or, more in general,
to any ordered set).
Many functions have been used in the literature, such as \emph{Shannon entropy} \cite{Shannon:48:Bell}, guessing-entropy~\cite{Massey:94:IT}, Bayes vulnerability~\cite{Braun:09:MFPS}, and \emph{R\'{e}nyi min-entropy}~\cite{Smith:09:FOSSACS}.
Recently, the \emph{$g$-leakage} \cite{Alvim:12:CSF} framework was proposed, 
and it proved to be very successful in capturing a variety of different scenarios, including those in which the
adversary benefits from guessing part of secret, guessing
a secret approximately, guessing the secret within 
a number of tries, or gets punished for guessing wrongly.
In particular, the framework has been shown to be able to capture all
functions mentioned above~\cite{Alvim:16:CSF}.
In this framework, a \review{finite} set $\calw$ of \emph{actions} is available to the adversary,
and a \emph{gain-function} $g{:}\calw{\times}\calx{\rightarrow}[0,1]$  
is used to describe the benefit $g(w,x)$ an adversary obtains when he performs
action $w{\in}\calw$, and the secret value is $x{\in}\calx$.
Given \review{an appropriate} gain-function $g$, the secret's \emph{(prior) $g$-vulnerability}  
is defined as the expected value of the adversary's gain if he chooses a 
best possible action,
\begin{equation*}
V_g [\pi]= \max\limits_{w \in \calw} \sum_{x \in \calx} \pi(x) g(w,x),
\end{equation*}
and the greater its value, the more vulnerable, or insecure, the secret is.

\paragraph{Channels and posterior vulnerabilities}
In the QIF framework, a system is usually modeled as an 
\emph{(information theoretic) channel} taking a \emph{secret input} $x{\in} \calx$, 
and producing a \emph{public}, or \emph{observable}, \emph{output} $y{\in} \caly$.
Each element of $\caly$ represents a behaviour from the system 
that can be discerned by the adversary.
Formally, a channel is a function $C{:}\calx{\times}\caly{\rightarrow}\reals$
such that $C(x,y)$ is the conditional probability \review{$p(y {\mid} x)$} of the system 
producing output $y{\in}\caly$ when input is $x{\in}\calx$.

A channel $C$ together with a prior $\pi$ induce a joint probability distribution
$p$ on the set $\calx{\times}\caly$, given by $p(x,y)=\pi(x)C(x,y)$.
From this joint distribution we can derive, for every $x{\in}\calx$ and $y{\in}\caly$,
\review{the marginal probabilities $p(x)=\sum_{y}p(x,y)$ and $p(y)=\sum_{x}p(x,y)$, and the conditional probabilities 
$p(x {\mid} y)=\nicefrac{p(x,y)}{p(y)}$ and $p(y {\mid} x )=\nicefrac{p(x,y)}{p(x)}$}. 
Note that $p(x)=\pi(x)$ and, if $p(x)\neq 0$, $p(y {\mid} x)=C(x,y)$. 

By observing the output produced by the system, the adversary can update
his knowledge about the secret value.
More specifically, if the system outputs $y{\in}\caly$, an adversary can 
update the prior $\pi$ to a revised \emph{posterior distribution}
$p_{X{\mid}y} \in \dist \calx$ on $\calx$ given $y$, defined 
for all $x{\in}\calx$ and $y{\in}\caly$ as $p_{X{\mid}y}(x)=p(x{\mid}y)$.

\begin{example}
\label{exa:hypers}
Let $\calx=\{x_{1},x_{2},x_{3}\}$ and $\caly=\{y_{1},y_{2},y_{3},y_{4}\}$
be input and output sets.
Let $\pi=(\nicefrac{1}{2},\nicefrac{1}{3}, \nicefrac{1}{6})$ be a prior,
and $C$ be the channel below. 
The combination of $\pi$ and $C$ yield a joint probability $p$, 
according to the tables below.
\begin{align*}
\begin{array}{|c|cccc|}
\hline
C & \,\,\,y_{1}\,\,\, & \,\,\,y_{2}\,\,\, & \,\,\,y_{3}\,\,\, & \,\,\,y_{4}\,\,\,
\\ \hline
x_{1} & \nicefrac{1}{6} & \nicefrac{2}{3} & \nicefrac{1}{6} & 0 \\
x_{2} & \nicefrac{1}{2} & \nicefrac{1}{4} & \nicefrac{1}{4} & 0 \\
x_{3} & \nicefrac{1}{2} & \nicefrac{1}{3} & 0 & \nicefrac{1}{6} \\
\hline
\end{array}
\quad
\stackrel{\pi}{\longrightarrow}
\quad
\begin{array}{|c|cccc|}
\hline
p & \,\,\,y_{1}\,\,\, & \,\,\,y_{2}\,\,\, & \,\,\,y_{3}\,\,\, & \,\,\,y_{4}\,\,\, 
\\ \hline
x_{1} & \nicefrac{1}{12} & \nicefrac{1}{3} & \nicefrac{1}{12} & 0 \\
x_{2} & \nicefrac{1}{6} & \nicefrac{1}{12} & \nicefrac{1}{12} & 0 \\
x_{3} & \nicefrac{1}{12} & \nicefrac{1}{18} & 0 & \nicefrac{1}{36} \\
\hline
\end{array}
\end{align*}
By summing the columns of the second table, we obtain the marginal probabilities 
$p(y_1){=}\nicefrac{1}{3}$, $p(y_2){=}\nicefrac{17}{36}$, $p(y_3){=}\nicefrac{1}{6}$ and $p(y_4){=}\nicefrac{1}{36}$.
These marginal probabilities yield the posterior distributions
$p_{X \mid y_{1}}{=}(\nicefrac{1}{4},\nicefrac{1}{2},\nicefrac{1}{4})$,
$p_{X \mid y_{2}}{=}(\nicefrac{12}{17},\nicefrac{3}{17},\nicefrac{2}{17})$,
$p_{X \mid y_{3}}{=}(\nicefrac{1}{2},\nicefrac{1}{2},0)$, and
$p_{X \mid y_{4}}{=}(0,0,1)$.
\qed
\end{example}

The \emph{posterior $g$-vulnerability} of a prior $\pi$ and a channel $C$
is defined as the expected value of the secret's $g$-vulnerability after the 
execution of the system:
\begin{equation*}
V_g\hyperc{\pi}{C}= \sum_{y \in \caly} \max\limits_{w \in \calw} \sum_{x \in \calx} C(x,y)\pi(x)g(x,w).
\end{equation*}

The \emph{information leakage} of a prior and a channel is a measure
of the increase in secret vulnerability caused by the observation
of the system's output.
Leakage is, thus, defined as a comparison between the secret's prior and 
posterior vulnerabilities.
Formally, for a gain-function $g$, and given prior $\pi$ and channel $C$, the \emph{multiplicative} and the \emph{additive} versions of $g$-leakage are defined, respectively, as
\begin{equation*}
\mathcal{L}_{g}\hyperc{\pi}{C}= \nicefrac{V_g\hyperc{\pi}{C}}{V_g[\pi]}, \qquad \text{and} \qquad
\mathcal{L}_{g}^+\hyperc{\pi}{C}=V_g\hyperc{\pi}{C}-V_g[\pi].
\end{equation*}

Since prior vulnerability does not depend on the channel, we have that
\begin{align*}
\mathcal{L}_{g}\hyperc{\pi}{C_1}\geq \mathcal{L}_{g}\hyperc{\pi}{C_2} 
\,\,\Leftrightarrow\,\,
\mathcal{L}_{g}^+\hyperc{\pi}{C_1}\geq \mathcal{L}_{g}^+\hyperc{\pi}{C_2} \,\,\Leftrightarrow\,\,
V_g\hyperc{\pi}{C_1}\geq V_g\hyperc{\pi}{C_2},
\end{align*}
and, hence, the posterior vulnerability of a channel is greater
than that of another if, and only if, both multiplicative and
additive leakage also are.

\paragraph{Channel Ordering and the Coriaceous Theorem.}
We now define a common composition of channels, called \emph{cascading}.
This operation can be interpreted as the result of a channel post-processing 
the output of another channel. 
Formally, given two channels $C{:}\calx{\times}\caly{\rightarrow}\reals$ and $D{:}\caly{\times}\calz{\rightarrow}\reals$,
their cascading is defined as
\begin{equation*}
(CD)(x,z)= \sum_{y \in \caly} C(x,y)D(y,z),
\end{equation*}
for all $x{\in}\calx$ and $z{\in}\calz$. 
If we represent channels as tables, as we did in Example~\ref{exa:hypers}, 
the cascading operation corresponds to a simple matrix multiplication.

An important question in QIF is to decide whether a channel $C_{2}$ is always 
\emph{at least as secure as} a channel $C_{1}$, meaning that $C_{2}$ never leaks 
more information than $C_{1}$, for whatever choice of gain function $g$ and of 
prior $\pi$.
Let us write $C_1 \refines C_2$ (read as $C_{2}$ \emph{refines} $C_{1}$) 
to denote that there \review{exists} a channel $D$ such that $C_1D=C_2$.
We write $C_1 \equiv C_2$, and say that $C_1$ is \emph{equivalent} to $C_2$, 
when both $C_1 \refines C_2$ and $C_2 \refines C_1$ hold.
The \emph{Coriaceous Theorem} \cite{Alvim:12:CSF,McIver:14:POST} states that, 
$C_1 \refines C_2$ if, and only if, 
$V_g \hyperc{\pi}{C_1}\geq V_g \hyperc{\pi}{C_2}$ for all $\pi$, $g$. 
This result reduces the comparison of channel security to a 
simple algebraic test.

The \emph{refinement relation} $\refines$ is a preorder on the set of all channels having \review{the} same input set. 
This preorder can be made into a partial order by using \emph{abstract channels}
\cite{McIver:14:POST}, an equivalence relation that equates all channels presenting
same leakage for all priors and gain functions.
This partial order coincides with how much information channels leak,
being the least secure channel (i.e., the \qm{most leaky} one) at its bottom,
and the most secure (i.e., the \qm{least leaky}) at its top.


\section{Operators on channel composition}
\label{sec:oper}
\review{We shall say that two channels are \emph{compatible} if they have the same input set.
Given a set $\calx$, we denote by $\calc_\calx$ the set of all 
channels that have $\calx$ as input set.
Two compatible channels with same output set are said to be of the \emph{same type}.}

\review{In this section we introduce several \emph{binary operators}---i.e., functions of type 
$(\calc_\calx{\times}\calc_\calx){\rightarrow}\calc_\calx$---matching typical interactions between system components, and prove relevant algebraic properties of these operators.
We refer to the result of an operator as a \emph{compound system},
and we refer to its arguments as \emph{components}.}

\subsection{The parallel composition operator $\parallel$}

The \emph{parallel composition operator} $\parallel$
models the composition of two \review{independent} channels in which the
same input is fed to both of them, and their outputs
are then observed. \review{By \emph{independent}, we mean that the output of one channel does not interfere with that of the other.
This assumption, while not universal, captures a 
great variety of real-world scenarios, 
and is, hence, of practical interest.}

For example, \emph{side-channel attacks} occur when
the adversary combines his observation of the 
system's output with some alternative way of
inferring information about the secret 
(e.g., by observing physical properties of the system
execution,  such as time elapsed~\cite{Kosher:96:CRYPTO,Brumley:03:USENIX} or change in magnetic fields~\cite{Nohl:08:CSS}). 
In such attacks, the channel used by the adversary
to infer information about the secret can be modeled as the composition of a channel representing the program's intended behaviour in parallel with a channel
modeling the relation between the secret and the 
physical properties of the hardware. 


\begin{definition}[Parallel composition operator $\parallel$]
Given compatible channels 
$C_1{:}\calx{\times}\caly_1{\rightarrow}\reals$ and 
$C_2{:}\calx{\times}\caly_2{\rightarrow}\reals$, their 
\emph{parallel composition} 
$C_1 \parallel C_2 : \calx{\times}(\caly_1{\times}\caly_2){\rightarrow}\reals$ is defined as, for all $x{\in}\calx$, $y_1{\in}\caly_1$, and $y_2{\in}\caly_2$,
$$(C_1 \parallel C_2)(x,(y_1,y_2))=C_1(x,y_1)  C_2(x,y_2).$$ 
\end{definition}

\review{Notice that this definition comes from the independence property, as we have $C_1(x,y_1)  C_2(x,y_2)=p(y_1 {\mid} x)p(y_2 {\mid} x)=p(y_1,y_2 {\mid} x)$.}
\subsection{The visible choice operator $\vchoiceop{p}$}

The \emph{visible choice operator} $\vchoiceop{p}$ models a
scenario in which the system has a choice among two different components to process the secret it was fed as input.
With probability $p$, the system feeds the secret
to the first component, and, with probability $1{-}p$, it
feeds the secret to the second component.
In the end, the system reveals the output produced, together with the identification of which component was used (whence, the name \qm{visible choice}).

As an example, consider an adversary trying to gain
information about a secret processed by a website.
The adversary knows that the website has two servers,
one of which will be assigned to answer the request
according to a known probability distribution.
Suppose, furthermore, that the adversary can
identify which server was used by measuring its 
response time to the request.
This adversary's view of the system can be modeled as
the visible choice between the two servers, since, although
the adversary does not know in advance which server will be used, he learns it when he gets the
output from the system.


Before formalizing this operator, we need to define the \emph{disjoint union} of sets.  Given any sets $\mathcal{A}$ and $\mathcal{B}$, their disjoint union is
$\mathcal{A} \sqcup \mathcal{B}= (\mathcal{A}\times \{1\})\cup (\mathcal{B}\times \{2\})$.

\begin{definition}[Visible choice operator $\vchoiceop{p}$]
Given compatible channels 
$C_1:\calx{\times}\caly_1{\rightarrow}\reals$ and 
$C_2{:}\calx{\times}\caly_2{\rightarrow}\reals$, their
\emph{visible choice} is the channel 
$C_1 \vchoice{p} C_2:\calx{\times}(\caly_1 \sqcup \caly_2){\rightarrow}\reals$ defined as, for all $x{\in}\calx$ and $(y,i) \in \caly_1 \sqcup \caly_2$,
$$(C_1 \vchoice{p} C_2)(x,(y,i))=
\begin{cases}
p  C_1(x,y), & \mbox{if } i = 1, \\
(1{-}p)  C_2 (x,y), & \mbox{if } i=2.
\end{cases}$$ 
\end{definition}

\subsection{The hidden choice operator $\hchoiceop{p}$}

Similarly to the visible choice case, the 
\emph{hidden choice operator} $\hchoiceop{p}$ models a
scenario in which the system has a choice of feeding
its secret input to one component (with probability $p$),
or to another component (with probability $1{-}p$).
In the end, the system reveals the output produced,
but, unlike the visible choice case, the component 
which was used is not revealed.
Hence, when the same observations are randomized
between the two channels, the adversary cannot
identify which channel produced the observation
(whence, the name \qm{hidden choice}).

As an example, consider statistical surveys that ask some sensitive 
yes/no question, such as whether the respondent has made use of any illegal substances.
To encourage individuals to participate on the survey,
it is necessary to control leakage of their sensitive information, while preserving the accuracy of statistical
information in the ensemble of their answers.
A common protocol to achieve this goal works as follows~\cite{Warner:65:JASA}.
Each respondent throws a coin, without letting the questioner know the corresponding result.
If the result is heads, the respondent answers the
question honestly, and if the result is tails, he
gives a random response (obtained, for example, 
according to the result of a second coin toss).
If the coins are fair, this protocol can be modeled as 
the hidden choice $T{\hchoice{\nicefrac{1}{2}}}C$ between 
a channel $T$ representing an honest response
(revealing the secret completely), 
and a channel $C$ representing a random response 
(revealing nothing about the secret).
The protocol is, hence, a channel that masks
the result of $T$.

\begin{definition}[Hidden choice operator $\hchoiceop{p}$]
Given compatible channels $C_1:\calx{\times}\caly_1{\rightarrow}\reals$ and $C_2{:}\calx{\times}\caly_2{\rightarrow}\reals$, their
\emph{hidden choice} is the channel 
$C_1 \hchoice{p} C_2:\calx{\times}(\caly_1 \cup \caly_2 ){\rightarrow}\reals$ defined as, for all $x{\in}\calx$ and $y{\in}\caly_1 \cup \caly_2$,
$$(C_1 \hchoice{p} C_2)(x,y)=
\begin{cases}
p   C_1(x,y)+(1{-}p)   C_2 (x,y), & \mbox{if } y \in \caly_1 \cap \caly_2, \\
p   C_1(x,y), & \mbox{if } y \in \caly_1 \setminus \caly_2,\\
(1{-}p)   C_2 (x,y), & \mbox{if } y \in \caly_2 \setminus \caly_1.
\end{cases}$$ 
\end{definition}

Note that when the output sets of $C_1$ and $C_2$ \review{are} disjoint the adversary can \review{always  identify the channel
used}, and we have
$C_1{\vchoice{p}}C_2{\equiv}C_1{\hchoice{p}}C_2$.

\subsection{A compositional description of the Dining Cryptographers}

We now revisit the Dining Cryptographers protocol example from Section~\ref{sec:introduction}, showing how it can be modeled using our composition operators.

We consider that there are $4$ cryptographers and $4$ coins,
and denote the protocol's channel by $\mathit{Dining}$.
The channel's input set is $\calx = \{c_1,c_2,c_3,c_4,n\}$, 
in which $c_{i}$ represents that cryptographer $i$ is the payer, 
and $n$ represents that the NSA is the payer.
The channel's output set is $\caly=\{0,1\}^4$, i.e.,
all $4$-tuples representing possible announcements by all
cryptographers, in order.

Following the scheme in Figure~\ref{fig:scheme} (middle), we begin by modeling the protocol as the interaction between two channels, $\mathit{Coins}$
and $\mathit{Announcements}$, representing, respectively, the coin tosses and the cryptographers' public announcements.
Since in the protocol first the coins are tossed, and only then 
the corresponding results are passed on to the party of cryptographers, $\mathit{Dining}$ can be described as the cascading of these two channels:
$$\mathit{Dining} = (\mathit{Coins}) (\mathit{Announcements}).$$

\begin{wraptable}{r}{0.3\linewidth}
\centering
\vspace{-7mm}
\begin{tabular}{|c|cc|}
\hline
\,$\mathit{Coin}_i$\,&\, Tails\, &\, Heads \, \\
\hline
$c_1$&$p_i$&$1{-}p_i$\\
$c_2$&$p_i$&$1{-}p_i$\\
$c_3$&$p_i$&$1{-}p_i$\\
$c_4$&$p_i$&$1{-}p_i$\\
$n$  &$p_i$&$1{-}p_i$\\
\hline
\end{tabular}
\vspace{-2mm}
\caption{Channel representing toss of coin $\mathit{Coin}_{i}$.}
\label{tab:coins}
\vspace{-6mm}
\end{wraptable}
To specify channel $\mathit{Coins}$, we use the parallel composition of channels
$\mathit{Coin}_1$, $\mathit{Coin}_2$, $\mathit{Coin}_3$ and $\mathit{Coin}_4$, each representing one coin toss.
Letting $p_i$ denote the probability of 
coin $i$ landing on tails, these channels 
are defined as on Table~\ref{tab:coins}.

Besides the result of the tosses, $\mathit{Coins}$ also needs 
to pass on to $\mathit{Announcements}$ the identity of the payer.
We then introduce a fifth channel,
$I{:}\calx{\times}\calx{\rightarrow}\reals$, that simply outputs 
the secret, i.e., $I(x_1,x_2)=1$ if $x_1=x_2$, and $0$ otherwise.
Hence, a complete definition of channel $\mathit{Coins}$ is
$$\mathit{Coins}= \mathit{Coin}_1 \parallel \mathit{Coin}_2 \parallel \mathit{Coin}_3 \parallel \mathit{Coin}_4 \parallel I.$$

As we will show in Section~\ref{sec:algebraic}, parallel
composition is associative, allowing us to omit parentheses 
in the equation above.
 
We now specify the channel $\mathit{Announcements}$, which
takes as input a $5$-tuple with five terms whose
first four elements are the results of the coin tosses, 
and the fifth is the identity of the payer.
For that end, we describe each cryptographer
as a channel with this $5$-tuple as input, and with the set of possible announcements $\{0,1\}$ as output set.
$\mathit{Crypto}_1$ below describes the first cryptographer.
$$
\mathit{Crypto}_1 (t_1,t_2,t_3,t_4, x)=
\begin{cases}
1 \mbox{, if } t_4=t_1 \mbox{ and } x=c_1 \mbox{, or }t_4\neq t_1 \mbox{ and } x \neq c_1 \\
0 \mbox{, otherwise}
\end{cases}
$$

Channels $\mathit{Crypto}_2$, $\mathit{Crypt}o_3$ and $\mathit{Crypto}_4$ describing the remaining cryptographers are
defined analogously.
Channel $\mathit{Announcements}$ is, hence, defined as
$$
\mathit{Announcements} = \mathit{Crypto}_1 \parallel \mathit{Crypto}_2 \parallel \mathit{Crypto}_3  \parallel \mathit{Crypto}_4.$$

Note that our operators allow for an intuitive and succinct representation of the channel $\mathit{Dining}$ modeling the Dining Cryptographers protocol, even when the number of cryptographers and coins is large.
Moreover, the channel is easy to compute: 
we need only to first calculate the parallel compositions
within channels $\mathit{Crypto}$ and $\mathit{Announcements}$, 
and then multiply these channels' matrices.

\section{Algebraic properties of channel operators}
\label{sec:algebraic}
In this section we prove a series of relevant algebraic properties of our channel operators.
These properties are the key for building channels
in a compositional way, and, more importantly, for
deriving information flow properties of a compound
system in \review{terms} of those of its components.

We begin by defining a notion of equivalence stricter than $\equiv$, which equates any two 
channels that are identical modulo a 
permutation of their columns. 
 
\begin{definition}[Channel equality]
Let $C_1 : \calx{\times}\caly_1{\rightarrow}\reals$ and 
$C_2 : \calx{\times}\caly_2{\rightarrow}\reals$ be compatible channels. 
We say that $C_1$ and $C_2$ are \emph{equal \review{up to a permutation} }, and write 
$C_1{\review{\eqbij}}C_2$, \review{ if there is a bijection $\psi{:}\caly_1{\rightarrow}\caly_2$ 
such that 
$C_1(x,y){=}C_2(x, \psi(y))$
for all $x{\in}\calx$, $y{\in}\caly_1$.}
\end{definition}

Note that, if  
$C_1{\review{\eqbij}}C_2$, then $C_1{\equiv}C_2$.~\footnote{
\review{A complete, formal definition of such bijections can be found} \review{\version{in an accompanying technical report\cite{arxiv}}{in Appendix \ref{sec:proofs}}}.}

In remaining of this section, let $C_1:\calx{\times}\caly_1 {\rightarrow}\reals$, $C_2:\calx{\times}\caly_2{\rightarrow} \reals$ and $C_3:\calx{\times}\caly_3{\rightarrow}\reals$ be compatible channels, and $p,q \in [0,1]$ be probability values. 

\subsection{Properties regarding channel operators}


We first establish our operators' associativity and commutativity properties.



\begin{restatable}[Commutative Properties]{proposition}{rescommutative}
\label{prop:commutative}
$$
C_1{\parallel}C_2\review{{\eqbij}} C_2{\parallel}C_1, \quad
C_1{\vchoice{p}}C_2 \review{{\eqbij}} C_2{\vchoice{(1-p)}}C_1, \,\, \text{and} \,\,\,\,
C_1{\hchoice{p}}C_2 = C_2{\hchoice{(1-p)}}C_1.
$$
\end{restatable}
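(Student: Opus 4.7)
The plan is to handle each of the three equivalences separately, exhibiting an explicit bijection between output sets in the first two cases and arguing outright equality in the third. In each case, once the bijection is pinned down, verification reduces to unfolding the definitions of the operators and using commutativity of real-number multiplication and addition.

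For the parallel case, the left-hand side has output set $\caly_1 \times \caly_2$ and the right-hand side has $\caly_2 \times \caly_1$, so I would define $\psi \colon \caly_1 \times \caly_2 \to \caly_2 \times \caly_1$ by $\psi(y_1, y_2) = (y_2, y_1)$, which is clearly a bijection. Then, for every $x \in \calx$ and every $(y_1, y_2) \in \caly_1 \times \caly_2$, unfolding the definition gives $(C_1 \parallel C_2)(x, (y_1, y_2)) = C_1(x, y_1)\, C_2(x, y_2) = C_2(x, y_2)\, C_1(x, y_1) = (C_2 \parallel C_1)(x, \psi(y_1, y_2))$, as required.

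For the visible-choice case, I would define $\psi \colon \caly_1 \sqcup \caly_2 \to \caly_2 \sqcup \caly_1$ by $\psi(y,1) = (y,2)$ and $\psi(y,2) = (y,1)$, which swaps the tags marking from which component an output originated. A direct case split on the tag then shows $(C_1 \vchoice{p} C_2)(x, (y,1)) = p\, C_1(x,y) = (1-(1-p))\, C_1(x,y) = (C_2 \vchoice{(1-p)} C_1)(x, (y,2))$, and symmetrically for tag $2$ with probability mass $1-p$.

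For the hidden-choice case, the output sets $\caly_1 \cup \caly_2$ and $\caly_2 \cup \caly_1$ are literally the same set, so no bijection is needed and one can prove actual equality of channels. I would split on whether $y \in \caly_1 \cap \caly_2$, $y \in \caly_1 \setminus \caly_2$, or $y \in \caly_2 \setminus \caly_1$, and in each case observe that swapping the roles of the two components simultaneously swaps $p$ with $1-p$, leaving the sum $p\, C_1(x,y) + (1-p)\, C_2(x,y)$ invariant on the intersection and mapping the two single-coefficient cases onto one another. No step here is a real obstacle: the only thing to be careful about is not to conflate $\sqcup$ with $\cup$ in the visible- and hidden-choice arguments, since this is precisely why one needs $\eqbij$ in the first two identities but obtains honest equality in the third.
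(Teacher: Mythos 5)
Your proposal is correct and follows essentially the same route as the paper's own proof: the identity swap bijection for $\parallel$, the tag-swapping bijection for $\vchoiceop{p}$, and a three-way case split on $\caly_1 \cap \caly_2$, $\caly_1 \setminus \caly_2$, $\caly_2 \setminus \caly_1$ for $\hchoiceop{p}$, with honest equality in the last case since $\caly_1 \cup \caly_2 = \caly_2 \cup \caly_1$. Nothing to add.
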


\begin{restatable}[Associative Properties]{proposition}{resassociative}
\label{prop:associative}
$$
\begin{array}{cc}
(C_1 \parallel C_2) \parallel C_3 \review{{\eqbij}} C_1 \parallel (C_2 \parallel C_3), \,\,\,\,\, & \,\,\,\,\,
(C_1 \vchoice{p} C_2) \vchoice{q} C_3 \review{{\eqbij}} C_1 \vchoice{p'} (C_2 \vchoice{q'} C_3),
\end{array}
$$
\vspace{-5mm}
$$
\begin{array}{cc}
\text{and } & (C_1{\hchoice{p}}C_2){\hchoice{q}}C_3 = C_1{\hchoice{p'}}
(C_2{\hchoice{q'}}C_3),
\end{array}
\text{ s.t. $p'{=}pq$ and $q'{=}\nicefrac{(q-pq)}{(1-pq)}$.}
$$
\end{restatable}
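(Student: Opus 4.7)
The plan is to prove each of the three associativity statements separately, in order of increasing subtlety, by identifying an appropriate bijection on output sets and then verifying the channel entries match.

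For the parallel case, the natural bijection is
$\psi\colon (\caly_1{\times}\caly_2){\times}\caly_3 \to \caly_1{\times}(\caly_2{\times}\caly_3)$
given by $\psi(((y_1,y_2),y_3)) = (y_1,(y_2,y_3))$. Unfolding the definition of $\parallel$ twice on each side shows that both $((C_1{\parallel}C_2){\parallel}C_3)(x,((y_1,y_2),y_3))$ and $(C_1{\parallel}(C_2{\parallel}C_3))(x,(y_1,(y_2,y_3)))$ reduce to $C_1(x,y_1)\,C_2(x,y_2)\,C_3(x,y_3)$, which immediately gives the required equality up to permutation. This step is routine.

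For the visible-choice case, the bijection
$\psi\colon(\caly_1{\sqcup}\caly_2){\sqcup}\caly_3 \to \caly_1{\sqcup}(\caly_2{\sqcup}\caly_3)$
is the obvious \emph{reassociation} of tags, namely $((y,1),1)\mapsto(y,1)$, $((y,2),1)\mapsto((y,1),2)$, and $(y,2)\mapsto((y,2),2)$. Expanding each side, the left-hand channel assigns to the three components total weights $qp$, $q(1{-}p)$, and $1{-}q$ respectively, whereas the right-hand channel assigns $p'$, $(1{-}p')q'$, and $(1{-}p')(1{-}q')$. Matching the three coefficients forces $p'=pq$, then $q'=q(1{-}p)/(1{-}pq)=(q{-}pq)/(1{-}pq)$, and the third identity $(1{-}p')(1{-}q')=1{-}q$ is then an algebraic consequence, not a new constraint. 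Boundary care is needed when $pq=1$: in that case $q'$ is indeterminate, but $1{-}p'=0$ already makes the right-hand side independent of $q'$, so any convention works.

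For the hidden-choice case, the output sets on both sides are simply $\caly_1\cup\caly_2\cup\caly_3$, so strict equality (rather than $\eqbij$) is claimed and no bijection is needed. Extending each $C_i$ by zero outside its output set so that $(C\hchoice{r}D)(x,y)=rC(x,y)+(1{-}r)D(x,y)$ holds uniformly on the union, I would expand both sides pointwise:
\begin{align*}
((C_1{\hchoice{p}}C_2){\hchoice{q}}C_3)(x,y) &= qp\,C_1(x,y) + q(1{-}p)\,C_2(x,y) + (1{-}q)\,C_3(x,y),\\
(C_1{\hchoice{p'}}(C_2{\hchoice{q'}}C_3))(x,y) &= p'C_1(x,y) + (1{-}p')q'\,C_2(x,y) + (1{-}p')(1{-}q')\,C_3(x,y).
\end{align*}
The same coefficient-matching argument as in the visible case yields exactly $p'=pq$ and $q'=(q{-}pq)/(1{-}pq)$, confirming the identity.

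The main obstacle is really a bookkeeping one: choosing the parametrisation $p',q'$ correctly so that a single pair works simultaneously for the visible and hidden versions, and verifying that the third equation is automatic once the first two hold (this is what makes the reparametrisation well-defined rather than over-constrained). Once the coefficient-matching is set up, everything else is mechanical algebra together with the trivial structural bijections on the output sets.
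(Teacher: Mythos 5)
Your proof is correct and follows essentially the same route as the paper's: the same structural bijections on the output sets, the same unfolding of the operator definitions, and the same coefficient matching yielding $p'=pq$ and $q'=\nicefrac{(q-pq)}{(1-pq)}$. The only differences are cosmetic---you replace the paper's seven-region case analysis for the hidden-choice identity with a single uniform computation via zero-extension of the channels to $\caly_1\cup\caly_2\cup\caly_3$, and you additionally note the degenerate case $pq=1$, which the paper leaves implicit.
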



We now turn our attention to two kinds of channels
that will be recurrent building blocks for more complex channels.
A \emph{null channel} is any channel $\nullchannel: \calx{\times}\caly{\rightarrow}\reals$ such that, 
for every prior $\pi$ and gain-\review{function} $g$, $V_g\hyperc{\pi}{\nullchannel}=V_g[\pi]$.
That is, a null channel never leaks any information.
A channel $\nullchannel$ is null if, and only if, 
$\nullchannel(x,y)=\nullchannel(x',y)$ for all $y{\in}\caly$ and $x, x'{\in}\calx$.
On the other hand, a \emph{transparent channel} is any
channel $\transparentchannel: \calx{\times}\caly{\rightarrow}\reals$ that leaks at least as much information as any other compatible channel, for every prior 
and gain-function.
A channel $\transparentchannel$ is transparent if, and only if, for each $y{\in}\caly$,  there is at most one $x{\in}\calx$ such that $\transparentchannel(x,y){>}0$. 
The following properties hold for any null channel $\nullchannel$ and transparent channel $\transparentchannel$ compatible with $C_1$, $C_2$ and $C_3$.


\begin{restatable}[Null and Transparent Channel Properties]{proposition}{resnull} \label{prop:null}
$$
\begin{array}{rlll}
\text{null channel:} \quad &
(C_1 \parallel \nullchannel) \equiv C_1, \quad &
C_1 \refines (C_1 \vchoice{p} \nullchannel), \quad & 
C_1 \refines (C_1 \hchoice{p} \nullchannel). \\[2mm]
\text{transparent channel:} \quad &
(C_1 \parallel \transparentchannel) \equiv \transparentchannel, & (C_1 \vchoice{p} \transparentchannel) \refines C_1. &
\end{array}
$$
\end{restatable}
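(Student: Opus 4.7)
The plan is to prove each of the five claims by exhibiting an explicit post-processing matrix $D$ witnessing the required refinement (two such $D$'s in the equivalence cases), then checking that the cascading works and that $D$ is a legal channel (nonnegative with rows summing to $1$). Throughout I would exploit the structural characterizations given just before the statement: a null channel $\nullchannel$ has rows independent of $x$, so I write $\nullchannel(x,y){=}n(y)$ with $\sum_y n(y){=}1$; a transparent channel $\transparentchannel$ has at most one $x$ per output column with positive probability, so for each $y$ in its support there is a unique $x_y{\in}\calx$ with $\transparentchannel(x_y,y){>}0$, and $\sum_{y:x_y=x}\transparentchannel(x,y){=}1$ for each $x$.

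For the null-channel cases: for $(C_1 \parallel \nullchannel) \equiv C_1$, the direction $C_1\refines (C_1\parallel\nullchannel)$ is obtained by $D(y_1,(y_1',y_2)){=}[y_1{=}y_1']\,n(y_2)$; the reverse is the projection $D((y_1,y_2),y_1'){=}[y_1{=}y_1']$, where $\sum_{y_2}n(y_2){=}1$ collapses the $y_2$ coordinate. For $C_1\refines (C_1\vchoice{p}\nullchannel)$, take
$D(y,(y',i))=p[y{=}y']$ if $i{=}1$ and $(1{-}p)n(y')$ if $i{=}2$; row sums give $p{+}(1{-}p){=}1$ and each branch of the cascading reproduces the defining cases of $\vchoiceop{p}$. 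For $C_1\refines(C_1\hchoice{p}\nullchannel)$, define $D(y,y')$ by three cases matching those in the definition of $\hchoiceop{p}$: $p[y{=}y']{+}(1{-}p)n(y')$ on $\caly_1\cap\caly_2$, $p[y{=}y']$ on $\caly_1\setminus\caly_2$, and $(1{-}p)n(y')$ on $\caly_2\setminus\caly_1$; the row-sum check splits as $p\sum_{y'\in\caly_1}[y{=}y']+(1{-}p)\sum_{y'\in\caly_2}n(y'){=}p{+}(1{-}p){=}1$.

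For the transparent-channel cases: $\transparentchannel\refines(C_1\parallel\transparentchannel)$ is actually automatic from the fact that $\transparentchannel$ sits at the bottom of the refinement order (it is maximally leaky), but I would also exhibit a direct $D(y_2,(y_1',y_2'))=C_1(x_{y_2},y_1')[y_2{=}y_2']$ (with arbitrary rows where $y_2$ is outside the support) for self-containment. The reverse $(C_1\parallel\transparentchannel)\refines\transparentchannel$ follows by projecting onto the second coordinate: $D((y_1,y_2),y_2'){=}[y_2{=}y_2']$, using $\sum_{y_1}C_1(x,y_1){=}1$ to cancel the $C_1$ factor. Finally, for $(C_1\vchoice{p}\transparentchannel)\refines C_1$, set $D((y,1),y'){=}[y{=}y']$ and $D((y,2),y'){=}C_1(x_y,y')$ for $y$ in $\transparentchannel$'s support (arbitrary otherwise).

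The only step that is not a routine bookkeeping is the last one, and it is where the structural property of $\transparentchannel$ really does the work: the $i{=}2$ contribution to the cascading is
\[
(1{-}p)\sum_{y\in\caly_2}\transparentchannel(x,y)\,C_1(x_y,y'),
\]
and since $\transparentchannel(x,y){>}0$ forces $x_y{=}x$, this collapses to $(1{-}p)\,C_1(x,y')\sum_{y:x_y=x}\transparentchannel(x,y){=}(1{-}p)\,C_1(x,y')$. Adding the $i{=}1$ contribution $p\,C_1(x,y')$ yields exactly $C_1(x,y')$. This is the conceptual crux: on the $\transparentchannel$-branch, the observation identifies $x$ and then lets the post-processor \emph{simulate} $C_1$, which is precisely what makes combining $C_1$ with a more-leaky channel via visible choice never more secure than $C_1$ itself.
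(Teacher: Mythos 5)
Your proof is correct, but it takes a genuinely different route from the paper's. You work directly from the definition of $\refines$, exhibiting for each claim an explicit post-processing channel $D$ (and, for the two equivalences, witnesses in both directions); I checked your witnesses and the row-sum and cascading computations all go through, including the crux for $(C_1\vchoice{p}\transparentchannel)\refines C_1$, where the implication $\transparentchannel(x,y)>0\Rightarrow x_y=x$ lets the post-processor simulate $C_1$ on the $\transparentchannel$-branch. The paper instead argues entirely at the level of vulnerabilities: it computes or bounds $V_g\hyperc{\pi}{C_1 \ast \nullchannel}$ and $V_g\hyperc{\pi}{C_1 \ast \transparentchannel}$ using the leakage theorems of Section~\ref{sec:leakage} (the lower bound for $\parallel$, linearity for $\vchoiceop{p}$, and the upper bound for $\hchoiceop{p}$), together with the defining extremal properties of null and transparent channels, and then invokes the Coriaceous theorem to convert the resulting inequalities into refinements. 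The trade-off is clear: the paper's argument is shorter and fits its narrative, but depends on a forward reference to Section~\ref{sec:leakage} and on the (nontrivial) Coriaceous theorem; yours is constructive and self-contained, needing only the matrix characterizations of $\nullchannel$ and $\transparentchannel$ stated just before the proposition, and it produces the witnessing channels explicitly, which is more information than the paper's proof yields. The one place you lean on the Coriaceous machinery --- observing that $\transparentchannel\refines(C_1\parallel\transparentchannel)$ is automatic because $\transparentchannel$ is maximally leaky --- you immediately back up with a direct witness, so the proof stands without it.
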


Note that, in general, $(C_1 \hchoice{p} \transparentchannel) \not\refines C_1$.
To see why, consider the two transparent channels $\transparentchannel_1$ and  $\transparentchannel_2$, with 
both input and output sets equal $\{1,2\}$, given by
$\transparentchannel_1 (x,x')=1$ if $x{=}x'$, and $0$ otherwise,
and $\transparentchannel_2 (x,x')=0$ if $x{=}x'$, and $1$ otherwise
Then, $\transparentchannel_1 \hchoice{p} \transparentchannel_2$ is a null channel,
and the property does not hold for $C_1=\transparentchannel_1$, $\transparentchannel=\transparentchannel_2$.


We now consider idemptotency.

\begin{restatable}[Idempotency]{proposition}{residempotency}\label{prop:idem}
$$
C_1 \parallel C_1 \refines C_1, \qquad
C_1 \vchoice{p} C_1 \equiv C_1, \qquad \text{and} \qquad
C_1 \hchoice{p} C_1 = C_1.
$$
\end{restatable}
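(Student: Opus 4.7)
The plan is to handle the three equations separately, since each calls for a different flavor of argument: the hidden choice is a literal equality, the parallel case is a one-sided refinement, and the visible choice requires refinement in both directions. In each case the job reduces to writing down an explicit post-processing channel and verifying a short matrix identity.

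\textbf{Hidden choice.} Since both components of $C_1 \hchoice{p} C_1$ share the output set $\caly_1$, we have $\caly_1 \cap \caly_1 = \caly_1$ and the last two cases of the definition are vacuous. Unfolding the first case gives $(C_1 \hchoice{p} C_1)(x,y) = p \cdot C_1(x,y) + (1-p)\cdot C_1(x,y) = C_1(x,y)$ for all $x,y$, which is literal equality of channel matrices.

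\textbf{Parallel.} To prove $C_1 \parallel C_1 \refines C_1$, I would exhibit a post-processing channel $D:(\caly_1 \times \caly_1) \times \caly_1 \to \reals$ realizing $(C_1 \parallel C_1) D = C_1$. The natural choice is the deterministic ``project onto the first coordinate'' channel, $D((y_1,y_2),y) = 1$ if $y=y_1$ and $0$ otherwise. Direct computation gives $((C_1 \parallel C_1) D)(x,y) = \sum_{y_2} C_1(x,y)\,C_1(x,y_2) = C_1(x,y)$, where the last step uses that $C_1(x,\cdot)$ is a probability distribution and therefore sums to $1$.

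\textbf{Visible choice.} For $C_1 \vchoice{p} C_1 \equiv C_1$, I would establish both refinements separately. For $C_1 \refines C_1 \vchoice{p} C_1$, use the randomizing ``tagger'' channel $D_1 : \caly_1 \times (\caly_1 \sqcup \caly_1) \to \reals$ defined by $D_1(y,(y',1)) = p$ if $y=y'$, $D_1(y,(y',2)) = 1-p$ if $y=y'$, and $0$ otherwise; a one-line calculation shows $C_1 D_1 = C_1 \vchoice{p} C_1$. Conversely, for $(C_1 \vchoice{p} C_1) \refines C_1$, use the deterministic ``strip the tag'' channel $D_2((y,i),y') = 1$ if $y=y'$ and $0$ otherwise; summing over the two possible tags yields $p \cdot C_1(x,y') + (1-p) \cdot C_1(x,y') = C_1(x,y')$.

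The only real obstacle is bookkeeping: one must be careful that the ``tagger'' $D_1$ remains a legal stochastic channel in the boundary cases $p \in \{0,1\}$ (where one tag is simply assigned probability zero), and that the disjoint-union structure of $\caly_1 \sqcup \caly_1$ is respected throughout. No deeper ideas are needed beyond identifying the right post-processing matrices.
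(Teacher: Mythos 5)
Your proof is correct, but it takes a genuinely different route from the paper's for the first two identities. For $C_1 \parallel C_1 \refines C_1$ and $C_1 \vchoice{p} C_1 \equiv C_1$ the paper does not construct post-processing witnesses at all: it invokes the leakage bounds of Section~\ref{sec:leakage} (the lower bound $V_g\hyperc{\pi}{C_1 \parallel C_2} \geq \max(V_g\hyperc{\pi}{C_1},V_g\hyperc{\pi}{C_2})$ for the parallel case, and the linearity $V_g\hyperc{\pi}{C_1 \vchoice{p} C_2} = p\,V_g\hyperc{\pi}{C_1} + (1-p)\,V_g\hyperc{\pi}{C_2}$ for the visible-choice case) and then concludes via the Coriaceous characterization of $\refines$ in terms of posterior vulnerabilities. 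You instead work directly from the cascading definition of $\refines$, exhibiting the projection channel $D$ with $(C_1 \parallel C_1)D = C_1$ and the tagger/strip-tag pair $D_1, D_2$ for the two directions of the visible-choice equivalence; your computations check out, including the row-stochasticity of the tagger at $p \in \{0,1\}$. What your approach buys is self-containedness and the avoidance of a forward reference: the paper's proposition appears in Section~\ref{sec:algebraic} but its proof leans on theorems stated only in Section~\ref{sec:leakage}, whereas your argument needs nothing beyond the operator definitions and the definition of $\refines$; it also yields explicit refinement witnesses, which is strictly more information than the vulnerability inequality. What the paper's approach buys is brevity once those theorems are in hand, and it makes transparent \emph{why} equivalence (rather than mere one-sided refinement) holds for visible choice, namely that the posterior vulnerability is exactly preserved. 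The hidden-choice case is handled identically in both proofs by direct unfolding of the definition.
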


Note that  $C_1{\parallel}C_1{\equiv}C_1$ holds
only when $C_1$ is deterministic 
or equivalent to a deterministic channel.

Finally, we consider distributive properties.
In particular, we explore interesting properties when 
\review{an} operator is ``distributed'' over itself.

\begin{restatable}[Distribution over \review{the} same operator]{proposition}{resdistsame}
\begin{align*}
(C_1 \parallel C_2) \parallel (C_1 \parallel C_3) &\refines C_1 \parallel (C_2 \parallel C_3) ,\\
C_1 \vchoice{p} (C_2 \vchoice{q} C_3) &\equiv (C_1 \vchoice{p} C_2) \vchoice{q} (C_1 \vchoice{p} C_3),\\
C_1 \hchoice{p} (C_2 \hchoice{q} C_3) &= (C_1 \hchoice{p} C_2) \hchoice{q} (C_1 \hchoice{p} C_3).
\end{align*}
\end{restatable}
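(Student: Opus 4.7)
My plan is to handle the three equations independently, in increasing order of how loose the relation between the two sides is: hidden choice (true matrix equality), visible choice (equivalent only, requiring both refinement directions), and parallel (only one refinement direction).

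For the hidden-choice identity, I would unfold both sides into a function of $y \in \caly_1 \cup \caly_2 \cup \caly_3$ whose value is a linear combination $\alpha C_1(x,y) + \beta C_2(x,y) + \gamma C_3(x,y)$ with coefficients depending only on the membership of $y$ in each of $\caly_1$, $\caly_2$, $\caly_3$. Since the two combined output sets $\caly_1 \cup (\caly_2 \cup \caly_3)$ and $(\caly_1 \cup \caly_2) \cup (\caly_1 \cup \caly_3)$ both equal $\caly_1 \cup \caly_2 \cup \caly_3$, the verification reduces to checking the coefficients in the seven non-empty regions of the Venn diagram of the three sets. In each region, both sides simplify to the same expression, namely $p\,C_1(x,y)$ if $y \in \caly_1$, plus $(1{-}p)q\,C_2(x,y)$ if $y \in \caly_2$, plus $(1{-}p)(1{-}q)\,C_3(x,y)$ if $y \in \caly_3$. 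The main obstacle here is purely bookkeeping.

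For the visible-choice equivalence, a direct computation shows the two sides cannot be equal, nor even equal up to permutation: on the RHS each column coming from $C_1$ appears twice (once tagged with outer index $1$ of weight $qp$, once with outer index $2$ of weight $(1{-}q)p$), while on the LHS it appears only once with weight $p$. I would therefore exhibit two post-processors and invoke the Coriaceous Theorem. For LHS $\refines$ RHS, the post-processor splits every $C_1$ column of the LHS into two copies with probabilities $q$ and $1{-}q$, and sends each $C_2$ and $C_3$ column to its unique counterpart on the RHS. For the reverse direction, the post-processor deterministically merges the two $C_1$ copies on the RHS back into one. Both cascades reproduce the opposite side by a short calculation.

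For the parallel case, I would exhibit a single deterministic post-processor $D$ that sends an LHS output $((y_1, y_2), (y_1', y_3))$ to $(y_1, (y_2, y_3))$, i.e., discards the redundant second observation $y_1'$ of $C_1$. Verifying $[(C_1 \parallel C_2) \parallel (C_1 \parallel C_3)]\, D = C_1 \parallel (C_2 \parallel C_3)$ reduces to summing out $y_1'$ via $\sum_{y_1'} C_1(x, y_1') = 1$. Note that the statement asserts only refinement, not equivalence: this is consistent with the fact that the LHS genuinely offers two independent observations of $C_1$ on the same input $x$, which can strictly increase leakage, so no reverse post-processor exists in general.
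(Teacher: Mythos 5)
Your proposal is correct, but for two of the three claims it takes a genuinely different route from the paper. For hidden choice you do exactly what the paper does: a pointwise check of the coefficients over the seven regions of the Venn diagram of $\caly_1,\caly_2,\caly_3$ (your observation that for $y\in\caly_1$ the outer choice on the right-hand side always lands in the ``both'' branch, so the two copies of $p\,C_1(x,y)$ recombine with weights $q$ and $1-q$, is the one subtlety worth writing out). For visible choice the paper does not construct post-processors at all: it invokes the linearity of $V_g$ under $\vchoiceop{p}$ (Theorem~\ref{theorem:reseqvchoice}) to show both sides have identical posterior vulnerability for every $\pi$ and $g$, and concludes $\equiv$ via the Coriaceous theorem. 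Your explicit witnesses (split each $C_1$-column with weights $q$ and $1-q$ in one direction, merge the two copies deterministically in the other) are correct and have the advantage of being purely structural, not depending on a leakage result proved in a later section; note, though, that once you exhibit the witnesses you do not need Coriaceous at all, since $\refines$ is \emph{defined} by the existence of a post-processing channel. For the parallel case the paper again argues indirectly: it rewrites the left-hand side as $(C_1\parallel C_1)\parallel(C_2\parallel C_3)$ using commutativity and associativity, applies idempotency ($C_1\parallel C_1\refines C_1$), and lifts this through relative monotonicity for $\parallel$ (Theorem~\ref{theorem:parord}). Your single deterministic post-processor that discards the redundant observation of $C_1$ and sums it out via $\sum_{y_1'}C_1(x,y_1')=1$ is a more elementary, self-contained proof of the same refinement, and your closing remark that the converse fails because two independent observations of $C_1$ can strictly increase leakage matches the paper's own caveat that $C_1\parallel C_1\equiv C_1$ only when $C_1$ is (equivalent to) a deterministic channel.
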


\begin{restatable}[Distribution over different operators]{proposition}{resdistdiff}
\begin{align*}
C_1 \parallel (C_2 \vchoice{p} C_3)&\review{{\eqbij}} (C_1 \parallel C_2) \vchoice{p} (C_1 \parallel C_3), \\
C_1 \parallel (C_2 \hchoice{p} C_3) &= (C_1 \parallel C_2) \hchoice{p} (C_1 \parallel C_3),\\
C_1 \vchoice{p} (C_2 \hchoice{q} C_3) &= (C_1 \vchoice{p} C_2) \hchoice{q} (C_1 \vchoice{p} C_3).
\end{align*}
\end{restatable}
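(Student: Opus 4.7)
The plan is to unfold both sides of each identity from the definitions in Section~\ref{sec:oper} and verify equality of the channel values at every output. Since $\parallel$ produces outputs in a product set, $\vchoiceop{p}$ in a disjoint union, and $\hchoiceop{p}$ in an ordinary union, I first need to check, for each identity, that the output sets of the two sides are either literally equal or canonically in bijection. Then the proofs reduce to elementary arithmetic manipulations.

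For the first identity, the LHS has output set $\caly_1{\times}(\caly_2 \sqcup \caly_3)$ while the RHS has $(\caly_1{\times}\caly_2) \sqcup (\caly_1{\times}\caly_3)$, so I would exhibit the natural bijection $\psi(y_1,(y,i)) = ((y_1,y),i)$ and check the two cases $i{=}1$ and $i{=}2$. In both cases, the LHS evaluates to $C_1(x,y_1)\cdot p\, C_2(x,y)$ or $C_1(x,y_1)\cdot(1{-}p)\, C_3(x,y)$, which matches the RHS after applying $\psi$ and expanding $\parallel$. This yields $\eqbij$, and hence the claim.

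For the second identity, the output sets are literally equal: $\caly_1{\times}(\caly_2 \cup \caly_3) = (\caly_1{\times}\caly_2) \cup (\caly_1{\times}\caly_3)$, where membership of a pair $(y_1,y)$ in $\caly_1{\times}\caly_2$, etc., is determined by $y$. I would then split into three cases according to whether $y \in \caly_2 \cap \caly_3$, $y \in \caly_2 \setminus \caly_3$, or $y \in \caly_3 \setminus \caly_2$, and verify each using distributivity of multiplication over addition:
\[
C_1(x,y_1)\bigl[p\,C_2(x,y) + (1{-}p)\,C_3(x,y)\bigr] = p\,C_1(x,y_1)C_2(x,y) + (1{-}p)\,C_1(x,y_1)C_3(x,y),
\]
and analogously, more simply, for the two asymmetric cases.

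The third identity requires the most bookkeeping and is the main obstacle. I would first argue that $\caly_1 \sqcup (\caly_2 \cup \caly_3) = (\caly_1 \sqcup \caly_2) \cup (\caly_1 \sqcup \caly_3)$ as sets, by noting that tag-$1$ elements $(y,1)$ with $y \in \caly_1$ lie in both $\caly_1 \sqcup \caly_2$ and $\caly_1 \sqcup \caly_3$, while tag-$2$ elements $(y,2)$ lie in $\caly_1 \sqcup \caly_2$ iff $y \in \caly_2$ and analogously for $\caly_3$. Then I would case-split: (i) for $(y,1)$, the LHS is $p\,C_1(x,y)$, while the RHS is $q\cdot p\,C_1(x,y) + (1{-}q)\cdot p\,C_1(x,y) = p\,C_1(x,y)$, using that $(y,1)$ lies in the intersection and the visible-choice value agrees in both summands; (ii) for $(y,2)$ with $y \in \caly_2 \cap \caly_3$, the intersection case again applies and both sides give $(1{-}p)\bigl[q\,C_2(x,y)+(1{-}q)\,C_3(x,y)\bigr]$; (iii) for $(y,2)$ with $y$ in exactly one of $\caly_2,\caly_3$, only one summand in the $\hchoiceop{q}$ definition survives, and the values match directly. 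This exhausts all cases and yields the claimed equality, completing the proof.
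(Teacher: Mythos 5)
Your proposal is correct and follows essentially the same route as the paper's proof: exhibit the canonical bijection for the $\parallel$/$\vchoiceop{p}$ identity, observe literal equality of output sets for the other two, and verify each identity by case analysis on output membership ($\caly_2\cap\caly_3$, $\caly_2\setminus\caly_3$, $\caly_3\setminus\caly_2$, plus the tag-$1$ case for the third identity) using distributivity of multiplication over addition. The only cosmetic difference is that you merge the two asymmetric tag-$2$ cases of the third identity, which the paper treats separately.
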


Unfortunately, the distribution of $\vchoiceop{p}$ over $\parallel$, $\hchoiceop{p}$ over $\parallel$, or 
$\hchoiceop{p}$ over $\vchoiceop{p}$ is not as well behaved.
\review{A complete discussion is avaiable in \version{the technical report\cite{arxiv}}{Appendix \ref{sec:further}}}

\subsection{Properties regarding cascading}

We conclude this section by exploring how our operators behave 
w.r.t. cascading (defined in Section~\ref{sec:preliminaries}).
Cascading of channels is fundamental in QIF, as it captures the concept of 
\review{a} system's \emph{post-processing} of another system's outputs, and it is also the key to the partial order on channels discussed in Section \ref{sec:preliminaries}.

\review{The next propositions explore whether it is possible to express a composition of two post-processed channels by a post-processing of their composition.}

\begin{restatable}{proposition}{rescascparallel}
Let $D_1 : \caly_1{\times}\calz_1 \rightarrow \reals$, $D_2 : \caly_2{\times}\calz_2 \rightarrow \reals$ be channels. Then,
\begin{equation*}
(C_1 D_1) \parallel (C_2 D_2)= (C_1 \parallel C_2) D^{\parallel},
\end{equation*}
where $D^{\parallel}: (\caly_1{\times}\caly_2){\times}(\calz_1{\times}\calz_2) \rightarrow \reals$ is defined, for all $y_1{\in}\caly_1$, $y_2{\in}\caly_2$, $z_1{\in}\calz_1$, and $z_2{\in}\calz_2$, as 
$D^{\parallel}((y_1,y_2),(z_1,z_2)) = D_1(y_1,z_1)D_2(y_2,z_2)$.
\end{restatable}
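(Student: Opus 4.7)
The plan is to prove equality of the two channels pointwise: both sides are channels of type $\calx \times (\calz_1 \times \calz_2) \to \reals$, so it suffices to show they agree on every triple $(x, z_1, z_2)$.

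First I would unfold the left-hand side using the definition of $\parallel$ followed by the definition of cascading. For any $x \in \calx$, $z_1 \in \calz_1$, $z_2 \in \calz_2$,
\begin{align*}
\bigl((C_1 D_1) \parallel (C_2 D_2)\bigr)(x,(z_1,z_2))
&= (C_1 D_1)(x,z_1) \cdot (C_2 D_2)(x,z_2) \\
&= \Bigl(\sum_{y_1 \in \caly_1} C_1(x,y_1) D_1(y_1,z_1)\Bigr) \Bigl(\sum_{y_2 \in \caly_2} C_2(x,y_2) D_2(y_2,z_2)\Bigr).
\end{align*}
Distributing the product over the two sums yields
$$\sum_{y_1 \in \caly_1}\sum_{y_2 \in \caly_2} C_1(x,y_1) C_2(x,y_2) D_1(y_1,z_1) D_2(y_2,z_2).$$

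Next I would unfold the right-hand side, again by the definition of cascading, but this time over the product index set $\caly_1 \times \caly_2$, and then use the definitions of $\parallel$ and $D^{\parallel}$:
\begin{align*}
\bigl((C_1 \parallel C_2) D^{\parallel}\bigr)(x,(z_1,z_2))
&= \sum_{(y_1,y_2) \in \caly_1 \times \caly_2} (C_1 \parallel C_2)(x,(y_1,y_2)) \cdot D^{\parallel}((y_1,y_2),(z_1,z_2)) \\
&= \sum_{y_1,y_2} C_1(x,y_1) C_2(x,y_2) \cdot D_1(y_1,z_1) D_2(y_2,z_2).
\end{align*}
The two expressions are syntactically identical, so equality follows.

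There is no real obstacle here; the result is essentially a bookkeeping identity expressing that cascading and parallel composition commute when one can package the two post-processors $D_1, D_2$ into the single product post-processor $D^{\parallel}$. The only point worth flagging is the justification for turning a product of sums into a double sum, which is simply the distributive law applied to finite sums of real numbers, and checking that the index sets on both sides match (namely $\caly_1 \times \caly_2$ as the intermediate variable of the cascading on the RHS corresponds exactly to the pair $(y_1,y_2)$ on the LHS).
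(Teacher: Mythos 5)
Your proof is correct and follows essentially the same route as the paper's: unfold both sides pointwise via the definitions of $\parallel$ and cascading, and match the resulting double sums over $\caly_1 \times \caly_2$ using the distributive law. Nothing is missing.
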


\begin{restatable}{proposition}{rescascvchoice}\label{lemma:cascvchoice}
Let $D_1 : \caly_1{\times}\calz_1 \rightarrow \reals$, $D_2 : \caly_2{\times}\calz_2 \rightarrow \reals$ be channels. Then,
\review{
$$
(C_1 D_1) \vchoice{p} (C_2 D_2)= (C_1 \vchoice{p} C_2) D^{\vchoiceop{}},
$$}
where $D^{\vchoiceop{}}{:}(\caly_1{\sqcup}\caly_2){\times}(\calz_1{\sqcup}\calz_2){\rightarrow}\reals$ is defined as
\review{$D^{\vchoiceop{}}((y,i), (z,j))=D_1(y,z)$ if $i{=}j{=}1$}, 
or $D_2(y,z)$ if $i=j=2$, or
$0$ otherwise,
for all $y_1{\in}\caly_1$, $y_2{\in}\caly_2$, $z_1{\in}\calz_1$, $z_2{\in}\calz_2$. 
\end{restatable}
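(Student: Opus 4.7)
The plan is a direct pointwise verification: unfold both sides of the claimed equality on an arbitrary $(x,(z,j)) \in \calx \times (\calz_1 \sqcup \calz_2)$, split into the cases $j=1$ and $j=2$, and check that the two sides agree. The only conceptual ingredient is recognising that $D^{\vchoiceop{}}$ has a block-diagonal structure (it is zero off the diagonal of the disjoint-union decomposition), so that the cascade sum on the right-hand side collapses cleanly into a single sum over the relevant $\caly_i$.

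First I would write out the left-hand side. By the definition of $\vchoiceop{p}$, for $j=1$ we have $((C_1 D_1) \vchoice{p} (C_2 D_2))(x,(z,1)) = p \cdot (C_1 D_1)(x,z)$, and then by the definition of cascading this equals $p \sum_{y_1 \in \caly_1} C_1(x,y_1) D_1(y_1,z)$. The case $j=2$ is symmetric, giving $(1-p) \sum_{y_2 \in \caly_2} C_2(x,y_2) D_2(y_2,z)$.

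Next I would expand the right-hand side. By definition of cascading,
\begin{equation*}
((C_1 \vchoice{p} C_2) D^{\vchoiceop{}})(x,(z,j)) = \sum_{(y,i) \in \caly_1 \sqcup \caly_2} (C_1 \vchoice{p} C_2)(x,(y,i)) \cdot D^{\vchoiceop{}}((y,i),(z,j)).
\end{equation*}
Because $D^{\vchoiceop{}}((y,i),(z,j)) = 0$ whenever $i \neq j$, only the summands with $i = j$ survive. For $j=1$ the surviving terms are indexed by $y_1 \in \caly_1$ and contribute $p \, C_1(x,y_1) \cdot D_1(y_1,z)$, recovering exactly the left-hand side. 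The case $j=2$ is analogous, using that $(C_1 \vchoice{p} C_2)(x,(y,2)) = (1-p) C_2(x,y)$ and $D^{\vchoiceop{}}((y,2),(z,2)) = D_2(y,z)$.

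I do not expect any real obstacle: the proof is bookkeeping with the definitions of $\vchoiceop{p}$, disjoint union, and cascading. The one subtle point worth flagging explicitly is the equality symbol $=$ (not merely $\eqbij$): the claim is that the channels coincide as functions on the same input/output sets, because both constructions naturally produce channels with output set $\calz_1 \sqcup \calz_2$ with the same canonical indexing, so no relabelling is needed.
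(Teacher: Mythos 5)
Your proposal is correct and follows essentially the same route as the paper's proof: a direct pointwise verification that unfolds both sides via the definitions of $\vchoiceop{p}$ and cascading, splits on whether the output lies in $\calz_1$ or $\calz_2$, and uses the block-diagonal vanishing of $D^{\vchoiceop{}}$ to collapse the cascade sum. No gaps.
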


A similar rule, however, does not hold for hidden choice. 
For example, let $C_1$ and $C_2$ be channels with input and output sets $\{1,2\}$, such that
$C_1 (x,x'){=}1$ if $x{=}x'$, or $0$ otherwise, and
$C_2 (x,x'){=}0$ if $x{=}x'$, or $1$ otherwise.
Let $D_1$ and $D_2$ be transparent channels whose output 
sets are disjoint. 
Then, $(C_1D_1){\hchoice{\nicefrac{1}{2}}}(C_2D_2)$ is a 
transparent channel, but $C_1{\hchoice{\nicefrac{1}{2}}}C_2$ 
is a null channel. 
Thus, it is impossible to describe
$(C_1D_1){\hchoice{\nicefrac{1}{2}}}(C_2D_2)$ as 
$C_1{\hchoice{\nicefrac{1}{2}}}C_2$ post-processed by some channel. 
However, we can establish a less general, yet relevant, equivalence.

\begin{restatable}{proposition}{rescaschchoice}
Let $C_1:\calx{\times}\caly{\rightarrow}\reals$ and $C_2: \calx{\times}\caly{\rightarrow}\reals$ be channels of the same type. Let $D: \caly{\times}\calz{\rightarrow}\reals$ be a channel. Then,
$
(C_1D) \hchoice{p} (C_2D)=(C_1 \hchoice{p}C_2)D.
$

\end{restatable}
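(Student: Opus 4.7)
The plan is to prove the identity by direct entry-wise computation, exploiting the crucial hypothesis that $C_1$ and $C_2$ are \emph{of the same type} (i.e., share the output set $\caly$). This hypothesis causes the hidden-choice operator to collapse to a pure convex combination on $\caly$, after which cascading — being linear in its first argument — commutes with that combination. The counterexample given immediately before the proposition shows precisely why the same-type hypothesis is essential: if the output sets of $C_1$ and $C_2$ were disjoint, the hidden choice would behave like a visible choice and collapse of information would be impossible to recover by post-processing.

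First, I would unfold the left-hand side. Fix $x \in \calx$ and $z \in \calz$. Since $(C_1D)$ and $(C_2D)$ both have output set $\calz$, the case split in the definition of $\hchoiceop{p}$ reduces to a single case ($y \in \calz \cap \calz = \calz$), giving
\begin{equation*}
((C_1D) \hchoice{p} (C_2D))(x,z) \;=\; p\,(C_1D)(x,z) + (1-p)\,(C_2D)(x,z).
\end{equation*}
Expanding the two cascadings by definition and factoring the common $D(y,z)$ out of the sum yields
\begin{equation*}
\sum_{y \in \caly}\bigl[p\,C_1(x,y) + (1-p)\,C_2(x,y)\bigr]\,D(y,z).
\end{equation*}

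Second, I would unfold the right-hand side. Since $C_1$ and $C_2$ share the output set $\caly$, we likewise have for every $y \in \caly$
\begin{equation*}
(C_1 \hchoice{p} C_2)(x,y) \;=\; p\,C_1(x,y) + (1-p)\,C_2(x,y),
\end{equation*}
so cascading with $D$ gives exactly the same expression as above. The two sides agree at every entry $(x,z)$, which is strict equality of channels (not merely $\equiv$ or $\eqbij$), matching what the proposition claims.

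There is no real obstacle here: the argument is a two-line reshuffling of sums, and the only subtlety — which I would make explicit — is the role of the same-type hypothesis in reducing the three-case definition of $\hchoiceop{p}$ to its first case on both sides. I would conclude by remarking that this is exactly the algebraic reason the counterexample before the proposition fails without this hypothesis, since disjoint output sets prevent the needed collapse of $\hchoiceop{p}$ to a linear convex combination.
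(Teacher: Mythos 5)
Your proof is correct and follows essentially the same route as the paper's: both unfold the left-hand side entry-wise at $(x,z)$, use the same-type hypothesis to reduce the hidden choice to the convex combination $p\,C_1(x,y)+(1-p)\,C_2(x,y)$, and then exchange the sum over $\caly$ with that combination to recover $(C_1 \hchoice{p} C_2)D$. Your added remarks on why the same-type hypothesis is essential (tying it to the preceding counterexample) are a nice gloss but do not change the argument.
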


\section{Information leakage of channel operators}
\label{sec:leakage}
This section presents the main contribution of our paper:
a series of results showing how, using the proposed
algebra, we can facilitate the security analysis 
of compound systems.
Our results are given in terms \review{of} the 
$g$-leakage framework introduced in Section~\ref{sec:preliminaries}, and we focus on two central
problems.
\review{ For the remaining of the section, let $C_1: \calx{\times}\caly_1 \rightarrow \reals$ and $C_2: \calx{\times}\caly_2 \rightarrow \reals$ be compatible channels. }

\subsection{The problem of compositional vulnerability}
\review{ 
The first problem consists in estimating 
the information leakage of a compound system
in terms of the leakage of its components.
This is formalized as follows.}

\review{\textbf{The problem of compositional vulnerability:}
\emph{Given a composition operator $\ast$ on channels,
a prior $\pi{\in}\dist\calx$, 
and a gain function $g$, how can we estimate $V_g \hyperc{\pi}{C_1 \ast C_2}$ in terms of $V_g \hyperc{\pi}{C_1}$ and $V_g \hyperc{\pi}{C_2}$?}
}
\begin{restatable}[Upper and lower bounds for $V_g$ w.r.t. $\parallel$ ]{theorem}{resuplowparallel}  \label{theorem:parineq} \review{For all gain functions $g$ and $\pi \in \dist\calx$}, let $\calx'= \{x{\in}\calx \mid \exists w{\in}\calw$ s.t. $\pi(x)g(w,x)>0\}$. Then
 \begin{align*}
  V_g \hyperc{\pi}{C_1{\parallel}C_2} &{\geq} \max (V_g \hyperc{\pi}{C_1} ,V_g \hyperc{\pi}{C_2}), \qquad \text{and}\\
  V_g \hyperc{\pi}{C_1{\parallel}C_2} &{\leq} {\min}\left( V_g \hyperc{\pi}{C_1} \sum_{y_2} \max\limits_{x \in \calx'} C_2(x,y_2) ,  V_g \hyperc{\pi}{C_2} \sum_{y_1} \max\limits_{x \in \calx'} C_1(x,y_1) \right).
 \end{align*}
\end{restatable}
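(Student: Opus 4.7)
The plan is to prove the lower and upper bounds separately, exploiting different tools.

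For the lower bound, I would avoid direct manipulation of the $V_g$ definition and instead argue structurally. The key observation is that $C_1$ is a post-processing of $C_1 \parallel C_2$: taking $D:(\caly_1{\times}\caly_2){\times}\caly_1{\to}\reals$ with $D((y_1,y_2),y_1')=1$ iff $y_1'=y_1$, one checks that $(C_1 \parallel C_2) D = C_1$ using the fact that $\sum_{y_2} C_2(x,y_2)=1$. Hence $C_1 \refines (C_1 \parallel C_2)$, and by the Coriaceous Theorem $V_g\hyperc{\pi}{C_1 \parallel C_2} \geq V_g\hyperc{\pi}{C_1}$. Symmetrically $V_g\hyperc{\pi}{C_1 \parallel C_2} \geq V_g\hyperc{\pi}{C_2}$, giving the maximum.

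For the upper bound, I would work directly from the definition, exploiting the set $\calx'$. The preliminary observation is that any $x \notin \calx'$ satisfies $\pi(x)g(w,x)=0$ for every $w$, so these terms contribute nothing and we may freely restrict the inner sum to $x \in \calx'$:
\begin{equation*}
V_g \hyperc{\pi}{C_1 \parallel C_2} = \sum_{y_1,y_2} \max_{w} \sum_{x \in \calx'} C_1(x,y_1)C_2(x,y_2)\pi(x)g(w,x).
\end{equation*}
Then I would bound $C_2(x,y_2) \leq \max_{x' \in \calx'} C_2(x',y_2)$ (valid since $x\in\calx'$), pull this factor outside both the sum over $x$ and the $\max$ over $w$ (it does not depend on either), and regroup:
\begin{equation*}
V_g \hyperc{\pi}{C_1 \parallel C_2} \leq \sum_{y_2} \max_{x' \in \calx'} C_2(x',y_2) \cdot \sum_{y_1} \max_w \sum_{x \in \calx'} C_1(x,y_1)\pi(x)g(w,x),
\end{equation*}
and recognize the second factor as $V_g\hyperc{\pi}{C_1}$ (again using that we can re-enlarge the sum back to $\calx$). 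This gives one of the two terms in the $\min$; the other follows by swapping the roles of $C_1$ and $C_2$.

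The main subtlety, and the reason $\calx'$ appears in the statement, is precisely the restriction step: bounding $C_2(x,y_2)$ by its maximum over all of $\calx$ would give a looser bound, since $C_2$ may place mass on input rows that are irrelevant to the leakage (zero prior probability, or never contributing to any action's gain). I expect the rest to be routine algebraic manipulation, with the only care needed being that the bound $C_2(x,y_2) \leq \max_{x' \in \calx'} C_2(x',y_2)$ is applied only to indices $x \in \calx'$ — which is exactly what the restriction of the sum enables.
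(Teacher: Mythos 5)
Your proof is correct. The upper-bound half is essentially the paper's own argument: restrict the inner sum to $\calx'$ (legitimate because $\pi(x)g(w,x)=0$ off $\calx'$), bound $C_2(x,y_2)$ by $\max_{x'\in\calx'}C_2(x',y_2)$, factor that constant out of the $\max_w$ and the sum over $x$, and regroup; your remark about why the maximum must be taken over $\calx'$ rather than all of $\calx$ is exactly the point of introducing that set. The lower-bound half, however, takes a genuinely different route. The paper argues directly from the definition of $V_g$, interchanging the sum over $\caly_2$ with the maximum over $w$ (via $\sum_{y_2}\max_w(\cdot)\geq\max_w\sum_{y_2}(\cdot)$) and then collapsing $\sum_{y_2}C_2(x,y_2)=1$. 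You instead exhibit the marginalizing channel $D$ with $(C_1\parallel C_2)D=C_1$ and invoke the Coriaceous Theorem. Both are valid: your version is shorter and makes the structural reason for the inequality transparent (discarding part of the observation can only reduce leakage), at the cost of importing the Coriaceous Theorem, a much heavier tool than the elementary max--sum interchange, and one which in effect hides that same interchange inside its own proof; the paper's computation is self-contained. One small notational slip: with the paper's convention that $A\refines B$ means there exists $D$ with $AD=B$, what your construction establishes is $(C_1\parallel C_2)\refines C_1$, not $C_1\refines(C_1\parallel C_2)$; the conclusion you then draw from Coriaceous, namely $V_g\hyperc{\pi}{C_1\parallel C_2}\geq V_g\hyperc{\pi}{C_1}$, is nonetheless exactly the one that follows from the refinement you actually proved, so the argument stands.
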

\begin{restatable}[Linearity of $V_g$ w.r.t. $\vchoiceop{p}$]{theorem}{reseqvchoice}
\review{For all gain functions $g$, $\pi \in \dist\calx$ and $p \in [0,1]$,}
\label{theorem:reseqvchoice}
\begin{equation*}
V_g\hyperc{\pi}{C_1 \vchoice{p} C_2}=p  V_g\hyperc{\pi}{C_1}+(1-p)  V_g\hyperc{\pi}{C_2} .
\end{equation*}
\end{restatable}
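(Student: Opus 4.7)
The plan is to unfold the definition of $V_g\hyperc{\pi}{\cdot}$ applied to $C_1 \vchoice{p} C_2$ and exploit the fact that the output set decomposes as a disjoint union. Concretely, I would write
\begin{equation*}
V_g\hyperc{\pi}{C_1 \vchoice{p} C_2}=\sum_{(y,i)\in\caly_1\sqcup\caly_2} \max_{w\in\calw}\sum_{x\in\calx}(C_1\vchoice{p}C_2)(x,(y,i))\,\pi(x)\,g(w,x),
\end{equation*}
and then split the outer sum into the contribution from the $i{=}1$ branch (indexed by $\caly_1$) and the $i{=}2$ branch (indexed by $\caly_2$).

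On the $i{=}1$ branch, the definition of $\vchoiceop{p}$ gives $(C_1\vchoice{p}C_2)(x,(y,1))=p\,C_1(x,y)$; on the $i{=}2$ branch, it gives $(1-p)\,C_2(x,y)$. Since $p$ and $1-p$ are nonnegative constants independent of $x$ and $w$, they factor out of both the inner sum over $x$ and the $\max$ over $w$. After factoring, each of the two resulting expressions is literally the definition of $V_g\hyperc{\pi}{C_1}$ and $V_g\hyperc{\pi}{C_2}$ respectively, yielding $p\,V_g\hyperc{\pi}{C_1}+(1-p)\,V_g\hyperc{\pi}{C_2}$.

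There is really no serious obstacle: the content of the result lies entirely in the visibility of the choice, which is reflected syntactically in the disjoint union of output labels. That disjointness is what guarantees the $\max$-over-$w$ in the posterior vulnerability does not mix the two branches — an adversary who sees a label $(y,i)$ always knows which component produced it, so he may optimize his action independently on each side. The only care needed is to keep the indices straight in the disjoint-union notation and to note that pulling the nonnegative scalars $p$ and $1-p$ through $\max_w$ is valid precisely because they are nonnegative. No appeal to any previously established lemma beyond the definitions of $V_g\hyperc{\cdot}{\cdot}$ and $\vchoiceop{p}$ is required.
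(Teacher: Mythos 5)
Your proposal is correct and follows essentially the same route as the paper's own proof: unfold the definition of $V_g\hyperc{\pi}{\cdot}$, split the sum over the disjoint union $\caly_1 \sqcup \caly_2$ into the two branches, apply the definition of $\vchoiceop{p}$, and factor the nonnegative scalars $p$ and $1-p$ out through the $\max$ to recover $V_g\hyperc{\pi}{C_1}$ and $V_g\hyperc{\pi}{C_2}$. Your added remark that the disjointness of output labels is what prevents the $\max$ from mixing the two branches is exactly the right intuition for why linearity holds here but only an inequality holds for hidden choice.
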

\begin{restatable}[Upper and lower bounds for $V_g$ w.r.t. $\hchoiceop{p}$]{theorem}{resuplowhchoice} \label{theorem:hineq}
 \review{For all gain functions $g$, $\pi \in \dist\calx$ and $p \in [0,1]$,}
 \begin{align*}
 V_g\hyperc{\pi}{C_1 \hchoice{p} C_2} &\geq \max(p  V_g\hyperc{\pi}{C_1}, (1-p)  V_g\hyperc{\pi}{C_2}), \qquad \text{and}\\
 V _g\hyperc{\pi}{C_1 \hchoice{p} C_2} &\leq p 
  V_g\hyperc{\pi}{C_1}+(1-p)  V_g\hyperc{\pi}{C_2}.
 \end{align*}
\end{restatable}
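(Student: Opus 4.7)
The plan is to unfold the definition of posterior $g$-vulnerability for the hidden-choice channel and exploit two standard facts about the max operator: (a) the max of a convex combination of non-negative functions is bounded above by the convex combination of the maxes (for the upper bound), and (b) evaluating at any fixed action gives a lower bound on the max (for the lower bound).

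The first step is to introduce, for each $y$, the per-output contributions
\begin{equation*}
f_i(y) \;=\; \max_{w \in \calw} \sum_{x \in \calx} C_i(x,y)\,\pi(x)\,g(w,x), \qquad i \in \{1,2\},
\end{equation*}
so that $V_g\hyperc{\pi}{C_i}=\sum_{y \in \caly_i} f_i(y)$. I would then expand $V_g\hyperc{\pi}{C_1 \hchoice{p} C_2}$ by splitting the sum over $\caly_1 \cup \caly_2$ into the three cases from the definition of $\hchoiceop{p}$: $y \in \caly_1 \cap \caly_2$, $y \in \caly_1 \setminus \caly_2$, and $y \in \caly_2 \setminus \caly_1$.

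For the upper bound, on the overlap $\caly_1 \cap \caly_2$ the inner expression is a convex combination $p\,C_1(x,y)+(1-p)\,C_2(x,y)$ multiplied by $\pi(x)g(w,x)\geq 0$; since $\max_w$ distributes subadditively over a sum of non-negative terms, the contribution at such $y$ is at most $p\,f_1(y)+(1-p)\,f_2(y)$. On the symmetric-difference pieces the corresponding term equals exactly $p\,f_1(y)$ or $(1-p)\,f_2(y)$. Re-assembling the three sums and regrouping by channel yields precisely $p\,V_g\hyperc{\pi}{C_1}+(1-p)\,V_g\hyperc{\pi}{C_2}$.

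For the lower bound, I would observe that each summand in $V_g\hyperc{\pi}{C_1 \hchoice{p} C_2}$ is non-negative (everything in sight is non-negative since $g \colon \calw \times \calx \to [0,1]$), so the full sum is at least the sum restricted to $y \in \caly_1$. For $y \in \caly_1$, evaluating at the action $w^*_1(y)$ that attains $f_1(y)$ gives
\begin{equation*}
\max_{w} \sum_{x} (C_1 \hchoice{p} C_2)(x,y)\,\pi(x)\,g(w,x) \;\geq\; p\,f_1(y) + (1-p)\sum_{x} C_2(x,y)\,\pi(x)\,g(w^*_1(y),x) \;\geq\; p\,f_1(y),
\end{equation*}
where the second inequality uses non-negativity (and holds trivially when $y \notin \caly_2$). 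Summing over $\caly_1$ yields $V_g\hyperc{\pi}{C_1 \hchoice{p} C_2} \geq p\,V_g\hyperc{\pi}{C_1}$; the symmetric argument over $\caly_2$ gives the $(1-p)$ bound, and combining them yields the max.

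The only mildly delicate point is bookkeeping on the output sets when $\caly_1$ and $\caly_2$ overlap --- in particular, making sure I do not double-count on $\caly_1 \cap \caly_2$ when deriving the lower bound. This is handled cleanly by working with the restriction of the sum to $\caly_1$ (resp.\ $\caly_2$) separately rather than trying to combine them into a single inequality. No use of the Coriaceous Theorem or refinement is needed; the proof is a direct computation from the definitions.
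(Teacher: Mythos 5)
Your proposal is correct and follows essentially the same route as the paper's proof: both split the sum over $\caly_1\cup\caly_2$ according to the three cases in the definition of $\hchoiceop{p}$, use subadditivity of $\max_w$ over the convex combination on the overlap for the upper bound, and use non-negativity to discard the other channel's contribution (restricting to $\caly_1$, resp.\ $\caly_2$) for the lower bound. The only cosmetic difference is that the paper obtains the lower bound by the pointwise domination $(C_1\hchoice{p}C_2)(x,y)\geq p\,C_1(x,y)$ inside the max rather than by plugging in the optimal action $w^*_1(y)$, which is an equivalent step.
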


The three theorems  above yield an interesting order between the operators

\begin{restatable}[Ordering between operators]{corollary}{resoperorder}
\label{cor:ordering}
Let \review{$\pi \in \dist \calx$}, $g$ be a gain function and $p \in [0,1]$. Then
$V_g\hyperc{\pi}{C_1 \parallel C_2} \geq V_g\hyperc{\pi}{C_1 \vchoice{p} C_2} \geq V_g\hyperc{\pi}{C_1 \hchoice{p} C_2}$.
\end{restatable}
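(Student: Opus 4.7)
The plan is to chain the three preceding theorems, observing that each operator's bound on $V_g$ is comparable to the next operator's expression in a transparent way.

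For the right inequality $V_g\hyperc{\pi}{C_1 \vchoice{p} C_2} \geq V_g\hyperc{\pi}{C_1 \hchoice{p} C_2}$, I would directly combine Theorem~\ref{theorem:reseqvchoice} (linearity of $V_g$ over $\vchoiceop{p}$), which gives
\begin{equation*}
V_g\hyperc{\pi}{C_1 \vchoice{p} C_2}=p\,V_g\hyperc{\pi}{C_1}+(1-p)\,V_g\hyperc{\pi}{C_2},
\end{equation*}
with Theorem~\ref{theorem:hineq} (upper bound for $V_g$ over $\hchoiceop{p}$), which states that the same quantity upper-bounds $V_g\hyperc{\pi}{C_1 \hchoice{p} C_2}$. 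The inequality is then immediate.

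For the left inequality $V_g\hyperc{\pi}{C_1 \parallel C_2} \geq V_g\hyperc{\pi}{C_1 \vchoice{p} C_2}$, I would again invoke Theorem~\ref{theorem:reseqvchoice} to rewrite the visible-choice vulnerability as the convex combination $p\,V_g\hyperc{\pi}{C_1}+(1-p)\,V_g\hyperc{\pi}{C_2}$. Since any convex combination is bounded above by the maximum of its terms, this quantity satisfies
\begin{equation*}
p\,V_g\hyperc{\pi}{C_1}+(1-p)\,V_g\hyperc{\pi}{C_2} \leq \max\bigl(V_g\hyperc{\pi}{C_1},\,V_g\hyperc{\pi}{C_2}\bigr),
\end{equation*}
and by the lower bound in Theorem~\ref{theorem:parineq} this maximum is at most $V_g\hyperc{\pi}{C_1 \parallel C_2}$. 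Chaining these two steps gives the desired inequality.

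Both parts therefore reduce to bookkeeping on top of previously established bounds, with no genuine obstacle: the only subtlety worth flagging is that the ordering holds pointwise in $\pi$, $g$, and $p$ because each invoked theorem is itself quantified pointwise, so no uniformity argument is needed.
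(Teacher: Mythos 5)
Your proof is correct and follows essentially the same route as the paper's: the right inequality is obtained by combining the linearity of Theorem~\ref{theorem:reseqvchoice} with the upper bound of Theorem~\ref{theorem:hineq}, and the left inequality by bounding the convex combination by the maximum and then invoking the lower bound of Theorem~\ref{theorem:parineq}. No gaps.
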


\subsection{The problem of relative monotonicity}
\review{The second problem concerns establishing whether
a component channel of a larger system can be safely
substituted with another component, i.e.,
whether substituting a component with another can cause an increase in the information leakage of the system as a whole.
This is formalized as follows.}

\review{\textbf{The problem of relative monotonicity:}
\emph{Given a composition operator $\ast$ on channels,
a prior $\pi \in \dist \calx$, 
and a gain function $g$, is it the case that 
$V_g\hyperc{\pi}{C_1} \leq V_g\hyperc{\pi}{C_2} \Leftrightarrow \forall C \in \mathcal{C}_\calx. \ V_g\hyperc{\pi}{C_1 \ast C} \leq V_g\hyperc{\pi}{C_2 \ast C}\ ?$}
}

We start by showing that relative monotonicity holds 
for visible choice.
Note, however, that because
$V_g\hyperc{\pi}{C_1 \vchoice{p} C} \leq V_g\hyperc{\pi}{C_2 \vchoice{p} C}$ is vacuously true
if $p = 0$, we consider only $p \in (0,1]$.

\begin{restatable}[Relative monotonicity for $\vchoiceop{p}$]{theorem}{ressecvchoice}\review{For all gain functions $g$, $\pi \in \dist\calx$ and $p \in (0,1]$,}
\begin{equation*}
V_g\hyperc{\pi}{C_1}\leq V_g\hyperc{\pi}{C_2} \Leftrightarrow \forall C.  \: V_g\hyperc{\pi}{C_1 \vchoice{p} C} \leq V_g\hyperc{\pi}{C_2 \vchoice{p} C}.
\end{equation*}
\end{restatable}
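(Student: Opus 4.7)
The plan is to reduce this equivalence to an elementary algebraic manipulation by invoking the linearity of posterior $g$-vulnerability with respect to visible choice, which was established as Theorem~\ref{theorem:reseqvchoice}. That theorem tells us that for any compatible channels $A$ and $B$, any prior $\pi$ and gain function $g$, and any $p \in [0,1]$, we have
\begin{equation*}
V_g\hyperc{\pi}{A \vchoice{p} B} = p\,V_g\hyperc{\pi}{A} + (1-p)\,V_g\hyperc{\pi}{B}.
\end{equation*}
This is the only fact I expect to need; the rest is arithmetic in the real numbers.

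For the forward direction, assume $V_g\hyperc{\pi}{C_1} \leq V_g\hyperc{\pi}{C_2}$ and let $C \in \mathcal{C}_\calx$ be arbitrary. Apply Theorem~\ref{theorem:reseqvchoice} to both $C_1 \vchoice{p} C$ and $C_2 \vchoice{p} C$ and subtract the common term $(1-p)\,V_g\hyperc{\pi}{C}$; the desired inequality follows immediately from multiplying the hypothesis by $p \geq 0$.

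For the backward direction, I would pick a convenient witness channel $C$ (any channel in $\mathcal{C}_\calx$ works, for instance a null channel or even $C_1$ itself), apply the linearity identity to expand both sides of $V_g\hyperc{\pi}{C_1 \vchoice{p} C} \leq V_g\hyperc{\pi}{C_2 \vchoice{p} C}$, cancel the $(1-p)\,V_g\hyperc{\pi}{C}$ term, and divide by $p$. This last step is precisely where the hypothesis $p \in (0,1]$ is used: the case $p = 0$ is excluded because the inequality would hold vacuously and carry no information about $C_1$ versus $C_2$, as already noted in the paragraph preceding the theorem statement.

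I do not anticipate any real obstacle: once Theorem~\ref{theorem:reseqvchoice} is in hand, the whole proof is linear algebra in one variable. If anything, the only subtlety worth highlighting is that the strictness $p > 0$ is essential for the reverse direction (to allow division by $p$), while $p = 1$ is harmless and in fact makes the claim trivial since $C_1 \vchoice{1} C \eqbij C_1$.
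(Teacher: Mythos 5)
Your proposal is correct and follows essentially the same route as the paper: the paper's proof is precisely a chain of equivalences obtained by multiplying the hypothesis by $p>0$, adding the common term $(1-p)\,V_g\hyperc{\pi}{C}$ to both sides, and invoking the linearity of Theorem~\ref{theorem:reseqvchoice}. One negligible quibble: in your closing remark, $C_1 \vchoice{1} C$ is not literally $\eqbij C_1$ (its output set is the larger disjoint union, with zero-probability columns), though it is $\equiv C_1$ and has the same posterior vulnerability, so the triviality of the $p=1$ case stands.
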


Interestingly, relative monotonicity does not hold for 
the parallel operator.
This means that the fact that a channel $C_{1}$ is always
more secure than a channel $C_{2}$ does not guarantee that
if we replace $C_{1}$ for $C_{2}$ in a parallel context 
we necessarily obtain a more secure system.\footnote{As a counter-example, consider channels
$C_{1}=
\left(\begin{smallmatrix}
1 & 0 \\
1 & 0 \\
0 & 1 
\end{smallmatrix}\right)
$
and 
$C_{2}=
\left( \begin{smallmatrix}
1 & 0 \\
0 & 1 \\
0 & 1 
\end{smallmatrix} \right)$. Let $\pi_u = \{\nicefrac{1}{3}, \nicefrac{1}{3}, \nicefrac{1}{3}\}$ and $g_{id}: \calx \times \calx \rightarrow [0,1]$ s.t. $g_{id}(x_1,x_2)=1$ if $x_1=x_2$ and $0$ otherwise. Then, $V_{g_{id}}\hyperc{\pi_u}{C_2} \leq V_{g_{id}}\hyperc{\pi_u}{C_1}$, but $V_{g_{id}}\hyperc{\pi_u}{C_2 \parallel C_1} > V_{g_{id}}\hyperc{\pi_u}{C_1 \parallel C_1}$.
}
However, when the adversary's knowledge (represented by the prior $\pi$)
or preferences (represented by the gain-function $g$) are known, we can obtain
a constrained result on leakage by fixing only 
$\pi$ or $g$.

\begin{restatable}[Relative monotonicity for $\parallel$]{theorem}{ressecparallel} \label{theorem:parord}
\review{For all gain functions $g$ and $\pi \in \dist\calx$ }
\begin{align*}
\forall \pi' . \; V_g\hyperc{\pi'}{C_1} \leq V_g\hyperc{\pi'}{C_2}  &\Leftrightarrow  \forall \pi' ,C. \; V_g\hyperc{\pi'}{C_1 \parallel C} \leq V_g\hyperc{\pi'}{C_2 \parallel C}, \quad \text{and} \\
\forall g'. \; V_{g'}\hyperc{\pi}{C_1} \leq V_{g'}\hyperc{\pi}{C_2}  &\Leftrightarrow  \forall g',C. \; V_{g'}\hyperc{\pi}{C_1 \parallel C} \leq V_{g'}\hyperc{\pi}{C_2 \parallel C}.
\end{align*}
\end{restatable}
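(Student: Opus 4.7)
My plan is to prove each biconditional in two steps. For the easy direction ($\Leftarrow$) in both biconditionals, I would instantiate $C$ with a null channel $\nullchannel$. By Proposition~\ref{prop:null}, $C_i \parallel \nullchannel \equiv C_i$ for $i=1,2$, and since equivalent channels agree on $V_g\hyperc{\pi'}{\cdot}$ for all priors and gain functions, the hypothesis collapses to the conclusion. So the content of the theorem lies in the ($\Rightarrow$) directions.

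The central tool I would establish first is a decomposition lemma: for any compatible channels $C_1, C$, any prior $\pi'$, and any gain function $g$,
\begin{equation*}
V_g\hyperc{\pi'}{C_1 \parallel C} \;=\; \sum_{y_c \in \caly_C} p(y_c)\, V_g\hyperc{\pi'_{y_c}}{C_1},
\end{equation*}
where $p(y_c) = \sum_x \pi'(x) C(x,y_c)$ and $\pi'_{y_c}(x) = \pi'(x) C(x,y_c)/p(y_c)$ is the posterior induced by observing $y_c$ through $C$. This follows by unfolding the definition of $V_g\hyperc{\cdot}{\cdot}$ on $C_1 \parallel C$, using $(C_1 \parallel C)(x,(y_1,y_c)) = C_1(x,y_1)C(x,y_c)$, and then swapping the order of summation over $y_1$ and $y_c$ to pull out $p(y_c)$. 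It is a routine reindexing and I would not grind through it.

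Given the decomposition, the first biconditional is immediate: the hypothesis gives $V_g\hyperc{\pi'_{y_c}}{C_1} \leq V_g\hyperc{\pi'_{y_c}}{C_2}$ at every posterior $\pi'_{y_c}$, and summing against $p(y_c)$ yields $V_g\hyperc{\pi'}{C_1 \parallel C} \leq V_g\hyperc{\pi'}{C_2 \parallel C}$. The second biconditional is the harder part, because here the prior $\pi$ is fixed and only the gain function varies, so we cannot apply the hypothesis to $\pi_{y_c}$ directly. My key trick is to \emph{encode the posterior update into a new gain function}: for each fixed $y_c$ and each $g'$, define $g''(w,x) := C(x,y_c)\,g'(w,x)$. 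Since $C(x,y_c) \in [0,1]$ and $g'(w,x) \in [0,1]$, $g''$ is a valid gain function into $[0,1]$. A short calculation shows
\begin{equation*}
V_{g''}\hyperc{\pi}{C_i} \;=\; p(y_c)\,V_{g'}\hyperc{\pi_{y_c}}{C_i}, \qquad i=1,2.
\end{equation*}
Applying the hypothesis with $g''$ gives $V_{g'}\hyperc{\pi_{y_c}}{C_1} \leq V_{g'}\hyperc{\pi_{y_c}}{C_2}$ whenever $p(y_c) > 0$ (and the inequality is trivial otherwise), after which the decomposition lemma again finishes the argument.

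The main obstacle is the second biconditional: the natural decomposition produces posteriors $\pi_{y_c}$ which are not the fixed prior $\pi$, so the proof requires genuinely changing perspective and reabsorbing the channel $C$ into the gain function. Verifying that $g''$ lies in $[0,1]$ and that the algebraic identity $V_{g''}\hyperc{\pi}{C_i} = p(y_c) V_{g'}\hyperc{\pi_{y_c}}{C_i}$ holds are the two checks I would write out carefully, since the rest is routine.
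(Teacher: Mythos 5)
Your proposal is correct and follows essentially the same route as the paper: the backward directions via $C_1\parallel\nullchannel\equiv C_1$, and the forward directions by reabsorbing $C$ into the prior (yielding the posteriors $\pi'_{y_c}$) for the first equivalence and into the gain function (via $g''(w,x)=C(x,y_c)g'(w,x)$) for the second. The only difference is presentational: the paper argues by contrapositive, extracting a single output $z'$ of $C$ at which a strict inequality must hold, whereas you prove the implication directly by summing the decomposition over all outputs of $C$; the underlying constructions are identical.
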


Perhaps surprisingly, hidden choice does not respect
relative monotonicity, even 
when we only consider channels that respect the refinement relation introduced in section \ref{sec:preliminaries}.

\begin{restatable}[Relative monotonicity for $\hchoiceop{p}$]{theorem}{ressechchoicei}\label{theorem:ressechchoicei}
For all $p{\in}(0,1)$, there are $C_1{:} \calx{\times}\caly{\rightarrow}\reals$ and $C_2{:}\calx{\times}\caly{\rightarrow} \reals$ such that
 \begin{align*}
\forall \pi, g. \: V_{g}\hyperc{\pi}{C_1}{\leq} V_{g}\hyperc{\pi}{C_2} &\mbox{ and } \exists \pi', g', C. \:V_{g'}\hyperc{\pi'}{C_1 \hchoice{p} C} {>} V_{g'}\hyperc{\pi'}{C_2 \hchoice{p} C} ,
\end{align*}
\end{restatable}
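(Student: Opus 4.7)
The plan is to exhibit, for each $p \in (0,1)$, explicit channels $C_1$, $C_2$ together with a witness triple $(\pi', g', C)$ for the failure of relative monotonicity. The underlying intuition is the ``destructive interference'' phenomenon hinted at right after Proposition \ref{prop:null}: a non-null channel can be cancelled out by hidden choice with an appropriate partner, whereas a truly null channel cannot.

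First, I would take $\calx = \caly = \{0,1\}$ and let $C_1$ be the null channel with uniform rows. Then $V_g\hyperc{\pi}{C_1}=V_g[\pi]$ for every $\pi, g$, and since observation never decreases vulnerability (i.e., $V_g\hyperc{\pi}{C}\geq V_g[\pi]$ for every channel $C$), the premise $V_g\hyperc{\pi}{C_1}\leq V_g\hyperc{\pi}{C_2}$ is automatically satisfied no matter how I pick $C_2$. Next, I would parametrise $C_2 = \left(\begin{smallmatrix} \beta & 1-\beta \\ 1-\beta & \beta \end{smallmatrix}\right)$ with $\beta \in (1/2,\min(1,1/(2p))]$, which is non-empty for every $p \in (0,1)$. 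Setting $\delta := p(2\beta-1)/(1-p) \in (0,1]$, I take as partner $C := \left(\begin{smallmatrix} 0 & 1 \\ \delta & 1-\delta \end{smallmatrix}\right)$. A direct column-wise check shows that $p C_2 + (1-p) C$ has identical rows, both equal to $(p\beta,\, 1-p\beta)$, so $C_2 \hchoice{p} C$ is null; in particular $V_{g'}\hyperc{\pi'}{C_2 \hchoice{p} C}=V_{g'}[\pi']$ for every $\pi', g'$.

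Third, computing $C_1 \hchoice{p} C$ directly yields rows $(p/2,\, 1-p/2)$ and $(p/2+(1-p)\delta,\, 1-p/2-(1-p)\delta)$, which differ because $\delta>0$. Taking $\pi'=(1/2,1/2)$ and $g'=g_{id}$, a short computation gives $V_{g_{id}}\hyperc{\pi'}{C_1\hchoice{p}C} = 1/2 + (1-p)\delta/2 > 1/2 = V_{g_{id}}\hyperc{\pi'}{C_2\hchoice{p}C}$, completing the counter-example.

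The main obstacle is ensuring feasibility of the partner $C$ uniformly in $p$. The naive attempt of setting $C_2$ to the identity channel $\transparentchannel_1$ (i.e.\ $\beta=1$) forces $\delta = p/(1-p)$, which exceeds $1$ when $p > 1/2$ and makes $C$ fail to be a channel; the mild perturbation $\beta \leq 1/(2p)$ keeps $\delta$ in $[0,1]$ and rescues the construction across all $p \in (0,1)$.
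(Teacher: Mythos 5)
Your construction is correct: the interval $(\nicefrac{1}{2},\min(1,\nicefrac{1}{2p})]$ is non-empty for every $p\in(0,1)$, the choice $\delta=\nicefrac{p(2\beta-1)}{(1-p)}\leq 1$ is exactly equivalent to $\beta\leq\nicefrac{1}{2p}$, the row computation showing $C_2\hchoice{p}C$ is null checks out, and the premise holds trivially because $C_1$ is null and $V_g\hyperc{\pi}{C_2}\geq V_g[\pi]$ for any channel (a standard monotonicity fact the paper itself uses without comment in the proof of Proposition~\ref{prop:null}). Your route differs from the paper's in its execution, though it exploits the same interference phenomenon. The paper splits on $p$: for $p\leq\nicefrac{1}{2}$ it takes $C_1$ null, $C_2$ the identity, and $C$ the anti-identity, and wins because $C_2\hchoice{p}C$ is merely \emph{less} leaky than $C_1\hchoice{p}C$ (neither is exactly null); for $p>\nicefrac{1}{2}$, where the anti-identity partner would over-shoot, it keeps $C_2$ and $C$ fixed and instead perturbs $C_1$ into a non-null channel $\bigl(\begin{smallmatrix}\nicefrac{1}{2p}-\nicefrac{1}{2} & \nicefrac{3}{2}-\nicefrac{1}{2p}\\ \nicefrac{3}{2}-\nicefrac{1}{2p} & \nicefrac{1}{2p}-\nicefrac{1}{2}\end{smallmatrix}\bigr)$, relying on $C_2$ being transparent to keep the premise. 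You instead keep $C_1$ null throughout and move the perturbation into the pair $(C_2,C)$, tuning $\delta$ so that $C_2\hchoice{p}C$ is \emph{exactly} null; this buys a single uniform construction with no case analysis, a trivially satisfied premise, and a cleaner exhibition of the cancellation mechanism, at the mild cost of carrying the parameters $\beta,\delta$ through the computation. Both are complete and elementary; yours is arguably the more transparent counter-example.
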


The converse, however, is true.

\begin{restatable}[Relative monotonicity for $\hchoiceop{p}$, cont.]{theorem}{ressechchoiceii}
\review{For all gain functions $g$, $\pi \in \dist\calx$ and $p \in (0,1]$,}
\begin{equation*}
     \forall C. \:V_g\hyperc{\pi}{C_1 \hchoice{p} C} \leq V_g\hyperc{\pi}{C_2 \hchoice{p} C} \Rightarrow  V_g\hyperc{\pi}{C_1}\leq V_g\hyperc{\pi}{C_2} .
\end{equation*}
\end{restatable}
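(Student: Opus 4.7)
The plan is to prove the implication directly by instantiating the universal hypothesis with a carefully chosen witness channel $C$. The idea is to pick $C$ so that both $C_1 \hchoice{p} C$ and $C_2 \hchoice{p} C$ collapse into visible choices, for which the linearity theorem for $V_g$ (Theorem \ref{theorem:reseqvchoice}) gives an exact additive decomposition. Once both sides of the premise are expressed as affine functions of $V_g\hyperc{\pi}{C_1}$ and $V_g\hyperc{\pi}{C_2}$ with the same offset, the desired inequality pops out by cancellation.

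Concretely, let $\star$ be a fresh symbol not appearing in the output sets $\caly_1$ of $C_1$ or $\caly_2$ of $C_2$, and take $C : \calx \times \{\star\} \rightarrow \reals$ to be the null channel $C(x,\star) = 1$ for every $x \in \calx$. Since $\{\star\}$ is disjoint from both $\caly_1$ and $\caly_2$, the observation stated immediately after the definition of $\hchoiceop{p}$ gives
$$C_1 \hchoice{p} C \equiv C_1 \vchoice{p} C \quad \text{and} \quad C_2 \hchoice{p} C \equiv C_2 \vchoice{p} C.$$
By the Coriaceous Theorem, $\equiv$ preserves $V_g$ for every prior and gain function, so the premise applied at this particular $C$ yields $V_g\hyperc{\pi}{C_1 \vchoice{p} C} \leq V_g\hyperc{\pi}{C_2 \vchoice{p} C}$. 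Applying Theorem \ref{theorem:reseqvchoice} rewrites this as
$$p \, V_g\hyperc{\pi}{C_1} + (1-p)\, V_g\hyperc{\pi}{C} \leq p\, V_g\hyperc{\pi}{C_2} + (1-p)\, V_g\hyperc{\pi}{C}.$$
Cancelling the common term $(1-p)\, V_g\hyperc{\pi}{C}$ and dividing by $p > 0$ (using the hypothesis $p \in (0,1]$) yields $V_g\hyperc{\pi}{C_1} \leq V_g\hyperc{\pi}{C_2}$, as required.

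There is no real technical obstacle here; the proof is essentially a one-line witness argument. The only point that requires care is the choice of $C$: picking $C$ with output labels disjoint from those of $C_1$ and $C_2$ is precisely what turns the otherwise opaque hidden choice into an exactly linear visible choice, allowing the $V_g\hyperc{\pi}{C}$ terms on the two sides to match and cancel. The condition $p > 0$ is essential for the final division step, consistent with the fact that at $p = 0$ the premise becomes vacuous (as already noted by the authors in the companion direction), and so no conclusion about $C_1, C_2$ could possibly follow.
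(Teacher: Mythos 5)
Your proof is correct, but it takes a genuinely different route from the paper's. The paper instantiates the universal premise at $C = C_1$ itself: by idempotency (Proposition~\ref{prop:idem}), $V_g\hyperc{\pi}{C_1} = V_g\hyperc{\pi}{C_1 \hchoice{p} C_1} \leq V_g\hyperc{\pi}{C_2 \hchoice{p} C_1}$, and then the upper bound of Theorem~\ref{theorem:hineq} gives $V_g\hyperc{\pi}{C_1} \leq p\,V_g\hyperc{\pi}{C_2} + (1-p)\,V_g\hyperc{\pi}{C_1}$, from which the conclusion follows by rearranging (again using $p>0$). You instead instantiate at an external witness --- a null channel with a fresh output symbol --- so that disjointness of output sets collapses both hidden choices into visible choices, and exact linearity (Theorem~\ref{theorem:reseqvchoice}) does the rest. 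Both arguments are sound and of comparable length. Your version buys exact equalities on both sides of the premise (so the cancellation is an identity rather than an inequality manipulation) and avoids any self-reference of the witness to $C_1$; the paper's version avoids the detour through $\eqbij$/$\equiv$ and the Coriaceous Theorem, needing only the already-established idempotency and the hidden-choice upper bound. One point worth making explicit in your write-up: the remark after the definition of $\hchoiceop{p}$ concerns the two \emph{operands} of the choice having disjoint output sets, which is exactly your situation ($\caly_1 \cap \{\star\} = \emptyset$ and $\caly_2 \cap \{\star\} = \emptyset$), so the appeal is legitimate; and the witness lies in $\calc_\calx$, so the instantiation is permitted by the quantifier.
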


\section{Case study: the Crowds protocol}
\label{sec:casestudy}
In this section we apply the theoretical techniques
developed in this paper to the well-known \emph{Crowds} anonymity protocol~\cite{crowds}.
Crowds was designed to protect the identity
of a group of \emph{users} who wish to anonymously 
send requests to a \emph{server}, and it is the basis of the
widely used protocols \emph{Onion Routing}~\cite{Goldschlag:96:IH} and \emph{Tor}~\cite{Dingledine:04:USENIX}.

The protocol works as follows.
When a user wants to send a request to the server, he first randomly
picks another user in the group and forwards the request to that user.
From that point on, each user, upon receiving a request from another user, 
sends it to the server with probability $p\review{\in (0,1]}$, or forwards it to another user 
with probability $1{-}p$.
This second phase repeats until the message reaches the server.

It is assumed that the adversary controls the server and some
\emph{corrupt} users among the regular, \emph{honest}, ones.
When a corrupt user receives a forwarded request, he shares the forwarder's
identity with the server, and we say that the forwarder was \emph{detected}. 
As no information can be gained after a corrupt user intercepts a 
request, we need only consider the protocol's execution until a detection
occurs, or the message reaches the server.

In Crowds' original description, all users have equal probability
of being forwarded a message, regardless of the forwarder.
The channel modeling such a case is easily computed, and well-known
in the literature.
Here we consider the more general case in which each user may employ a different probability 
distribution when choosing which user to forward a request to.
Thus, we can capture scenarios in which not all users can easily 
reach each other (a common problem in, for instance, \emph{ad-hoc}
networks).
We make the simplifying assumption that corrupt users are evenly distributed, i.e., that all honest users have the same probability $q{\in}\review{(0,1]}$ of 
choosing a corrupt user to forward a request to.


We model Crowds as a channel $Crowds{:}\calx{\times}\caly{\rightarrow} \reals$.
The channel's input, taken from set $\mathcal{\calx}{=}\{u_1, u_2,\ldots, u_{n_c}\}$,
represents the identity $u_{i}$ of the honest user (among a total of $n_c$ honest users) who initiated the request. 
The channel's output is either the identity of a detect user---i.e., a value from $\mathcal{D}{=} \{d_1,d_2,\ldots,d_{n_c}\}$,
where where $d_i$ indicates user $u_i$ was detected---or the identity of a user who forwarded the message to the server---i.e., a value from $\mathcal{S}{=} \{s_1,s_2,\ldots,s_{n_c}\}$,
where $s_i$ indicates user $u_i$ forwarded a message to the server.
Note that $\mathcal{D}$ and $\mathcal{S}$ are disjoint, and the channel's
output set is $\caly = \mathcal{D} \cup \mathcal{S}$.

To compute the channel's entries, we model the protocol as a time-stationary Markov chain $M=(\mathcal{U}, \boldsymbol{P})$, where the set of states is
the set of honest users $\mathcal{U}$, and its transition function 
is such that $\boldsymbol{P}(u_i,u_j)$ is the probability of $u_j$
being the recipient of a request forwarded by $u_i$,
given that $u_i$ will not be detected.

We then define four auxiliary channels.
Transparent channels
 $I_d{:}\mathcal{U}{\times}\mathcal{D} {\rightarrow}\reals$ and $I_s{:}\mathcal{U}{\times}\mathcal{S}{\rightarrow}\reals$
are defined as 
$I_d(u_i,d_j){=}1$ if $i{=}j$, or $0$ otherwise, and
$I_s(u_i,s_j)=1$ if $i{=}j$, or $0$ otherwise; and two other channels 
$P_d{:}\mathcal{D}{\times}\mathcal{D}{\rightarrow}\reals$ and 
$P_s{:}\mathcal{S}{\times}\mathcal{S}{\rightarrow}\reals$, 
based on our Markov chain $M$,
are defined as 
$P_d (d_i,d_j){=}P_s (s_i,s_j){=}\boldsymbol{P}(u_i,u_j)$.

We begin by reasoning about what happens if each request can be 
forwarded only once.
There are two possible situations: either the initiator
is detected, or he forwards the request to an honest user,
who will in turn send it to the server.
The channel corresponding to the initiator being detected is 
$I_d$, since in this case the output has to be 
$d_i$ whenever $u_i$ is the initiator. 
The channel corresponding to the latter situation is $I_sP_s$---i.e., 
the channel $I_s$ postproccessed by $P_s$. 
This is because, being $P_s$ based on the transition function of $M$,
the entry $(I_sP_s) (u_i,s_j)$ gives us exactly the probability that user
$u_j$ received the request originated by user $u_i$ after it being
forwarded once.
Therefore, when Crowds is limited to one forwarding, it
can be modeled by the channel $I_d \hchoice{q}I_sP_s$~\footnote{To simplify notation, we assume cascading has precedence over hidden choice, i.e., $AB{\hchoice{p}}CD = (AB) {\hchoice{p}}(CD)$.},
representing the fact that: 
(1) with probability $q$ the initiator is detected, 
and the output is generated by $I_d$; and
(2) with probability $1-q$ the output is generated by $I_sP_s$.

Let us now cap our protocol to at most two forwards.
If the initiator is not immediately detected, the 
first recipient will have a probability $p$ of sending the message
to the server.
If the recipient forwards the message instead, he may be detected. 
Because the request was already forwarded once, the channel
that will produce the output in this case 
is $I_dP_d$ (notice that, despite this channel being equivalent to 
$I_sP_s$, it is of a different type).
On the other hand, if the first recipient forwards the message to 
an honest user, this second recipient will now send the message to 
the server, making the protocol produce an output according to 
$I_sP_sP_s$ (or simply $I_sP_s^2$), since $(I_sP_s^2) (u_i,s_j)$ 
is the probability that user $u_j$ received the request originated 
by user $u_i$ after it being forwarded twice.
Therefore, when Crowds is limited to two forwardings, it can be modeled
by the channel $I_d{\hchoice{q}}(I_sP_s{\hchoice{p}}(I_dP_d {\hchoice{q}}I_sP_s^2))$.
Note the disposition of the parenthesis reflects the order in which the 
events occur. 
First, there is a probability $q$ of the initiator being detected,
and $1-q$ of the protocol continuing. 
Then, there is a probability $p$ of the first recipient sending it 
to the server, and so on.

Proceeding this way, we can inductively construct a sequence 
 $\{C_i\}_{i \in \mathbb{N^*}}$, 
 \commentA{Is $\mathbb{N}^*$ standard notation for naturals without $0$?}
\begin{equation*}
C_{i}=I_d \hchoice{q}(I_sP_s \hchoice{p_{}} (I_dP_d \hchoice{q}(\ldots\hchoice{p_{}}(I_dP_d^{i-1} \hchoice{q}I_sP_s^{i} )\ldots))),
\end{equation*}
in which each $C_i$ represents our protocol capped at
$i$ forwards per request.
We can then obtain $Crowds$ by taking $\lim_{i \rightarrow \infty} C_i$.
From that, Theorem~\ref{theorem:hineq} and Proposition~\ref{prop:associative}, we can derive the following bounds
on the information leakage of Crowds.

\begin{restatable}{theorem}{rescrowdsbounds}\label{theorem:bound}
Let $\{t_i\}_{i \in \mathbb{N}}$ be the sequence in which
$t_{2i}{=}1{-}(1{-}q)^{i{+}1}(1{-}p)^i$ and  
$t_{(2i{+}1)}=1{-}(1-q)^{i{+}1}(1{-}p)^{i{+}1}$ for all $i{\in}\mathbb{N}$.

Let $K_m{=}((\ldots(I_d \hchoice{\nicefrac{t_0}{t_1}} I_sP_s)\hchoice{\nicefrac{t_1}{t_2}}\ldots)\hchoice{\nicefrac{t_{2m{-}1}}{t_{2m}}}(I_dP_d^{m})$. Then, $\forall m{\in}\mathbb{N}^*$,
\begin{align}
&V_g \hyperc{\pi}{\lim\limits_{i \rightarrow \infty} C_i} ~\geq~ t_{2m}V_g\hyperc{\pi}{K_m}, \label{eq:crowdslower}\\
&V_g \hyperc{\pi}{\lim\limits_{i \rightarrow \infty} C_i} ~\leq~ \review{
t_{2m}V_g\hyperc{\pi}{K_m} + (1-t_{2m})V_g\hyperc{\pi}{I_sP_s^{m+1}},} \quad \text{and} \label{eq:crowdsupper}\\
&(1-t_{2m})V_g\hyperc{\pi}{I_sP_s^{m+1}} ~\leq~ (1-q)^{m+1}(1-p_{})^{m}. \label{eq:approx}
\end{align}
\end{restatable}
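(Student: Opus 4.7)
My plan is to use Proposition~\ref{prop:associative} to reassociate $C_\infty = \lim_i C_i$ into the form $K_m \hchoice{t_{2m}} T_m$ for some tail $T_m$ starting at $I_sP_s^{m+1}$, then invoke Theorem~\ref{theorem:hineq} to read off the bounds, treating $V_g\hyperc{\pi}{T_m}$ separately via the Coriaceous Theorem.

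For the reassociation I would iterate Proposition~\ref{prop:associative} in its right-to-left direction: given a right-associated expression $A \hchoice{a} (B \hchoice{b} D)$, it rewrites as $(A \hchoice{a'} B) \hchoice{q} D$ with $q = a + b(1-a)$ and $a' = a/q$. Applied first with $(a,b) = (q,p)$ the outer weight becomes $q + p(1-q) = 1-(1-q)(1-p) = t_1$ and the inner weight is $q/t_1 = t_0/t_1$, exactly the leftmost hidden choice of $K_m$. A short induction on $m$, tracking how the alternating coefficients $p, q$ combine into the recursive definition of the $t_k$, yields $C_\infty = K_m \hchoice{t_{2m}} T_m$ with $T_m = \lim_i (I_sP_s^{m+1} \hchoice{p}(I_dP_d^{m+1} \hchoice{q} \ldots))$; the residual weight $1 - t_{2m}$ expands directly to $(1-q)^{m+1}(1-p)^m$, since stripping the first $2m{+}1$ leaves passes through $m{+}1$ ``$q$-branches'' and $m$ ``$p$-branches''.

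Given this decomposition, \eqref{eq:crowdslower} is immediate from the lower part of Theorem~\ref{theorem:hineq} (keeping only the $t_{2m} V_g\hyperc{\pi}{K_m}$ term in the max). The upper part of the same theorem gives $V_g\hyperc{\pi}{C_\infty} \leq t_{2m} V_g\hyperc{\pi}{K_m} + (1-t_{2m}) V_g\hyperc{\pi}{T_m}$, so \eqref{eq:crowdsupper} reduces to proving $V_g\hyperc{\pi}{T_m} \leq V_g\hyperc{\pi}{I_sP_s^{m+1}}$. For that, I would flatten $T_m$ into a countable convex combination of the channels $\{I_sP_s^{m+1+j}, I_dP_d^{m+1+j}\}_{j \geq 0}$ with weights summing to $1$, obtained by iterating the upper bound of Theorem~\ref{theorem:hineq}. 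Each $I_sP_s^{m+1+j} = (I_sP_s^{m+1}) P_s^j$ is a post-processing of $I_sP_s^{m+1}$, so $I_sP_s^{m+1} \refines I_sP_s^{m+1+j}$ and the Coriaceous Theorem gives $V_g\hyperc{\pi}{I_sP_s^{m+1+j}} \leq V_g\hyperc{\pi}{I_sP_s^{m+1}}$; the same bound holds for $I_dP_d^{m+1+j}$, which coincides with $I_sP_s^{m+1+j}$ up to the obvious output relabeling $\mathcal{S} \leftrightarrow \mathcal{D}$. Summing collapses the convex combination to $V_g\hyperc{\pi}{I_sP_s^{m+1}}$.

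Finally, \eqref{eq:approx} is the one-line observation that $V_g\hyperc{\pi}{C} \leq 1$ for any channel (using $g \leq 1$ to get $\sum_y \max_w \sum_x \pi(x) C(x,y) g(w,x) \leq \sum_x \pi(x) = 1$), combined with $1 - t_{2m} = (1-q)^{m+1}(1-p)^m$. The main obstacle I anticipate is the careful handling of two limits: exchanging $\lim_i$ with the reassociation, and passing the upper bound of Theorem~\ref{theorem:hineq} through a countably infinite convex combination in the bound on $V_g\hyperc{\pi}{T_m}$. Both are controlled by the geometric decay of the residual weights $(1-p)^j(1-q)^j$, so standard dominated-convergence arguments make the exchanges routine.
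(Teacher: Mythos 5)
Your proposal is correct and follows essentially the same route as the paper's proof: reassociate the nested hidden choices via the associativity law into $K_m \hchoice{t_{2m}} (\text{tail})$, read off both bounds from Theorem~\ref{theorem:hineq}, and dominate the tail's vulnerability by $V_g\hyperc{\pi}{I_sP_s^{m+1}}$ using the refinements $I_sP_s^{m+1} \refines I_sP_s^{m+1+j} \equiv I_dP_d^{m+1+j}$ together with the Coriaceous theorem. The only difference is bookkeeping: the paper performs the decomposition on the finite truncations $C_{m'}$ with $m'>m$ (via a reassociation lemma and a finite-induction lemma bounding a nested hidden choice of channels with non-increasing vulnerabilities by the vulnerability of its first component) and then passes to the limit by continuity of $V_g$ in the channel matrix, which sidesteps the infinite convex combination and the limit-exchange issues you flag at the end.
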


Equations~\eqref{eq:crowdslower} and \eqref{eq:crowdsupper} 
provide an effective way to approximate the $g$-leakage of
information of the channel $Crowds$ with arbitrary precision, 
whereas Equation~\eqref{eq:approx} lets us estimate how many 
interactions are needed for that.

\review{To obtain $K_m$, we need to calculate $m$ matrix multiplications, which surpass the cost of
computing the $m$ hidden choices
(which are only matrix additions).
Thus, Theorem~\ref{theorem:bound}
implies we can obtain a channel whose posterior vulnerability differs from that of $Crowds$ by at most $(1{-}q)^{m{+}1}(1{-}p^{m})$ in ${\equiv}O(mn_c^{2.807})$ time} (using the Strassen algorithm for matrix multiplication~\cite{Strassen:69:NM}).
Since $p$ is typically high,  
$(1{-}q)^{m{+}1}(1{-}p_{})^{m}$ decreases very fast.
For instance, for a precision of $0.001$ on the leakage bound, we need $m{=}10$ when  $(1{-}q)(1{-}p_{})$ is $0.5$, $m{=}20$ when it is $0.7$, and $m{=}66$ when it is $0.9$, 
regardless of the number $n_c$ of honest users.


Therefore, our method has time complexity $O(n_c^{2.807})$ 
when the number of users is large (which is the usual case for Crowds),
and reasonable values of forward probability $p_{}$, and precision.
To the best of our knowledge this method is the fastest in
the literature, beating the previous $O(n_c^{3.807})$
that can be achieved by modifying the method presented in \cite{Andres:10:TACAS}---although their method does not require 
our assumption of corrupt users being evenly distributed.

\section{Related work}
\label{sec:related}
Compositionality is a \review{fundamental} notion in computer science,
and it has been subject of growing interest in the QIF community.

Espinoza and Smith \cite{Espinoza:13:IC} derived a number of
\emph{min-capacity} bounds for different channel 
compositions, including cascading and parallel composition.

However, it was not until recently that compositionality results
regarding the more general metrics of $g$-leakage started to be explored.
Kawamoto et al. \cite{Kawamoto:17:LMCS} defined a generalization of the 
parallel operator for channels with different input sets, and 
gave upper bounds for the corresponding information leakage.
Our bounds for compatible channels (Theorem~\ref{theorem:parineq}) are tighter than theirs.

Recently, Engelhardt \cite{Engelhardt:17:ESORICS} 
defined the \emph{mix operator}, another generalization of parallel composition,
and derived results similar to ours regarding the parallel operator.
Specifically, he provided commutative and associative properties (Propositions \ref{prop:commutative} and \ref{prop:associative}), and from his results the 
lower bound of Theorem~\ref{theorem:parineq} can be inferred. 
He also proved properties similar to the ones in Proposition \ref{prop:null}, albeit using more restrictive definitions of null
and transparent channels. 

Both Kawamoto et al. and Engelhardt provided results similar to Theorem~\ref{theorem:parord},
but ours is not restricted to when one channel is refined by the other.

\review{Just recently, Alvim et. al investigated algebraic properties of hidden and visible choice operators in the context of game-theoretic aspects of QIF \cite{Alvim:18:POST}, and derived the upper bounds of Theorems \ref{theorem:reseqvchoice} and \ref{theorem:hineq}.
Here we expanded the algebra to the interaction among operators, including parallel composition, derived more comprehensive bounds on their leakage, and applied our results to the Crowds protocol}.

\section{Conclusions and future work}
\label{sec:conc}
\review{In this paper we proposed an algebra to express numerous
component compositions in systems that arise from typical ways in components interact in practical
scenarios.
We provided fundamental algebraic properties of these operators,
and studied several of their leakage properties.
In particular, we obtained new results regarding their motonicity properties and stricter bounds for the parallel and hidden choice operators.}
These results are of practical interest for the QIF community,
as they provide helpful tools for modeling large systems and
analyzing their security properties.

The list of operators we explored in this paper, however, does not seem to 
capture every possible interaction of components of real systems.
As future work we wish to find other operators and increase the 
expressiveness of our approach.

\paragraph*{Acknowledgments}
\begin{small}
Arthur Am\'{e}rico and M\'{a}rio S. Alvim are supported by CNPq, CAPES, and FAPEMIG.
Annabelle McIver is supported by ARC grant DP140101119.
\end{small}

\bibliographystyle{splncs}
\bibliography{references,short}
\version{}{
\newpage
\appendix
\section{Proofs of technical results}
\label{sec:proofs}
In this section we provide proofs for our technical results.

\subsection{Proofs of Section~\ref{sec:algebraic}}
In this section, we will consider $C_1: \calx \times \caly_1 \rightarrow \reals$, $C_2: \calx \times \caly_2 \rightarrow \reals$, 
and $C_3: \calx \times \caly_3 \rightarrow \reals$ to be compatible channels.
\rescommutative*

\begin{proof}
$\bullet$ ($\parallel$) The bijection is given by $\review{\psi} ((y_1,y_2))=(y_2,y_1)$. For all $x \in \calx$, $y_1 \in \caly_1$ and $y_2 \in \caly_2$,
\begin{align*}
&(C_1 \parallel C_2 )(x,(y_1, y_2))\\
=&C_1(x, y_1)C_2(x,y_2) &\text{(by def. of $\parallel$)}\\
=&(C_2 \parallel C_1 )(x,(y_2, y_1)) &\text{(by def. of $\parallel$)}
\end{align*}

$\bullet$ ($\vchoiceop{p}$) The bijection is given by $\review{\psi} ((y,1))=(y,2)$ and $\review{\psi}((y,2))=(y,1)$.
For all $x \in \calx$ and $y \in \caly_1$,
\begin{align*}
&(C_1 \vchoice{p} C_2 )(x,(y, 1))\\
=&pC_1(x, y) &\text{(by def. of $\vchoice{p}$)}\\
=&(C_2 \vchoice{(1-p)} C_1 )(x,(y, 2)) &\text{(by def. of $\vchoice{(1-p)}$)}
\end{align*}

Similarly, for all $x \in \calx$ and $y \in \caly_2$
\begin{align*}
&(C_1 \vchoice{p} C_2 )(x,(y, 2))\\
=&(1-p)C_2(x, y) &\text{(by def. of $\vchoice{p}$)}\\
=&(C_2 \vchoice{(1-p)} C_1 )(x,(y, 1))&\text{(by def. of $\vchoice{(1-p)}$)}
\end{align*}

$\bullet$ ($\hchoiceop{p}$) For all $x \in \calx$ and $y \in \caly_1 \cap \caly_2$ ,
\begin{align*}
&(C_1 \hchoice{p} C_2 )(x,y)\\
=&pC_1(x, y) + (1-p)C_2(x, y) &\text{(by def. of $\hchoice{p}$)}\\
=&(C_2 \hchoice{(1-p)} C_1 )(x,y) &\text{(by def. of $\hchoice{(1-p)}$)}
\end{align*}

For all $x \in \calx$ and $y \in \caly_1 \setminus \caly_2$
\begin{align*}
&(C_1 \hchoice{p} C_2 )(x,y)\\
=&pC_1(x, y)&\text{(by def. of $\hchoice{p}$)}\\
=&(C_2 \hchoice{(1-p)} C_1 )(x,y) &\text{(by def. of $\hchoice{(1-p)}$)}
\end{align*}

Finally, for all $x \in \calx$ and $y \in \caly_2 \setminus \caly_1$
\begin{align*}
&(C_1 \hchoice{p} C_2 )(x,y)\\
=&(1-p)C_2(x, y) &\text{(by def. of $\hchoice{p}$)}\\
=&(C_2 \hchoice{(1-p)} C_1 )(x,y) &\text{(by def. of $\hchoice{(1-p)}$)}
\end{align*}
\qed
\end{proof}

\resassociative*

\begin{proof}
$\bullet$ ($\parallel$) The bijection is given by $\review{\psi}(((y_1,y_2), y_3))=(y_1,(y_2, y_3))$.
For all $x \in \calx$, $y_1 \in \caly_1$, $y_2 \in \caly_2$ 
and $y_3 \in \caly_3$,
\begin{align*}
&((C_1 \parallel C_2) \parallel C_3 )(x,((y_1, y_2), y_3))\\
=&(C_1 \parallel C_2)(x, (y_1, y_2))C_3(x,y_3) &\text{(by def. of $\parallel$)}\\
=&C_1(x,y_1)C_2(x, y_2)C_3(x,y_3) &\text{(by def. of $\parallel$)}\\
=&C_1(x,y_1)(C_2 \parallel C_3 )(x,(y_2, y_3)) &\text{(by def. of $\parallel$)}\\
=&(C_1 \parallel (C_2 \parallel C_3) )(x,(y_1, (y_2, y_3))) &\text{(by def. of $\parallel$)}
\end{align*} 

$\bullet$ ($\vchoiceop{p}$) $\review{\psi}$ is defined as
\begin{align*}
\review{\psi}(((y_1, 1), 1))&=(y_1,1)\\
\review{\psi}(((y_2, 2), 1))&=((y_2, 1),2)\\
\review{\psi}((y_3, 2))&=((y_3, 2), 1)
\end{align*}
for all $y_1 \in \caly_1$, $y_2 \in \caly_2$ and $y_3 \in \caly_3$. For all $x \in \calx$ and $y \in \caly_1$ ,
\begin{align*}
&((C_1 \vchoice{p} C_2) \vchoice{q} C_3 )(x,((y, 1),1))\\
=&pqC_1(x, y) &\text{(by def. of $\vchoice{p}$, $\vchoice{q}$)}\\
=&p'C_1(x, y) &\text{(by def. of $p'$)}\\
=&(C_1 \vchoice{p'} (C_2 \vchoice{q'} C_3) )(x,(y,1)) &\text{(by def. of $\vchoice{p'}$, $\vchoice{q'}$)}
\end{align*}

For all $x \in \calx$ and $y \in \caly_2$ ,
\begin{align*}
&((C_1 \vchoice{p} C_2) \vchoice{q} C_3 )(x,((y, 2),1))\\
=&(1-p)qC_2(x, y) &\text{(by def. of $\vchoice{p}$, $\vchoice{q}$)}\\
=&(1-p')q'C_2(x, y) &\text{(by def. of $p'$, $q'$)}\\
=&(C_1 \vchoice{p'} (C_2 \vchoice{q'} C_3) )(x,((y,1), 2)) &\text{(by def. of $\vchoice{p'}$, $\vchoice{q'}$)}
\end{align*}

For all $x \in \calx$ and $y \in \caly_3$ ,
\begin{align*}
&((C_1 \vchoice{p} C_2) \vchoice{q} C_3 )(x,(y,2))\\
=&(1-q)C_3(x, y) &\text{(by def. of $\vchoice{p}$, $\vchoice{q}$)}\\
=&(1-p')(1-q')C_3(x, y) &\text{(by def. of $p'$, $q'$)}\\
=&(C_1 \vchoice{p'} (C_2 \vchoice{q'} C_3) )(x,((y,2), 2)) &\text{(by def. of $\vchoice{p'}$, $\vchoice{q'}$)}
\end{align*}

$\bullet$ ($\hchoiceop{p}$) For all $x \in \calx$ and $y \in \caly_1 \cap \caly_2 \cap \caly_3$ ,
\begin{align*}
&((C_1 \hchoice{p} C_2) \hchoice{q} C_3 )(x,y)\\
=&pqC_1(x, y) +(1{-}p)qC_2(x,y){+}(1{-}q)C_3(x, y)&\text{(by def. of $\hchoice{p}$, $\hchoice{q}$)}\\
=&p'C_1(x, y) {+}(1{-}p')q'C_2(x,y){+}(1{-}p')(1{-}q')C_3(x, y) &\text{(by def. of $p'$, $q'$)}\\
=&(C_1 \hchoice{p'} (C_2 \hchoice{q'} C_3) )(x, y) &\text{(by def. of $\hchoice{p'}$, $\hchoice{q'}$)}
\end{align*}

For all $x \in \calx$ and $y \in (\caly_1 \cap \caly_2) \setminus \caly_3$ ,
\begin{align*}
&((C_1 \hchoice{p} C_2) \hchoice{q} C_3 )(x,y)\\
=&pqC_1(x, y) +(1-p)qC_2(x,y) &\text{(by def. of $\hchoice{p}$, $\hchoice{q}$)}\\
=&p'C_1(x, y) +(1-p')q'C_2(x,y)&\text{(by def. of $p'$, $q'$)}\\
=&(C_1 \hchoice{p'} (C_2 \hchoice{q'} C_3) )(x, y) &\text{(by def. of $\hchoice{p'}$, $\hchoice{q'}$)}
\end{align*}

For all $x \in \calx$ and $y \in (\caly_1 \cap \caly_3 )\setminus \caly_2$ ,
\begin{align*}
&((C_1 \hchoice{p} C_2) \hchoice{q} C_3 )(x,y)\\
=&pqC_1(x, y) +(1-q)C_3(x, y)&\text{(by def. of $\hchoice{p}$, $\hchoice{q}$)}\\
=&p'C_1(x, y) +(1-p')(1-q')C_3(x, y) &\text{(by def. of $p'$, $q'$)}\\
=&(C_1 \hchoice{p'} (C_2 \hchoice{q'} C_3) )(x, y) &\text{(by def. of $\hchoice{p'}$, $\hchoice{q'}$)}
\end{align*}

For all $x \in \calx$ and $(y \in \caly_2 \cap \caly_3 )\setminus \caly_1$ ,
\begin{align*}
&((C_1 \hchoice{p} C_2) \hchoice{q} C_3 )(x,y)\\
=&(1-p)qC_2(x,y)+(1-q)C_3(x, y)&\text{(by def. of $\hchoice{p}$, $\hchoice{q}$)}\\
=&(1-p')q'C_2(x,y)+(1-p')(1-q')C_3(x, y) &\text{(by def. of $p'$, $q'$)}\\
=&(C_1 \hchoice{p'} (C_2 \hchoice{q'} C_3) )(x, y) &\text{(by def. of $\hchoice{p'}$, $\hchoice{q'}$)}
\end{align*}

For all $x \in \calx$ and $y \in \caly_1 \setminus( \caly_2 \cup \caly_3)$ ,
\begin{align*}
&((C_1 \hchoice{p} C_2) \hchoice{q} C_3 )(x,y)\\
=&pqC_1(x, y)&\text{(by def. of $\hchoice{p}$, $\hchoice{q}$)}\\
=&p'C_1(x, y) &\text{(by def. of $p'$)}\\
=&(C_1 \hchoice{p'} (C_2 \hchoice{q'} C_3) )(x, y) &\text{(by def. of $\hchoice{p'}$, $\hchoice{q'}$)}
\end{align*}

For all $x \in \calx$ and $y \in \caly_2 \setminus ( \caly_1 \cup \caly_3)$ ,
\begin{align*}
&((C_1 \hchoice{p} C_2) \hchoice{q} C_3 )(x,y)\\
=&(1-p)qC_2(x,y) &\text{(by def. of $\hchoice{p}$, $\hchoice{q}$)}\\
=&(1-p')q'C_2(x,y) &\text{(by def. of $p'$, $q'$)}\\
=&(C_1 \hchoice{p'} (C_2 \hchoice{q'} C_3) )(x, y) &\text{(by def. of $\hchoice{p'}$, $\hchoice{q'}$)}
\end{align*}

For all $x \in \calx$ and $y \in \caly_3 \setminus (\caly_1 \cup \caly_2)$ ,
\begin{align*}
&((C_1 \hchoice{p} C_2) \hchoice{q} C_3 )(x,y)\\
=&(1-q)C_3(x, y)&\text{(by def. of $\hchoice{p}$, $\hchoice{q}$)}\\
=&(1-p')(1-q')C_3(x, y) &\text{(by def. of $p'$, $q'$)}\\
=&(C_1 \hchoice{p'} (C_2 \hchoice{q'} C_3) )(x, y) &\text{(by def. of $\hchoice{p'}$, $\hchoice{q'}$)}
\end{align*}
\qed
\end{proof}

\resnull*
\begin{proof}
$\bullet$ (null channel, $\parallel$) Firstly, for any null channel $\nullchannel: \calx \times \calz \rightarrow \reals$, we notice that
$\nullchannel(x_1,z)=\nullchannel(x_2,z)$ for any $x_1,x_2 \in \calx$ and $z \in \calz$. 
Thus, given $z \in \calz$ we uniquely define $\nullchannel(z)=\nullchannel(x,z) \text{ for any } x \in \calx$.

We then have that, for any $\pi$ and $g$:
\begin{align*}
 &V_{g}\hyperc{\pi}{C_1 \parallel \nullchannel}\\
 =&\sum_{z \in \mathcal{Z}} \sum_{y \in \mathcal{Y}} \max\limits_{w \in \mathcal{W}} \sum_{x \in \calx} C_1(x,y) \cdot \nullchannel(x,z) \cdot g(w,x) \cdot \pi(x) &\text{(by def. of $\parallel$)}\\
 =&\sum_{z \in \mathcal{Z}} \sum_{y \in \mathcal{Y}} \max\limits_{w \in \mathcal{W}} \sum_{x \in \calx} C_1(x,y) \cdot \nullchannel(z) \cdot g(w,x) \cdot \pi(x) &\text{(by def. of $\nullchannel(z)$)}\\
 =&\sum_{z \in \mathcal{Z}} \nullchannel(z) \sum_{y \in \mathcal{Y}} \max\limits_{w \in \mathcal{W}} \sum_{x \in \calx} C_1(x,y) \cdot g(w,x) \cdot \pi(x) &\text{($\nullchannel(z)$ is constant given $z$)}\\
 =&\sum_{z \in \mathcal{Z}} \nullchannel(z) \cdot V_{g}\hyperc{\pi}{C_1} & \text{(by def. of vulnerability)}\\
 =&V_{g}\hyperc{\pi}{C_1} &\text{($\nullchannel$ is a channel)}\\ 
\end{align*}

$\bullet$ (null channel, $\vchoiceop{p}$)
We have, for all $\pi \in \dist \calx$, and all gain functions $g$,
\begin{align*}
&V_g \hyperc{\pi}{C_1 \vchoice{p}\nullchannel}\\
=&pV_g \hyperc{\pi}{C_1}+(1-p)V_g \hyperc{\pi}{\nullchannel} &\text{(from theorem \ref{theorem:reseqvchoice})}\\
\leq& pV_g \hyperc{\pi}{C_1}+(1-p)V_g \hyperc{\pi}{C_1} &\text{($V_g \hyperc{\pi}{\nullchannel} \leq V_g \hyperc{\pi}{C_1}$)}\\
=& V_g \hyperc{\pi}{C_1} 
\end{align*}

$\bullet$ (null channel, $\hchoiceop{p}$)
We have, for all $\pi \in \dist \calx$, and all gain functions $g$,
\begin{align*}
&V_g \hyperc{\pi}{C_1 \hchoice{p}\nullchannel}\\
\leq&pV_g \hyperc{\pi}{C_1}+(1-p)V_g \hyperc{\pi}{\nullchannel} &\text{(from theorem \ref{theorem:hineq})}\\
\leq& pV_g \hyperc{\pi}{C_1}+(1-p)V_g \hyperc{\pi}{C_1} &\text{($V_g \hyperc{\pi}{\nullchannel} \leq V_g \hyperc{\pi}{C_1}$)}\\
=& V_g \hyperc{\pi}{C_1} 
\end{align*}

$\bullet$ (transparant channel, $\parallel$) From theorem \ref{theorem:parineq}, we have, for all $\pi \in \dist \calx$
and gain functions $g$,
$V_g\hyperc{\pi}{C_1 \parallel \transparentchannel} \geq  V_g\hyperc{\pi}{\transparentchannel}$.
A transparent channel refines any other compatible channel. Thus, $C_1 \parallel \transparentchannel \equiv \transparentchannel$\\

$\bullet$ (transparent channel, $\vchoiceop{p}$) We have, for all $\pi \in \dist \calx$ and gain functions $g$,
\begin{align*}
&V_g \hyperc{\pi}{C_1 \vchoice{p}\transparentchannel}\\
=&pV_g \hyperc{\pi}{C_1}+(1-p)V_g \hyperc{\pi}{\transparentchannel} &\text{(from theorem \ref{theorem:reseqvchoice})}\\
\geq& pV_g \hyperc{\pi}{C_1}+(1-p)V_g \hyperc{\pi}{C_1} &\text{($V_g \hyperc{\pi}{\transparentchannel} \geq V_g \hyperc{\pi}{C_1}$)}\\
=& V_g \hyperc{\pi}{C_1} 
\end{align*}
\qed
\end{proof}
\residempotency*

\begin{proof}
$\bullet$ ($\parallel$) From theorem \ref{theorem:parineq}, $V_g\hyperc{\pi}{C_1 \parallel C_1}\geq V_g\hyperc{\pi}{C_1}$ for all priors $\pi$ and gain functions $g$. 

$\bullet$ ($\vchoiceop{p}$) For any gain function $g$ and distribution $\pi \in \dist \calx$
\begin{align*}
&V_g \hyperc{\pi}{C_1 \vchoice{p}C_1}\\
=& pV_g \hyperc{\pi}{C_1}+(1-p)V_g \hyperc{\pi}{C_1}&\text{(from theorem \ref{theorem:reseqvchoice})}\\
=& V_g \hyperc{\pi}{C_1} 
\end{align*}

$\bullet$ ($\hchoiceop{p}$)  We have, for all $x \in \calx$ and $y \in \caly_1$:
$$(C_1 \hchoice{p} C_1)(x,y)= p \cdot C_1(x,y) + (1-p) \cdot C_1(x,y) = C_1(x,y)$$
\qed
\end{proof}

\resdistsame*
\begin{proof}
$\bullet$ ($\parallel$) Using the commutative and associative properties of the parallel operator, it is easy to show that 
$$(C_1 \parallel C_2) \parallel (C_1 \parallel C_3) \equiv (C_1 \parallel C_1) \parallel (C_2 \parallel C_3)$$
Now, from proposition \ref{prop:idem}, $(C_1 \parallel C_1) \refines C_1$.
Thus, theorem \ref{theorem:parord} implies 
$$(C_1 \parallel C_1) \parallel (C_2 \parallel C_3) \refines  C_1 \parallel (C_2 \parallel C_3)$$ 

$\bullet$ ($\vchoiceop{p}$) For all $\pi \in \dist \calx$ and gain functions $g$, we have
\begin{align*}
&V_g \hyperc{\pi}{C_1 \vchoice{p}(C_2 \vchoice{q}C_3)}\\
=&pV_g \hyperc{\pi}{C_1}+(1-p)V_g \hyperc{\pi}{C_2 \vchoice{q}C_3} &\text{(from theorem \ref{theorem:reseqvchoice})}\\
=&pV_g \hyperc{\pi}{C_1}+(1-p)qV_g \hyperc{\pi}{C_2} +(1-p)(1-q)V_g \hyperc{\pi}{C_3} &\text{(from theorem \ref{theorem:reseqvchoice})}\\
=&pqV_g \hyperc{\pi}{C_1}+(1-p)qV_g \hyperc{\pi}{C_2}\\ &+p(1-q)V_g \hyperc{\pi}{C_1}+(1-p)(1-q)V_g \hyperc{\pi}{C_3} &\text{(rearranging)}\\
=&qV_g \hyperc{\pi}{C_1 \vchoice{p} C_2}+(1-q)V_g \hyperc{\pi}{C_1 \vchoice{p} C_3} &\text{(from theorem \ref{theorem:reseqvchoice})}\\
=&V_g \hyperc{\pi}{(C_1 \vchoice{p}C_2) \vchoice{q}(C_1 \vchoice{p}C_3)}&\text{(from theorem \ref{theorem:reseqvchoice})}
\end{align*}

$\bullet$ ($\hchoiceop{p}$) For all $x \in \calx$ and $y \in \caly_1 \cap \caly_2 \cap \caly_3$ ,
\begin{align*}
&(C_1 \hchoice{p} (C_2 \hchoice{q} C_3 ))(x,y)\\
=&pC_1(x, y) +(1-p)qC_2(x,y)+(1-p)(1-q)C_3(x,y) &\text{(by def. of $\hchoice{p}$, $\hchoice{q}$)}\\
=&pqC_1(x, y) +(1-p)qC_2(x,y)\\
&+p(1-q)C_1(x, y)+(1-p)(1-q)C_3(x,y) &\text{(rearranging)}\\
=&q(C_1 \hchoice{p} C_2) (x,y) + (1-q) (C_1 \hchoice{p} C_3) (x, y) &\text{(by def. of $\hchoice{p}$)}\\
=&((C_1 \hchoice{p} C_2) \hchoice{q} (C_1 \hchoice{p} C_3) )(x,y) &\text{(by def. of $\hchoice{q}$)}\\
\end{align*}

For all $x \in \calx$ and $y \in (\caly_1 \cap \caly_2) \setminus \caly_3$ ,
\begin{align*}
&(C_1 \hchoice{p} (C_2 \hchoice{q} C_3 ))(x,y)\\
=&pC_1(x, y) +(1-p)qC_2(x,y) &\text{(by def. of $\hchoice{p}$, $\hchoice{q}$)}\\
=&pqC_1(x, y) +(1-p)qC_2(x,y)+p(1-q)C_1(x, y) &\text{(rearranging)}\\
=&q(C_1 \hchoice{p} C_2) (x,y) + (1-q) (C_1 \hchoice{p} C_3) (x, y) &\text{(by def. of $\hchoice{p}$)}\\
=&((C_1 \hchoice{p} C_2) \hchoice{q} (C_1 \hchoice{p} C_3) )(x,y) &\text{(by def. of $\hchoice{q}$)}\\
\end{align*}

For all $x \in \calx$ and $y \in (\caly_1 \cap \caly_3 )\setminus \caly_2$ ,
\begin{align*}
&(C_1 \hchoice{p} (C_2 \hchoice{q} C_3 ))(x,y)\\
=&pC_1(x, y) +(1-p)(1-q)C_3(x,y) &\text{(by def. of $\hchoice{p}$, $\hchoice{q}$)}\\
=&pqC_1(x, y)+p(1-q)C_1(x, y)+(1-p)(1-q)C_3(x,y) &\text{(rearranging)}\\
=&q(C_1 \hchoice{p} C_2) (x,y) + (1-q) (C_1 \hchoice{p} C_3) (x, y) &\text{(by def. of $\hchoice{p}$)}\\
=&((C_1 \hchoice{p} C_2) \hchoice{q} (C_1 \hchoice{p} C_3) )(x,y) &\text{(by def. of $\hchoice{q}$)}\\
\end{align*}

For all $x \in \calx$ and $y \in (\caly_2 \cap \caly_3 )\setminus \caly_1$ ,
\begin{align*}
&(C_1 \hchoice{p} (C_2 \hchoice{q} C_3 ))(x,y)\\
=&(1-p)qC_2(x,y)+(1-p)(1-q)C_3(x,y) &\text{(by def. of $\hchoice{p}$, $\hchoice{q}$)}\\
=&q(C_1 \hchoice{p} C_2) (x,y) + (1-q) (C_1 \hchoice{p} C_3) (x, y) &\text{(by def. of $\hchoice{p}$)}\\
=&((C_1 \hchoice{p} C_2) \hchoice{q} (C_1 \hchoice{p} C_3) )(x,y) &\text{(by def. of $\hchoice{q}$)}\\
\end{align*}

For all $x \in \calx$ and $y \in \caly_1 \setminus( \caly_2 \cup \caly_3)$ ,
\begin{align*}
&(C_1 \hchoice{p} (C_2 \hchoice{q} C_3 ))(x,y)\\
=&pC_1(x, y) &\text{(by def. of $\hchoice{p}$, $\hchoice{q}$)}\\
=&pqC_1(x, y)+p(1-q)C_1(x, y) &\text{(rearranging)}\\
=&q(C_1 \hchoice{p} C_2) (x,y) + (1-q) (C_1 \hchoice{p} C_3) (x, y) &\text{(by def. of $\hchoice{p}$)}\\
=&((C_1 \hchoice{p} C_2) \hchoice{q} (C_1 \hchoice{p} C_3) )(x,y) &\text{(by def. of $\hchoice{q}$)}\\
\end{align*}

For all $x \in \calx$ and $y \in \caly_2 \setminus ( \caly_1 \cup \caly_3)$ ,
\begin{align*}
&(C_1 \hchoice{p} (C_2 \hchoice{q} C_3 ))(x,y)\\
=&(1-p)qC_2(x,y)&\text{(by def. of $\hchoice{p}$, $\hchoice{q}$)}\\
=&q(C_1 \hchoice{p} C_2) (x,y)  &\text{(by def. of $\hchoice{p}$)}\\
=&((C_1 \hchoice{p} C_2) \hchoice{q} (C_1 \hchoice{p} C_3) )(x,y) &\text{(by def. of $\hchoice{q}$)}\\
\end{align*}

For all $x \in \calx$ and $y \in \caly_3 \setminus (\caly_1 \cup \caly_2)$ ,
\begin{align*}
&(C_1 \hchoice{p} (C_2 \hchoice{q} C_3 ))(x,y)\\
=&(1-p)(1-q)C_3(x,y) &\text{(by def. of $\hchoice{p}$, $\hchoice{q}$)}\\
=& (1-q) (C_1 \hchoice{p} C_3) (x, y) &\text{(by def. of $\hchoice{p}$)}\\
=&((C_1 \hchoice{p} C_2) \hchoice{q} (C_1 \hchoice{p} C_3) )(x,y) &\text{(by def. of $\hchoice{q}$)}\\
\end{align*}
\qed
\end{proof}

\resdistdiff*
\begin{proof}
$\bullet$ ($\parallel$ over $\vchoiceop{p}$) The bijection is given by $\review{\psi}((y_1,(y_2,1)))=((y_1, y_2), 1)$ and $\review{\psi}((y_1,(y_3,2)))=((y_1, y_3), 2)$.
 For all $x \in \calx$, $y_1 \in \caly_1$ and $y_2 \in \caly_2$,
\begin{align*}
&(C_1 \parallel (C_2 \vchoice{p} C_3)) (x, (y_1,(y_2,1))) \\
=&C_1 (x, y_1)(C_2 \vchoice{p} C_3) (x, (y_2,1)) &\text{(by definition of $\parallel$)}\\
=&pC_1 (x, y_1)C_2 (x, y_2) &\text{(by definition of $\vchoice{p}$)}\\
=&p (C_1 \parallel C_2) (x, (y_1, y_2)) &\text{(by definition of $\parallel$)}\\
=&((C_1 \parallel C_2) \vchoice{p} (C_1 \parallel C_3)) (x, ((y_1, y_2), 1)) &\text{(by definition of $\vchoice{p}$)}
\end{align*}

For all $x \in \calx$, $y_1 \in \caly_1$ and $y_3 \in \caly_3$,
\begin{align*}
&(C_1 \parallel (C_2 \vchoice{p} C_3)) (x, (y_1,(y_3,2))) \\
=&C_1 (x, y_1)(C_2 \vchoice{p} C_3) (x, (y_3,2)) &\text{(by definition of $\parallel$)}\\
=&(1-p)C_1 (x, y_1)C_3 (x, y_3) &\text{(by definition of $\vchoice{p}$)}\\
=&(1-p) (C_1 \parallel C_2) (x, (y_1, y_3)) &\text{(by definition of $\parallel$)}\\
=&((C_1 \parallel C_2) \vchoice{p} (C_1 \parallel C_3)) (x, ((y_1, y_3), 2)) &\text{(by definition of $\vchoice{p}$)}
\end{align*}

$\bullet$ ($\parallel$ over $\hchoiceop{p}$) For all $x \in \calx$, $y_1 \in \caly_1$ and $y' \in \caly_2 \cap \caly_3$,
\begin{align*}
&(C_1 \parallel (C_2 \hchoice{p} C_3)) (x,(y_1, y')) \\
=&C_1 (x, y_1)(C_2 \hchoice{p} C_3) (x, y') &\text{(by definition of $\parallel$)}\\
=&C_1 (x, y_1)(pC_2 (x, y')+(1-p)C_3 (x, y')) &\text{(by definition of $\hchoice{p}$)}\\
=&p (C_1 \parallel C_2) (x, (y_1, y'))+(1-p) (C_1 \parallel C_2) (x, (y_1, y')) &\text{(by definition of $\parallel$)}\\
=&((C_1 \parallel C_2) \hchoice{p} (C_1 \parallel C_3) )(x, (y_1, y')) &\text{(by definition of $\hchoice{p}$)}
\end{align*}

For all $x \in \calx$, $y_1 \in \caly_1$ and $y_2 \in \caly_2 \setminus \caly_3$,
\begin{align*}
&(C_1 \parallel (C_2 \hchoice{p} C_3)) (x, (y_1, y_2)) \\
=&C_1 (x, y_1)(C_2 \hchoice{p} C_3) (x, y_2) &\text{(by definition of $\parallel$)}\\
=&C_1 (x, y_1)(pC_2 (x, y_2)) &\text{(by definition of $\hchoice{p}$)}\\
=&p (C_1 \parallel C_2) (x, (y_1, y_2))&\text{(by definition of $\parallel$)}\\
=&((C_1 \parallel C_2) \hchoice{p} (C_1 \parallel C_3) )(x, (y_1, y_2)) &\text{(by definition of $\hchoice{p}$)}
\end{align*}

For all $x \in \calx$, $y_1 \in \caly_1$ and $y_3 \in \caly_3 \setminus \caly_2$,
\begin{align*}
&(C_1 \parallel (C_2 \hchoice{p} C_3)) (x, (y_1, y_3)) \\
=&C_1 (x, y_1)(C_2 \hchoice{p} C_3) (x, y_3) &\text{(by definition of $\parallel$)}\\
=&C_1 (x, y_1)((1-p)C_3 (x, y_3)) &\text{(by definition of $\hchoice{p}$)}\\
=&(1-p) (C_1 \parallel C_2) (x, (y_1, y_3)) &\text{(by definition of $\parallel$)}\\
=&((C_1 \parallel C_2) \hchoice{p} (C_1 \parallel C_3) )(x, (y_1, y_3)) &\text{(by definition of $\hchoice{p}$)}
\end{align*}

$\bullet$ ($\vchoiceop{p}$ over $\hchoiceop{p}$) For all $x \in \calx$ and $y_1 \in \caly_1$,
\begin{align*}
&(C_1 \vchoice{p} (C_2 \hchoice{q} C_3)) (x,(y_1, 1)) \\
=&pC_1 (x,y_1) &\text{(by def. of $\vchoice{p}$)}\\
=&q(pC_1 (x,y_1))+(1-q)(pC_1 (x,y_1)) &\text{(rearranging)}\\
=&q(C_1 \vchoice{p} C_2) (x,(y_1,1))+(1-q)(C_1 \vchoice{p} C_3) (x,(y_1,1)) &\text{(by def. of $\vchoice{p}$)}\\
=&((C_1 \vchoice{p} C_2) \hchoice{q}(C_1 \vchoice{p} C_3)) (x,(y_1,1)) &\text{(by def. of $\hchoice{p}$)}\\
\end{align*}

For all $x \in \calx$ and $y' \in \caly_2 \cap \caly_3$,
\begin{align*}
&(C_1 \vchoice{p} (C_2 \hchoice{q} C_3)) (x,(y', 2)) \\
=&(1-p)(C_2 \hchoice{q} C_3) (x,y') &\text{(by def. of $\vchoice{p}$)}\\
=&(1-p)(qC_2 (x,y')+(1-q)C_3 (x,y')) &\text{(by def. of $\hchoice{p}$)}\\
=&q(1-p) C_2(x,y')+(1-q)(1-p)C_3 (x,y') &\text{(rearranging)}\\
=&q(C_1 \vchoice{p} C_2) (x,(y',2))+(1-q)(C_1 \vchoice{p} C_3) (x,(y',2)) &\text{(by def. of $\vchoice{p}$)}\\
=&((C_1 \vchoice{p} C_2) \hchoice{q}(C_1 \vchoice{p} C_3)) (x,(y',2)) &\text{(by def. of $\hchoice{p}$)}\\
\end{align*}

For all $x \in \calx$ and $y_2 \in \caly_2 \setminus \caly_3$,
\begin{align*}
&(C_1 \vchoice{p} (C_2 \hchoice{q} C_3)) (x,(y_2, 2)) \\
=&(1-p)(C_2 \hchoice{q} C_3) (x,y_2) &\text{(by def. of $\vchoice{p}$)}\\
=&(1-p)(qC_2 (x,y_2)) &\text{(by def. of $\hchoice{p}$)}\\
=&q(1-p) C_2(x,y_2) &\text{(rearranging)}\\
=&q(C_1 \vchoice{p} C_2) (x,(y_2,2)) &\text{(by def. of $\vchoice{p}$)}\\
=&((C_1 \vchoice{p} C_2) \hchoice{q}(C_1 \vchoice{p} C_3)) (x,(y_2,2)) &\text{(by def. of $\hchoice{p}$)}\\
\end{align*}

For all $x \in \calx$ and $y_3 \in \caly_3 \setminus \caly_2$,
\begin{align*}
&(C_1 \vchoice{p} (C_2 \hchoice{q} C_3)) (x,(y_3, 2)) \\
=&(1-p)(C_2 \hchoice{q} C_3) (x,y_3) &\text{(by def. of $\vchoice{p}$)}\\
=&(1-p)((1-q)C_3 (x,y_3)) &\text{(by def. of $\hchoice{p}$)}\\
=&(1-q)(1-p)C_3 (x,y_3) &\text{(rearranging)}\\
=&q(1-q)(C_1 \vchoice{p} C_3) (x,(y_3,2)) &\text{(by def. of $\vchoice{p}$)}\\
=&((C_1 \vchoice{p} C_2) \hchoice{q}(C_1 \vchoice{p} C_3)) (x,(y_3,2)) &\text{(by def. of $\hchoice{p}$)}\\
\end{align*}
\qed 
\end{proof}

\rescascparallel*
\begin{proof}
 For all $x \in \calx$, $z_1 \in \calz_1$ and $z_2 \in \calz_2$,
\begin{align*}
&((C_1 D_1)\parallel (C_2D_2))(x,(z_1, z_2))\\
=&(C_1D_1)(x, z_1)(C_2D_2)(x, z_2) &\text{(by def. of $\parallel$)}\\
=&\left( \sum_{y_1 \in \caly_1} C_1(x,y_1)D_1(y_1,z_1)\right)\left( \sum_{y_2 \in \caly_2} C_2(x,y_2)D_2(y_2,z_2)\right)&\text{(by matrix mult.)}\\
=& \sum_{y_1 \in \caly_1}\sum_{y_2 \in \caly_2} C_1(x,y_1)C_2(x,y_2)D_1(y_1,z_1) D_2(y_2,z_2)&\text{(rearranging)}\\
=& \sum_{y_1 \in \caly_1}\sum_{y_2 \in \caly_2} (C_1 \parallel C_2)(x, (y_1, y_2)) D^{\parallel}((y_1, y_2), (z_1, z_2))&\text{(by def. of $\parallel$,  $D^{\parallel}$)}\\
=&((C_1 \parallel C_2)D^{\parallel})(x,(z_1, z_2)) &\text{(by matrix mult.)}
\end{align*}
\qed
\end{proof}

\rescascvchoice*
\begin{proof}
For all $x \in \calx$ and $z_1 \in \calz_1$, 
\begin{align*}
&((C_1 D_1)\vchoice{p} (C_2D_2))(x,(z_1,1))\\
=&p(C_1D_1)(x, z_1)&\text{(by def. of $\vchoice{p}$)}\\
=&p \sum_{y_1 \in \caly_1} C_1(x,y_1)D_1(y_1,z_1)&\text{(by matrix mult.)}\\
=&\sum_{y_1 \in \caly_1} (pC_1(x,y_1))D_1(y_1,z_1)&\text{(by matrix mult.)}\\
=&\sum_{y_1 \in \caly_1} (C_1 \vchoice{p} C_2)(x,(y_1, 1))D_1(y_1,z_1)&\text{(by def. of $\vchoice{p}$)}\\
=&\sum_{(y,i) \in \caly_1 \sqcup \caly_2} (C_1 \vchoice{p} C_2)(x,(y_i, i))D^{\vchoiceop{}}((y_i,i),(z_1, 1))&\text{(by def. of $D^{\vchoiceop{}}$)}\\
=&((C_1 \vchoice{p} C_2)D^{\vchoiceop{}})(x,(z_1, 1)) &\text{(by matrix mult.)}
\end{align*}
Similarly, for all $x \in \calx$ and $z_2 \in \calz_2$,
\begin{align*}
&((C_1 D_1)\vchoice{p} (C_2D_2))(x,(z_2,2))\\
=&(1-p)(C_2D_2)(x, z_2)&\text{(by def. of $\vchoice{p}$)}\\
=&(1-p) \sum_{y_2 \in \caly_2} C_2(x,y_2)D_2(y_2,z_2)&\text{(by matrix mult.)}\\
=&\sum_{y_2 \in \caly_2} ((1-p)C_2(x,y_2))D_2(y_2,z_2)&\text{(by matrix mult.)}\\
=&\sum_{y_2 \in \caly_2} (C_1 \vchoice{p} C_2)(x,(y_2, 2))D_2(y_2,z_2)&\text{(by def. of $\vchoice{p}$)}\\
=&\sum_{(y,i) \in \caly_1 \sqcup \caly_2} (C_1 \vchoice{p} C_2)(x,(y_i, i))D^{\vchoiceop{}}((y_i,i),(z_2, 2))&\text{(by def. of $D^{\vchoiceop{}}$)}\\
=&((C_1 \vchoice{p} C_2)D^{\vchoiceop{}})(x,(z_2, 2)) &\text{(by matrix mult.)}
\end{align*}
\qed
\end{proof}

\rescaschchoice*

\begin{proof}
For all $x \in \calx$ and $z \in \calz$,
\begin{align*}
&((C_1D)\hchoice{p}(C_2D))(x,z)\\
=&p(C_1D)(x,z)+(1-p)(C_2D)(x,z) &\text{(by def. of $\hchoiceop{p}$)}\\
=&p \sum_{y \in \caly} C_1 (x,y)D(y,z) + (1-p)\sum_{y \in \caly} C_2 (x,y)D(y,z) &\text{(by matrix mult.)} \\
=&\sum_{y \in \caly} (pC_1 (x,y)+(1-p)C_2(x,y))D(y,z)  &\text{(rearranging)} \\
=&\sum_{y \in \caly} (C_1 \hchoice{p} C_2)(x,y)D(y,z)  &\text{(by def. of $\hchoiceop{p}$)} \\
=&((C_1 \hchoice{p} C_2)D)(x,z) &\text{(by matrix mult.)} \\
\end{align*}
\end{proof}

\subsection{Proofs of Section~\ref{sec:leakage}}

\resuplowparallel*
\begin{proof}
$\bullet$ (Lower bound) For all $\pi \in \dist \calx$ and gain functions $g$,
\begin{align*}
&V_g\hyperc{\pi}{C_1 \parallel C_2} \\
 =&\sum_{y_1 \in \caly_1} \sum_{y_2 \in \caly_2} \max\limits_{w \in \mathcal{W}} \sum_{x \in \mathcal{X}} (C_1 \parallel C_2) (x,(y_1,y_2))  g(w,x)  \pi(x) &\text{(by def. of $V_g$)}\\
 =&\sum_{y_1 \in \caly_1} \sum_{y_2 \in \caly_2} \max\limits_{w \in \mathcal{W}} \sum_{x \in \mathcal{X}} C_1 (x,y_1)  C_2 (x,y_2)  g(w,x)  \pi(x) &\text{(by def. of $\parallel$)}\\
 \geq&\sum_{y_1 \in \caly_1} \max\limits_{w \in \mathcal{W}} \sum_{y_2 \in \caly_2} \sum_{x \in \mathcal{X}} C_1 (x,y_1)  C_2 (x,y_2)  g(w,x)  \pi(x) &\text{(taking \qm{max} out)}\\
 =&\sum_{y_1 \in \caly_1} \max\limits_{w \in \mathcal{W}} \sum_{x \in \mathcal{X}} C_1 (x,y_1)  g(w,x)  \pi(x)  \sum_{y_2 \in \caly_2}  C_2 (x,y_2) &\text{(rearranging)}\\
 =&\sum_{y_1 \in \caly_1} \max\limits_{w \in \mathcal{W}} \sum_{x \in \mathcal{X}} C_1 (x,y_1)  g(w,x)  \pi(x) &\text{($C_2$ is a channel)}\\
 =&V_g\hyperc{\pi}{C_1} &\text{(by def. of $V_g$)}
\end{align*}
The proof that $V_g\hyperc{\pi}{C_1 \parallel C_2} \geq V_g\hyperc{\pi}{C_2}$ is analogous.\\

$\bullet$ (Upper bound) For all $\pi \in \dist \calx$, and gain functions $g$,
\begin{align*}
&V_g\hyperc{\pi}{C_1 \parallel C_2} \\
 =&\sum_{y_2 \in \caly_2} \sum_{y_1 \in \caly_1} \max\limits_{w \in \mathcal{W}} \sum_{x \in \mathcal{X}} (C_1 \parallel C_2) (x,(y_1,y_2))  g(w,x)  \pi(x) &\text{(by def. of $V_g$)}\\
 =&\sum_{y_2 \in \caly_2} \sum_{y_1 \in \caly_1} \max\limits_{w \in \mathcal{W}} \sum_{x \in \mathcal{X}} C_1 (x,y_1)  C_2 (x,y_2)  g(w,x)  \pi(x) &\text{(by def. of $\parallel$)}\\
  =&\sum_{y_2 \in \caly_2} \sum_{y_1 \in \caly_1} \max\limits_{w \in \mathcal{W}} \sum_{x \in \calx'} C_1 (x,y_1)  C_2 (x,y_2)  g(w,x)  \pi(x) \\&\text{(if $x \not\in \calx'$,$g(w,x)  \pi(x)=0$  )}\\
  \leq&\sum_{y_2 \in \caly_2} \sum_{y_1 \in \caly_1} \max\limits_{w \in \mathcal{W}} \sum_{x \in \mathcal{X}'} C_1 (x,y_1)\left(\max\limits_{x' \in \calx'} C_2(x',y_2) \right)  g(w,x)  \pi(x)\\ &\text{( for all $x \in \calx'$, $C_2(x,y_2) \leq \max\limits_{x' \in \calx'} C_2(x',y_2)$)}\\
   =&\sum_{y_2 \in \caly_2} \max\limits_{x' \in \calx'} C_2(x',y_2)  \sum_{y_1 \in \caly_1} \max\limits_{w \in \mathcal{W}} \sum_{x \in \mathcal{X}'} C_1 (x,y_1) g(w,x)  \pi(x) &\text{(rearranging)}\\
  =&\sum_{y_2 \in \caly_2} \max\limits_{x \in \calx'} C_2(x,y_2)   V_g\hyperc{\pi}{C_1} &\text{(by def. of $V_g$)}\\
 =&V_g\hyperc{\pi}{C_1}  \sum_{y_2 \in \caly_2} \max\limits_{x \in \calx'} C_2(x,y_2)  
\end{align*}
The proof that $V_g\hyperc{\pi}{C_1 \parallel C_2} \leq V_g \hyperc{\pi}{C_2}  \sum_{y_1 \in \caly_1} \max\limits_{x \in \calx'} C_1(x,y_1) $ is analogous.
\qed
\end{proof}

\reseqvchoice*
\begin{proof}
For all $\pi \in \dist \calx$, gain functons $g$ and $p \in [0,1]$,
\begin{align*}
 &V_g\hyperc{\pi}{C_1 \vchoice{p} C_2} \\
 =&\sum_{(y,i) \in \caly_1 \sqcup \caly_2} \max\limits_{w \in \mathcal{W}} \sum_{x \in \mathcal{X}} (C_1 \vchoice{p} C_2) (x,(y,i))  g(w,x)  \pi(x) &\text{(by def. of $V_g$)}\\
 =&\sum_{y \in \caly_1}\max\limits_{w \in \mathcal{W}} \sum_{x \in \mathcal{X}} (C_1 \vchoice{p} C_2) (x,(y,1))  g(w,x)  \pi(x)\\
 &+\sum_{y \in \mathcal{Y}_2}\max\limits_{w \in \mathcal{W}} \sum_{x \in \mathcal{X}} (C_1 \vchoice{p} C_2) (x,(y,2))  g(w,x)  \pi(x) &\text{(separating $\caly_1$ and $\caly_2$)}\\
 =&\sum_{y \in \mathcal{Y}_1}\max\limits_{w \in \mathcal{W}} \sum_{x \in \mathcal{X}} p  C_1 (x,y) g(w,x)  \pi(x) \\
 &+\sum_{y \in \mathcal{Y}_2}\max\limits_{w \in \mathcal{W}} \sum_{x \in \mathcal{X}} (1-p)  C_2 (x,y) g(w,x)  \pi(x) &\text{(by def. of $\vchoiceop{p}$)}\\
 =&p  V_g\hyperc{\pi}{C_1}+(1-p)  V_g\hyperc{\pi}{C_2} &\text{(by def. of $V_g$)}\\
\end{align*}
\qed
\end{proof}

\resuplowhchoice*
\begin{proof}
$\bullet$ (Lower bound) For all $\pi \in \calx$, gain functions $g$ and $p \in [0,1]$,
\begin{align*}
 &V_g\hyperc{\pi}{C_1 \hchoice{p} C_2} \\
 =&\sum_{y \in \caly_1 \cup \caly_2} \max\limits_{w \in \mathcal{W}} \sum_{x \in \mathcal{X}} (C_1 \hchoice{p} C_2) (x,y)  g(w,x)  \pi(x) &\text{(by def. of $V_g$)}\\
 =&\sum_{y \in \caly_1}\max\limits_{w \in \mathcal{W}} \sum_{x \in \mathcal{X}} (C_1 \hchoice{p} C_2) (x,y)  g(w,x)  \pi(x)\\
 &+\sum_{y \in \caly_2 \setminus \caly_1}\max\limits_{w \in \mathcal{W}} \sum_{x \in \calx}(C_1 \hchoice{p} C_2) (x,y)  g(w,x)  \pi(x) &\text{(rearranging)}\\
 \geq&\sum_{y \in \caly_1}\max\limits_{w \in \mathcal{W}} \sum_{x \in \mathcal{X}} (C_1 \hchoice{p} C_2) (x,y)  g(w,x)  \pi(x) &\text{(each sum is nonegative)}\\
  \geq&\sum_{y \in \caly_1}\max\limits_{w \in \mathcal{W}} \sum_{x \in \mathcal{X}} p  C_1 (x,y)  g(w,x)  \pi(x) &\text{(by def. of $\hchoiceop{p}$)}\\
 =&p  V_g\hyperc{\pi}{C_1} &\text{(by def. of $V_g$)}
\end{align*}
The proof that $(1-p)  V_g\hyperc{\pi}{C_2} \leq V_g\hyperc{\pi}{C_1 \hchoice{p} C_2}$ is similar.

$\bullet$ (Upper bound) For all $\pi \in \dist \calx$, gain functons $g$ and $p \in [0,1]$,
\begingroup
\allowdisplaybreaks
\begin{align*}
 &V_g\hyperc{\pi}{C_1 \hchoice{p} C_2} \\
 =&\sum_{y \in \caly_1 \cup \caly_2} \max\limits_{w \in \mathcal{W}} \sum_{x \in \mathcal{X}} (C_1 \hchoice{p} C_2) (x,y)  g(w,x)  \pi(x) &\text{(by def. of $V_g$)}\\
 =&\sum_{y \in \caly_1 \cap \caly_2}\max\limits_{w \in \mathcal{W}} \sum_{x \in \mathcal{X}} (C_1 \hchoice{p} C_2) (x,y)  g(w,x)  \pi(x)\\
 &+\sum_{y \in \mathcal{Y}_1 \setminus \caly_2}\max\limits_{w \in \mathcal{W}} \sum_{x \in \mathcal{X}} (C_1 \hchoice{p} C_2) (x,y)  g(w,x)  \pi(x)\\
 &+\sum_{y \in \mathcal{Y}_2 \setminus \caly_1}\max\limits_{w \in \mathcal{W}} \sum_{x \in \mathcal{X}} (C_1 \hchoice{p} C_2) (x,y)  g(w,x)  \pi(x)
 &\text{(rearranging)}\\
 =&\sum_{y \in \caly_1 \cap \caly_2}\max\limits_{w \in \mathcal{W}} \sum_{x \in \mathcal{X}} (pC_1(x,y)+ (1-p)C_2 (x,y))  g(w,x)  \pi(x)\\
 &+\sum_{y \in \mathcal{Y}_1 \setminus \caly_2}\max\limits_{w \in \mathcal{W}} \sum_{x \in \mathcal{X}} pC_1(x,y)  g(w,x)  \pi(x)\\
 &+\sum_{y \in \mathcal{Y}_2 \setminus \caly_1}\max\limits_{w \in \mathcal{W}} \sum_{x \in \mathcal{X}} (1-p)C_2 (x,y)  g(w,x)  \pi(x)
 &\text{(by def. of $\hchoiceop{p}$)}\\
 \leq&\sum_{y \in \caly_1 \cap \caly_2}\max\limits_{w \in \mathcal{W}} \sum_{x \in \mathcal{X}} pC_1(x,y)  g(w,x)  \pi(x)\\
 &+\sum_{y \in \caly_1 \cap \caly_2}\max\limits_{w \in \mathcal{W}} \sum_{x \in \mathcal{X}}(1-p)C_2 (x,y)   g(w,x)  \pi(x)\\
 &+\sum_{y \in \mathcal{Y}_1 \setminus \caly_2}\max\limits_{w \in \mathcal{W}} \sum_{x \in \mathcal{X}} pC_1(x,y)  g(w,x)  \pi(x)\\
 &+\sum_{y \in \mathcal{Y}_2 \setminus \caly_1}\max\limits_{w \in \mathcal{W}} \sum_{x \in \mathcal{X}} (1-p)C_2 (x,y)  g(w,x)  \pi(x)
 &\text{(distributing)}\\
 =&\sum_{y \in \mathcal{Y}_1}\max\limits_{w \in \mathcal{W}} \sum_{x \in \mathcal{X}} pC_1(x,y)  g(w,x)  \pi(x)\\
 &+\sum_{y \in \mathcal{Y}_2}\max\limits_{w \in \mathcal{W}} \sum_{x \in \mathcal{X}} (1-p)C_2 (x,y)  g(w,x)  \pi(x)
 &\text{(rearranging)}\\
 =&p  V_g\hyperc{\pi}{C_1}+(1-p)  V_g\hyperc{\pi}{C_2} &\text{(by def. of $V_g$)}
\end{align*}
\qed
\endgroup
\end{proof}

\resoperorder*

\begin{proof}
This proof relies on theorems \ref{theorem:parineq}, \ref{theorem:reseqvchoice} and \ref{theorem:hineq}.
 The fact that $ V_g\hyperc{\pi}{C_1 \vchoice{p} C_2}\geq V_g\hyperc{\pi}{C_1 \hchoice{p} C_2}$ is immediate from theorems  \ref{theorem:reseqvchoice} and \ref{theorem:hineq}. 
To see that $V_g\hyperc{\pi}{C_1 \parallel C_2} \geq V_g\hyperc{\pi}{C_1 \vchoice{p} C_2}$, notice that, for any gain function $g$ and prior $\pi \in \dist \calx$,
\begin{align*}
&V_g\hyperc{\pi}{C_1 \parallel C_2}\\
\geq& \max (V_g \hyperc{\pi}{C_1} ,V_g \hyperc{\pi}{C_2}) &\text{(by theorem \ref{theorem:parineq})}\\
\geq& p  V_g\hyperc{\pi}{C_1}+(1-p)  V_g\hyperc{\pi}{C_2} &\text{($V_g\hyperc{\pi}{C_i} \leq \max (V_g \hyperc{\pi}{C_1} ,V_g \hyperc{\pi}{C_2})$)}\\
=&V_g\hyperc{\pi}{C_1 \vchoice{p} C_2}&\text{(by theorem \ref{theorem:reseqvchoice})} 
\end{align*}
\qed
\end{proof}

\ressecvchoice*
\begin{proof}
We have, for all $p\in (0,1]$, $\pi \in \dist \calx$ and gain functions $g$ 
\begin{align*}
     &V_g\hyperc{\pi}{C_1}\leq V_g\hyperc{\pi}{C_2}\\
     \Leftrightarrow &p  V_g\hyperc{\pi}{C_1}\leq p  V_g\hyperc{\pi}{C_2} &\text{($p>0$)}\\
     \Leftrightarrow &\forall C.\quad p  V_g\hyperc{\pi}{C_1}+ (1-p)  V_g\hyperc{\pi}{C}\\
     &\leq p  V_g\hyperc{\pi}{C_2}+ (1-p)  V_g\hyperc{\pi}{C}&\text{(adding in both sides)}\\
     \Leftrightarrow &\forall C.\quad V_g\hyperc{\pi}{C_1 \vchoice{p} C} \leq V_g\hyperc{\pi}{C_2 \vchoice{p} C}  &\text{(from theorem \ref{theorem:reseqvchoice})}
\end{align*}
\qed
\end{proof}

\ressecparallel*
\begin{proof}

$\bullet$ (First equivalence) $(\Rightarrow)$ We will prove the contrapositive.

Let $g$ be a gain function. Assume that there is a probability distribution $\pi \in \dist \calx$ and a compatible channel $C: \calx \times \mathcal{Z}\rightarrow \mathbb{R}$ to
$C_1$ and $C_2$, such that
$$V_g\hyperc{\pi}{C_1 \parallel C} > V_g\hyperc{\pi}{C_2 \parallel C}.$$

This means, by definition, that
\begin{multline*}
\sum_{z \in \mathcal{Z}}\sum_{y_1 \in \mathcal{Y}_1} \max\limits_{w \in \mathcal{W}} \bigg(\sum_{x \in \mathcal{X}} C_1(x,y_1) C(x,z)  g(w,x)  \pi(x)  \bigg) > \\
\sum_{z \in \mathcal{Z}}\sum_{y_2 \in \mathcal{Y}_2} \max\limits_{w \in \mathcal{W}}\bigg(\sum_{x \in \mathcal{X}} C_2(x,y_2) C(x,z) g(w,x)   \pi(x)  \bigg).
\end{multline*}

For this inequality to hold, it must be true that, for some $z' \in \mathcal{Z}$,

\begin{multline*}
\sum_{y_1 \in \mathcal{Y}_1} \max\limits_{w \in \mathcal{W}} \bigg(\sum_{x \in \mathcal{X}} C_1(x,y_1) C(x,z')  g(w,x)  \pi(x)  \bigg) > \\
\sum_{y_2 \in \mathcal{Y}_2} \max\limits_{w \in \mathcal{W}}\bigg(\sum_{x \in \mathcal{X}} C_2(x,y_2) C(x,z') g(w,x)   \pi(x)  \bigg).
\end{multline*}

The above inequality also implies that $\sum_{x' \in \mathcal{X}}(  C(x',z')   \pi(x')  )>0$, otherwise the LHS couldn't possibly be strictly greater than the RHS. We can therefore divide both sides by this quantity. Being a positive constant, we can put it \qm{inside} the max in both sides, yielding:

\begin{align*}
\sum_{y_1 \in \mathcal{Y}_1} \max\limits_{w \in \mathcal{W}}\sum_{x \in \calx}\bigg(  C_1(x,y_1)  g(w,x) \frac{ C(x,z') \pi(x)}{\sum_{x' \in \mathcal{X}}(  C(x',z')   \pi(x')  )}  \bigg) > \\
\sum_{y_1 \in \mathcal{Y}_2} \max\limits_{w \in \mathcal{W}}\sum_{x \in \calx} \bigg( C_2(x,y_2)  g(w,x) \frac{C(x,z') \pi(x)}{\sum_{x' \in \mathcal{X}}(  C(x',z')   \pi(x')  )}  \bigg)
\end{align*}

We now define $\pi': \mathcal{X} \rightarrow \mathbb{R}$ by 
$$\pi '(x) =\frac{C(x,z') \pi(x)}{\sum_{x' \in \mathcal{X}}(  C(x',z')   \pi(x')  )}$$

It is clear that $\pi ' \in \mathbb{D} \mathcal{X}$, for it is a non-negative function whose values sum to 1. Therefore, the above inequality reduces to
$$\sum_{y_1 \in \mathcal{Y}_1} \max\limits_{w \in \mathcal{W}} \sum_{x \in \mathcal{X}} C_1(x,y_1) g(w,x)  \pi'(x) >\sum_{y_2 \in \mathcal{Y}_2} \max\limits_{w \in \mathcal{W}} \sum_{x \in \mathcal{X}} C_2(x,y_2)  g(w,x) \pi'(x)  .$$

That is, $V_g\hyperc{\pi'}{C_1} > V_g \hyperc{\pi'}{C_2}$, which completes the proof.

$(\Leftarrow)$ Assuming that the right hand side holds, and recalling proposition \ref{prop:null},  we have, for all $\pi \in \mathbb{D}\mathcal{X}$
$$V_{g}\hyperc{\pi}{C_1}=V_{g}\hyperc{\pi}{C_1 \parallel \nullchannel} \leq V_{g}\hyperc{\pi}{C_2 \parallel \nullchannel}=V_{g}\hyperc{\pi}{C_2}.$$\\

$\bullet$ (Second equivalence) The proof is very similar to the one above

$(\Rightarrow)$ We will, again, prove the contrapositive.

Let $\pi$ be a prior distribution. Assume that there is a gain function $g$ and a compatible channel $C: \calx \times \mathcal{Z}\rightarrow \mathbb{R}$ to
$C_1$ and $C_2$, such that
$$V_g\hyperc{\pi}{C_1 \parallel C} > V_g\hyperc{\pi}{C_2 \parallel C}.$$

Similarly to the proof above, this implies that $\exists z' \in \mathcal{Z}$ such that

\begin{multline*}
\sum_{y_1 \in \mathcal{Y}_1} \max\limits_{w \in \mathcal{W}} \bigg(\sum_{x \in \mathcal{X}} C_1(x,y_1) C(x,z')  g(w,x)  \pi(x)  \bigg) > \\
\sum_{y_2 \in \mathcal{Y}_2} \max\limits_{w \in \mathcal{W}}\bigg(\sum_{x \in \mathcal{X}} C_2(x,y_2) C(x,z') g(w,x)   \pi(x)  \bigg).
\end{multline*}

We now define another gain function $g'(w,x)=C(x,z') g(w,x)$. Substituting this value in the inequality above, we get

$$\sum_{y_1 \in \mathcal{Y}_1} \max\limits_{w \in \mathcal{W}} \sum_{x \in \mathcal{X}} C_1(x,y_1) g' (w,x)  \pi(x) >
\sum_{y_2 \in \mathcal{Y}_2} \max\limits_{w \in \mathcal{W}} \sum_{x \in \mathcal{X}} C_2(x,y_2)  g'(w,x) \pi(x) .$$

That is, $V_{g'}\hyperc{\pi}{C_1} > V_{g'} \hyperc{\pi}{C_2}$, which completes the proof.
\newline

$(\Leftarrow)$ Assuming that the right hand side of the equivalence holds for a given $\pi$, and recalling proposition \ref{prop:null},  we have, for all gain functions $g$:
$$V_{g}\hyperc{\pi}{C_1}=V_{g}\hyperc{\pi}{C_1 \parallel \nullchannel} \leq V_{g}\hyperc{\pi}{C_2 \parallel \nullchannel}=V_{g}\hyperc{\pi}{C_2}.$$
\qed
\end{proof}

\ressechchoicei*
\begin{proof}
 Let $\calx=\{1,2\}$, $\pi_u = (0.5,0.5)$ $g_{id}: \calx \times \calx \rightarrow [0,1]$ 
such that $g_{id}(x,x')=1$ if $x=x'$ and $0$ otherwise. We divide the proof in two cases:

\paragraph{Case 1: $p \leq 0.5$}
Take
\[
C_1=
\begin{bmatrix}
    0.5       & 0.5   \\
    0.5       & 0.5   \\
\end{bmatrix}
, \quad
C_2=
\begin{bmatrix}
    1       & 0   \\
    0       & 1   \\
\end{bmatrix}
, \quad
C=
\begin{bmatrix}
    0       & 1   \\
    1       & 0   \\
\end{bmatrix}
.
\]

Notice that 
$C_1$ is a null channel, and $C_2$ is a transparent channel. Thus, $\forall \pi \in \dist \calx, \forall g. \; V_{g}\hyperc{\pi}{C_1}  \leq  V_{g}\hyperc{\pi}{C_2}$.

However, we have
\[
C_1  \hchoice{p}  C=
\begin{bmatrix}
    \nicefrac{p}{2}       & 1-\nicefrac{p}{2}   \\
    1-\nicefrac{p}{2}       & \nicefrac{p}{2}   \\
\end{bmatrix}
, \quad
C_2  \hchoice{p}  C=
\begin{bmatrix}
    p       & 1-p   \\
    1-p       & p   \\
\end{bmatrix}
.
\]

Then, $V_{g_{id}}\hyperc{\pi_u}{C_1  \hchoice{p}  C} = 1- \nicefrac{p}{2}$ and $V_{g_{id}}\hyperc{\pi_u}{C_2  \hchoice{p}  C} = 1-p$. Therefore, $V_{g_{id}}\hyperc{\pi_u}{C_1  \hchoice{p}  C}>V_{g_{id}}\hyperc{\pi_u}{C_2  \hchoice{p}  C}$.

\paragraph{Case 2: $p >0.5$}

Take
\[
C_1=
\begin{bmatrix}
    \nicefrac{1}{2p}-\nicefrac{1}{2}  \;&\; \nicefrac{3}{2} - \nicefrac{1}{2p}   \\
    \nicefrac{3}{2} - \nicefrac{1}{2p} \;&\; \nicefrac{1}{2p}-\nicefrac{1}{2}  \\
\end{bmatrix}
, \quad
C_2=
\begin{bmatrix}
    1       & 0   \\
    0       & 1   \\
\end{bmatrix}
, \quad
C=
\begin{bmatrix}
    0       & 1   \\
    1       & 0   \\
\end{bmatrix}
.
\]

Because $C_2$ is a transparent channel, $\forall \pi \in \dist \calx, \forall g. \; V_{g}\hyperc{\pi}{C_1}  \leq  V_{g}\hyperc{\pi}{C_2}$.

We have
\[
C_1  \hchoice{p}  C=
\begin{bmatrix}
    \nicefrac{1}{2}-\nicefrac{p}{2} \;&\; \nicefrac{p}{2}+\nicefrac{1}{2}   \\
    \nicefrac{p}{2}+\nicefrac{1}{2} \;&\; \nicefrac{1}{2}-\nicefrac{p}{2}     \\
\end{bmatrix}
, \quad
C_2  \hchoice{p}  C=
\begin{bmatrix}
    p       & 1-p   \\
    1-p       & p   \\
\end{bmatrix}
.
\]

Thus, $V_{g_{id}}\hyperc{\pi_u}{C_1 \hchoice{p}  C} = \frac{1}{2}(p+1)$ and $V_{g_{id}}\hyperc{\pi_u}{C_2  \hchoice{p}  C} = p$. Hence $V_{g_{id}}\hyperc{\pi_u}{C_1  \hchoice{p}  C}>V_{g_{id}}  \hyperc{\pi_u}{C_2  \hchoice{p}  C}$.

\qed
\end{proof}
\ressechchoiceii*

\begin{proof}
For all $\pi \in \dist \calx$ and gain functions $g$,
\begin{align*}
    &V_g\hyperc{\pi}{C_1}\\
    =&V_g\hyperc{\pi}{C_1 \hchoice{p} C_1} &\text{(from proposition \ref{prop:idem})}\\
    \leq &V_g\hyperc{\pi}{C_2 \hchoice{p} C_1} &\text{(from assumption)}\\
    \leq &p  V_g\hyperc{\pi}{C_2}+(1-p)  V_g\hyperc{\pi}{C_1} &\text{(from theorem \ref{theorem:hineq})}
\end{align*}
Therefore, $V_g\hyperc{\pi}{C_1}\leq p  V_g\hyperc{\pi}{C_2}+(1-p)  V_g\hyperc{\pi}{C_1}$, which yields $V_g\hyperc{\pi}{C_1}\leq  V_g\hyperc{\pi}{C_2}$
\qed
\end{proof}

\subsection{Proofs of Section~\ref{sec:casestudy}}
In this section, we prove theorem \ref{theorem:bound}. First, we need some auxiliary results.

\begin{lemma} \label{lemma:crassoc}
Let $n$ be a positive integer and $\{A_i\}_{i \in \{0,1,...,n\}}$ a collection of compatible channels. If $q$, $p_{} \in [0,1]$ and are not both $0$, then
\begin{align*}
A_0 \hchoice{q}(A_1 \hchoice{p}(...\hchoice{p}
A_{n})&=((...(A_0 \hchoice{\frac{t_0}{t_1}} A_1)\hchoice{\frac{t_1}{t_2}} ...)\hchoice{t_{(n-1)}} A_{n}),\mbox{ for even $n$}\\
A_0 \hchoice{q}(A_1 \hchoice{p}(...\hchoice{q}
A_{n})&=((...(A_0 \hchoice{\frac{t_0}{t_1}} A_1)\hchoice{\frac{t_1}{t_2}} ...)\hchoice{t_{(n-1)}} A_{n}),\mbox{ for odd $n$}
\end{align*}
where 
\begin{align}
t_{2i}&=1-(1-q)^{i+1}(1-p_{})^i, \mbox{ and } \label{eq:tneven} \\
t_{(2i+1)}&=1-(1-q)^{i+1}(1-p_{})^{i+1}. \label{eq:tnodd}
\end{align}

\end{lemma}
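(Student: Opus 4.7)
The plan is to reduce the identity to a calculation of weights and argue by induction on $n$. The key observation is that, by the pointwise definition of $\hchoiceop{p}$, every iterated hidden choice of the same family of channels $A_0,\ldots,A_n$ yields a channel of the form $(x,y)\mapsto\sum_{i=0}^{n} w_i\, A_i(x,y)$ for some nonnegative weights $w_i$ summing to $1$ (the output set being $\bigcup_i \caly_i$). Two such nested expressions coincide as channels iff their weight vectors do, so the entire argument reduces to reading off the weights on each side and matching them. The base case $n=1$ is then immediate, since $t_0 = 1-(1-q)(1-p)^0 = q$, so both sides collapse to $A_0 \hchoice{q} A_1$.

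For the inductive step, I would first compute the LHS weights by unfolding the alternating nesting $A_0\hchoice{q}(A_1\hchoice{p}(A_2\hchoice{q}\cdots))$, multiplying the residual masses $(1-q)$ and $(1-p)$ at each nesting level. A case split on the parity of $i$ yields $w_0 = q$, $w_{2j} = q(1-q)^j(1-p)^j$ and $w_{2j+1} = p(1-q)^{j+1}(1-p)^j$ for non-terminal indices, and $w_n$ equal to the leftover residual mass $(1-q)^{\lceil n/2 \rceil}(1-p)^{\lfloor n/2 \rfloor}$. Using the definitions of $t_{2i}$ and $t_{2i+1}$, a short telescoping reduces these expressions to $w_0 = t_0$, $w_i = t_i - t_{i-1}$ for $0 < i < n$, and $w_n = 1 - t_{n-1}$.

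Next, I would handle the RHS by an inner induction on $k$, showing that the left-associated prefix $R_k := ((\ldots(A_0 \hchoice{t_0/t_1} A_1)\ldots)\hchoice{t_{k-1}/t_k} A_k)$ has weight $t_0/t_k$ on $A_0$ and $(t_i - t_{i-1})/t_k$ on $A_i$ for $0 < i \le k$. The step is direct: since $R_k = R_{k-1} \hchoice{t_{k-1}/t_k} A_k$, the existing weights get scaled by $t_{k-1}/t_k$ and $A_k$ picks up the residual $(t_k - t_{k-1})/t_k$. Because the lemma's RHS is $R_{n-1} \hchoice{t_{n-1}} A_n$, rescaling the $R_{n-1}$-weights by $t_{n-1}$ and assigning $1-t_{n-1}$ to $A_n$ yields exactly the LHS weight vector, concluding the proof.

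The main obstacle I expect is the combinatorial bookkeeping of the telescoping identities and the parity split on the LHS, which must be handled uniformly in $n$. A secondary point is verifying that the ratios $t_{k-1}/t_k$ used in the RHS are legitimate probabilities in $[0,1]$; this is exactly where the hypothesis ``$q$ and $p$ not both zero'' is needed, since $t_k=0$ would force $(1-q)(1-p)=1$, and otherwise the sequence $\{t_k\}$ is strictly increasing and bounded above by $1$.
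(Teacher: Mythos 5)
Your proof is correct, but it takes a genuinely different route from the paper's. The paper proves the lemma by structural induction on $n$ using the ternary associativity law derived from Proposition~2, namely $C_1 \hchoice{u}(C_2 \hchoice{v} C_3)=(C_1\hchoice{\nicefrac{u}{(u+v-uv)}}C_2)\hchoice{(u+v-uv)} C_3$: at each step it collapses the innermost pair $A_n \hchoice{p} A_{n+1}$ into a single channel, applies the induction hypothesis, re-expands, and applies the associativity identity once at the outermost level together with the recurrence for $t_n$. You instead linearize both sides into convex combinations $\sum_i w_i A_i$ (implicitly using the convention that $A_i(x,y)=0$ for $y\notin\caly_i$, under which the three-case definition of $\hchoiceop{p}$ becomes a single formula --- worth stating explicitly) and match coefficients: the left side's weights come from multiplying residual masses, the right side's from an inner induction on left-associated prefixes, and the telescoping identity $w_i=t_i-t_{i-1}$ ties them together. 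Your approach is more elementary and arguably more illuminating, since it exposes the $t_k$ as cumulative probability masses rather than as outputs of a recurrence; the paper's approach is shorter on the page and reuses machinery already established for the operator algebra. Two small imprecisions in your writeup, neither fatal: equality of weight vectors implies equality of channels but not conversely (you only need the forward direction, so drop the ``iff''), and $t_0=q$ can vanish when $q=0$ without forcing $(1-q)(1-p)=1$ --- what the hypothesis actually guarantees is $t_k\geq t_1=q+p-qp>0$ for all $k\geq 1$, which is exactly what is needed since $t_0$ appears only as a numerator; also the sequence is merely non-decreasing, not strictly increasing, when exactly one of $p,q$ is zero.
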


\begin{proof}
Firstly,  from proposition \ref{prop:associative} we deduce the following equivalence, for any probabilities $u,v \in [0,1]$ (with $u$ and $v$ not both $0$) and compatible channels $C_1$, $C_2$ and $C_3$:
\begin{equation} \label{eq:assochid}
C_1 \hchoice{u}(C_2 \hchoice{v} C_3)=(C_1\hchoice{\frac{u}{u+v-uv}}C_2)\hchoice{(u+v-uv)} C_3
\end{equation}

Now, we will proceed to the proof. We will prove by induction on $n$ on the set of  positive integers.

The case when $n=1$ is immediate . Let us supposed it is proven for $n\geq1$. Then, if $n$ is odd,
\begin{align*}
&A_0 \hchoice{q}(...\hchoice{p}(A_{n-1}\hchoice{q} (A_{n} \hchoice{p}A_{n+1}))) \\
=&A_0 \hchoice{q}(...\hchoice{p}(A_{n-1}\hchoice{q} A_{n}')...)) &\text{(let $A_{n}'{=}A_{n}{ \hchoice{p}}A_{n+1} 
$)} \\
=&((...(A_0 \hchoice{\frac{t_0}{t_1}}  ...)\hchoice{\frac{t_{(n-2)}}{t_{(n-1)}}}A_{n-1})\hchoice{t_{(n-1)}} A_n' &\text{(by ind. hyp.)}\\
=&((...(A_0 \hchoice{\frac{t_0}{t_1}}  ...)\hchoice{\frac{t_{(n-2)}}{t_{(n-1)}}}A_{n-1})\hchoice{t_{(n-1)}} (A_{n} \hchoice{p}A_{n+1}) &\text{($A_{n}'=A_{n} \hchoice{p}A_{n+1}$)}\\
=&((...(A_0 \hchoice{\frac{t_0}{t_1}}  ...)\hchoice{\frac{t_{(n-1)}}{t_n}} A_{n}) \hchoice{t_n}A_{n+1}  &\text{(by eq. \eqref{eq:assochid} and \eqref{eq:tnodd})}
\end{align*}
The proof for when $n$ is even is almost identical.
\qed
\end{proof}

\begin{lemma} \label{lemma:crlimit}
If $q$ and $p_{}$ are not both $0$, $\lim\limits_{i \rightarrow \infty} C_i$ exists.
\end{lemma}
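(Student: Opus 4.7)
The plan is to show that $\{C_i\}$ is a Cauchy sequence in the finite-dimensional (hence complete) space of real matrices of type $\mathcal{U}{\times}(\mathcal{D}{\cup}\mathcal{S})\rightarrow\reals$. Since the output sets of $I_dP_d^k$ (contained in $\mathcal{D}$) and $I_sP_s^k$ (contained in $\mathcal{S}$) are disjoint, the nested hidden choices defining $C_n$ reduce, directly from the definition of $\hchoiceop{r}$, to a plain weighted sum
\[
C_n \;=\; \sum_{k=0}^{n-1} \alpha_k\, I_dP_d^k \;+\; \sum_{k=1}^{n-1} \beta_k\, I_sP_s^k \;+\; \gamma_n\, I_sP_s^n,
\]
where $\alpha_k = q(1{-}q)^k(1{-}p)^k$, $\beta_k = p(1{-}q)^k(1{-}p)^{k-1}$, and $\gamma_n = (1{-}q)^n(1{-}p)^{n-1}$. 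The last summand carries the non-standard weight $\gamma_n$ rather than $\beta_n$, because the innermost hidden choice has no further nesting to absorb a final factor of $p$.

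Next I would compute the telescoping difference
\[
C_{n+1} - C_n \;=\; \alpha_n\, I_dP_d^n \;+\; (\beta_n-\gamma_n)\, I_sP_s^n \;+\; \gamma_{n+1}\, I_sP_s^{n+1},
\]
using $\beta_n - \gamma_n = -(1{-}q)^n(1{-}p)^n$. Since every entry of $I_dP_d^k$ and $I_sP_s^k$ lies in $[0,1]$, each of the three coefficients is bounded in absolute value by $(1{-}q)^n(1{-}p)^{n-1}$, so entrywise
\[
|C_{n+1}(x,y) - C_n(x,y)| \;\leq\; 3\,(1{-}q)^n(1{-}p)^{n-1}.
\]

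The hypothesis that $q$ and $p$ are not both zero gives $(1{-}q)(1{-}p) < 1$, so the bounds above sum to a convergent geometric tail. Hence $\{C_n(x,y)\}$ is Cauchy for every $(x,y)\in\mathcal{U}{\times}(\mathcal{D}{\cup}\mathcal{S})$, and therefore converges; taking the limit entrywise yields $\lim_{i\rightarrow\infty} C_i$. The main obstacle I expect is the bookkeeping for the innermost term: correctly identifying the non-standard weight $\gamma_n$ and verifying that the telescoping estimate still decays geometrically. Once that is in place, completeness of finite-dimensional Euclidean space closes the argument.
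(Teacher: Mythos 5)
Your proof is correct, and it reaches the same destination as the paper's — entrywise Cauchy convergence in the finite-dimensional space $\reals^{|\mathcal{U}\times(\mathcal{D}\cup\mathcal{S})|}$ — by a genuinely different route. The paper does not expand the nested choices into an explicit convex combination; instead it invokes its reassociation lemma (the analogue of your flattening, packaged as Lemma~\ref{lemma:crassoc}) to write any two terms $C_{m_1}, C_{m_2}$ with $m_1,m_2>M{+}1$ as $K_M\hchoice{t_{2M}}D_1$ and $K_M\hchoice{t_{2M}}D_2$ sharing the common prefix $K_M$, so that they can differ only on the residual mass $1-t_{2M}=(1{-}q)^{M+1}(1{-}p)^{M}$, giving $|C_{m_1}(u_j,y)-C_{m_2}(u_j,y)|\leq 2(1-t_{2M})$ directly. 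Your version instead computes the explicit weights $\alpha_k,\beta_k,\gamma_n$ (which I checked: they are correct, they sum to $1$, and the non-standard innermost weight $\gamma_n=(1{-}q)^n(1{-}p)^{n-1}$ is exactly right) and telescopes, bounding $|C_{n+1}(x,y)-C_n(x,y)|\leq 3(1{-}q)^n(1{-}p)^{n-1}$ and summing the geometric tail. Both arguments hinge on the same fact — the probability mass beyond depth $n$ decays geometrically because $(1{-}q)(1{-}p)<1$ when $p,q$ are not both zero — but yours is more self-contained (it needs no auxiliary associativity lemma, only the observation that nested hidden choices are zero-padded convex combinations), at the cost of the weight bookkeeping you flagged; the paper's buys economy by reusing machinery it must develop anyway for Theorem~\ref{theorem:bound}. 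One cosmetic remark: the disjointness of $\mathcal{D}$ and $\mathcal{S}$ is not actually needed for the flattening step, since a hidden choice is always a convex combination of the operands extended by zeros to the union of their output sets; disjointness only makes the resulting matrix easier to read off.
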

\begin{proof}
Each $C_i$ is a channel of type $\mathcal{U} \times (\mathcal{D} \cup \mathcal{S})$, and can thus be understood as an element of the set $\reals^{|\mathcal{U} \times (\mathcal{D} \cup \mathcal{S})|}$. 
Thus, if each entry of $C_i$ converges to a real value as $i \rightarrow \infty$, then $\lim\limits_{i \rightarrow \infty} C_i$ exists.

Having that in mind, we prove that, for all $j,k \in \{1,2,...,n\}$, $\{C_i(u_j,d_k)\}_{i \in \mathbb{N}^*}$ and 
$\{C_i(u_j,s_k)\}_{i \in \mathbb{N}^*}$ are Cauchy sequences in the reals.
We start by proving that $\{C_i(u_j,d_k)\}_{i \in \mathbb{N}^*}$ is a Cauchy sequence.

Let $\epsilon > 0$.
From equations \eqref{eq:tneven} and \eqref{eq:tnodd}, $\lim\limits_{i \rightarrow \infty} t_i=1$. Therefore,  
$\exists M \in \mathbb{N} \setminus \{0\}$ such that $i>M \implies 1-t_i< \nicefrac{\epsilon}{2}$.

Suppose $m_1,m_2 >M+1$. 
By lemma \ref{lemma:crassoc}, we have 
\begin{align*}
C_{m_1}=&(...(I_d \hchoice{\frac{t_0}{t_1}} I_sP_s)\hchoice{\frac{t_1}{t_2}} I_dP_d)\hchoice{\frac{t_2}{t_3}}...)\hchoice{\frac{t_{(2M-1)}}{t_{2M}}}I_dP_d^{M})\hchoice{t_{2M}} D_1\\
C_{m_2}=&(...(I_d \hchoice{\frac{t_0}{t_1}} I_sP_s)\hchoice{\frac{t_1}{t_2}} I_dP_d)\hchoice{\frac{t_2}{t_3}}...)\hchoice{\frac{t_{(2M-1)}}{t_{2M}}}I_dP_d^{M})\hchoice{t_{2M}} D_2
\end{align*}
where $D_i=I_sP_s^{M+1} \hchoice{p}( I_dP_d^{M+1}\hchoice{q}(...\hchoice{q} I_sP_s^{m_i})...)$ 
, for $i \in \{1,2\}$. 
The definition of hidden choice then gives us

\begin{align*}
C_{{m_1}} (u_j, d_k)=&t_{2M}((...(I_d \hchoice{\frac{t_0}{t_1}} ...)\hchoice{\frac{t_{(2M-1)}}{t_{2M}}}I_dP_d^{M})(u_j, d_k)+ (1-t_{2M}) D_1 (u_j, d_k)\\
C_{{m_2}} (u_j, d_k)=&t_{2M}((...(I_d \hchoice{\frac{t_0}{t_1}} ...)\hchoice{\frac{t_{(2M-1)}}{t_{2M}}}I_dP_d^{M})(u_j, d_k)+ (1-t_{2M}) D_2 (u_j, d_k)\\
\end{align*}
Thus, 
\begin{align*}
&|C_{{m_1}} (u_j, d_k) - C_{{m_2}} (u_j, d_k)|\\
=&|(1-t_{2M}) D_1 (u_j, d_k) - (1-t_{2M}) D_2 (u_j, d_k)| \\
\leq&|(1-t_{2M})D_1 (u_j, d_k)|+|(1-t_{2M})D_2 (u_j, d_k)|  &\text{($|a-b| \leq |a|+|b|$)}\\
\leq&|(1-t_{2M})|+|(1-t_{2M})|  &\text{($D_i (u_j, d_k) \leq 1$)}\\
<&\nicefrac{\epsilon}{2}+\nicefrac{\epsilon}{2} = \epsilon &\text{($2M>M$)} 
\end{align*}
The demonstration for $\{C_i (u_j,s_k)\}_{i \in \mathbb{N}}$ is almost identical.

Thus, we have established that for all $j,k \in \{1,2,...,n_c\}$, $\{C_i(u_j,d_k)\}_{i \in \mathbb{N}}$ and $\{C_i(u_j,s_k)\}_{i \in \mathbb{N}}$ are Cauchy sequences,
establishing the existence of $\lim\limits_{i \rightarrow \infty}C_i$
\qed
\end{proof}

\begin{lemma}\label{lemma:overpower}
Let $\pi \in \dist \calx$ for some finite set $\calx$ and $g$ be any gain function. Let $n \in \mathbb{N} \setminus \{0\}$ and $\{A_i\}_{i \in \{1, 2, ..., n\}} $ be a collection of  channels with input $\calx$ such that $$i<j \implies V_g\hyperc{\pi}{A_i}\geq V_g\hyperc{\pi}{A_j}$$

Let $\{p_i\}_{i \in \{1, ..., n-1\}}$ be a collection of real numbers in the interval $[0,1]$. Then, for any $n \in \mathbb{N}$,

$$V_g \hyperc{ \pi}{A_1 \hchoice{p_1}(A_2 \hchoice{p_2}(...\hchoice{p_{n-1}} (A_n)))} \leq V_g\hyperc{\pi}{A_1}$$
\end{lemma}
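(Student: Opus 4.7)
The plan is to prove the lemma by induction on $n$, relying on the upper bound for hidden choice established in Theorem \ref{theorem:hineq} together with the monotonicity hypothesis.

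For the base case $n=1$, the statement $V_g\hyperc{\pi}{A_1} \leq V_g\hyperc{\pi}{A_1}$ is trivial, so there is nothing to prove. For the inductive step, assume the claim holds for all sequences of length at most $n-1$, and consider a sequence $\{A_i\}_{i=1}^n$ satisfying the stated monotonicity in $V_g$, together with probabilities $\{p_i\}_{i=1}^{n-1}$. Define $B := A_2 \hchoice{p_2}(A_3 \hchoice{p_3}(\ldots \hchoice{p_{n-1}} A_n))$, so that the channel of interest is $A_1 \hchoice{p_1} B$.

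Since the subsequence $\{A_i\}_{i=2}^n$ still satisfies the monotonicity assumption (it is just a tail of the original), the induction hypothesis applied to it (with $A_2$ playing the role of the \emph{first} channel) yields $V_g\hyperc{\pi}{B} \leq V_g\hyperc{\pi}{A_2}$. Combined with the assumption $V_g\hyperc{\pi}{A_2} \leq V_g\hyperc{\pi}{A_1}$, we obtain $V_g\hyperc{\pi}{B} \leq V_g\hyperc{\pi}{A_1}$.

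Now applying the upper bound of Theorem \ref{theorem:hineq} to $A_1 \hchoice{p_1} B$, we get
\begin{align*}
V_g\hyperc{\pi}{A_1 \hchoice{p_1} B} &\leq p_1 V_g\hyperc{\pi}{A_1} + (1-p_1) V_g\hyperc{\pi}{B} \\
&\leq p_1 V_g\hyperc{\pi}{A_1} + (1-p_1) V_g\hyperc{\pi}{A_1} \\
&= V_g\hyperc{\pi}{A_1},
\end{align*}
which closes the induction. I do not expect any genuine obstacle here: the result is essentially a convex-combination argument. The only point requiring mild care is ensuring that the induction hypothesis applies cleanly to the tail $\{A_i\}_{i=2}^n$ with the shifted probabilities $\{p_i\}_{i=2}^{n-1}$, which it does since the statement is quantified uniformly over such probability choices.
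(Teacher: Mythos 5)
Your proof is correct and follows essentially the same route as the paper's: induction on $n$, applying the upper bound of Theorem~\ref{theorem:hineq} to peel off $A_1$ and the induction hypothesis to bound the tail by $V_g\hyperc{\pi}{A_2}$, then using $V_g\hyperc{\pi}{A_2}\leq V_g\hyperc{\pi}{A_1}$ to close the convex combination. No gaps; the remark about the induction hypothesis applying to the shifted tail is exactly the care the paper's proof implicitly takes.
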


\begin{proof}
We will prove by induction on the size of the collection . The result is obvious if $n=1$ 
Suppose it is true for $n \geq 1$ and let $\{A_i\}_{i \in \{1,...,n+1\}}$ be a collection of $n+1$ channels  with the property described in the Lemma. 
Then,

\begin{align*}
&V_g \hyperc{ \pi}{A_1 \hchoice{p_1}(A_2 \hchoice{p_2}(...\hchoice{p_{n}}(A_{n+1})))}\\
\leq& p_1 V_g \hyperc{ \pi}{A_1}+(1-p_1) V_g \hyperc{ \pi}{A_2 \hchoice{p_2}(...\hchoice{p_{n}}A_{n+1})} &\text{(by theorem \ref{theorem:hineq})}\\
\leq& p_1 V_g \hyperc{ \pi}{A_1}+(1-p_1) V_g \hyperc{ \pi}{A_2} &\text{(by the ind. hypothesis)}\\
\leq& p_1 V_g \hyperc{ \pi}{A_1}+(1-p_1) V_g \hyperc{ \pi}{A_1}=V_g \hyperc{ \pi}{A_1} &\text{($V_g \hyperc{\pi}{A_1}\geq V_g \hyperc{\pi}{A_2}$)}\\
\end{align*}
\qed
\end{proof}

\rescrowdsbounds*

\begin{proof}
From lemma \ref{lemma:crassoc}, we note that, for any $m'>m$, $C_{m'}$ can be written as
\begin{align}
C_{m'}=&((...(I_d \hchoice{\frac{t_0}{t_1}} I_sP_s)\hchoice{\frac{t_1}{t_2}} I_dP_d)\hchoice{\frac{t_2}{t_3}}...)\hchoice{\frac{t_{2m-1}}{t_{2m}}}I_dP_d^{m})\hchoice{t_{2m}} D \nonumber ,\\
C_{m'}=&K_m\hchoice{t_{2m}} D. \label{eq:decomposition}
\end{align}

Where $D=I_sP_s^{m+1} \hchoice{p}( I_dP_d^{m+1}\hchoice{q}(...\hchoice{q} I_sP_s^{m'})...)$. 
From equation \eqref{eq:decomposition} and theorem \ref{theorem:hineq}, we  derive that, for any $m'>m$,
$$
V_g\hyperc{\pi}{C_{m'}}\geq t_{2m}V_g\hyperc{\pi}{K_m}.
$$

For each $\pi$ and $g$, $V_g\hyperc{\pi}{C}$, being a sum of maximums of continuous functions, is a continuous function over $C$. 
Therefore, the equation above implies \eqref{eq:crowdslower} 

For the proof of the upper bound, theorem \ref{theorem:hineq} gives us
\begin{align*}
V_g\hyperc{\pi}{C_{m'}}\leq&t_{2m}V_g\hyperc{\pi}{K_m}+(1-t_{2m})V_g\hyperc{\pi}{D}.
\end{align*}

Notice that,  $\forall k, j \in \mathbb{N}^*$,  $I_sP_s^k \equiv I_dP_d^k$, and $I_sP_s^{k} \refines I_sP_s^{k+j}=(I_sP_s^{k})P_s^{j}$. 
Thus, lemma \ref{lemma:overpower} yields $V_g \hyperc{\pi}{D}\leq V_g\hyperc{\pi}{I_sP_s^{m+1}}$. Thus 
\begin{align*}
V_g\hyperc{\pi}{C_{m'}}\leq&t_{2m}V_g\hyperc{\pi}{K_m}+(1-t_{2m})V_g\hyperc{\pi}{I_sP_s^{m+1}},
\end{align*}
which, by continuity of $V_g$, implies the upper bound \eqref{eq:crowdsupper}.

Finally, to prove equation \eqref{eq:approx}, it suffices to notice that
\begin{align*}
&(1-t_{2m})V_g\hyperc{\pi}{I_sP_s^{m+1}}\\
\leq&1-t_{2m} &\text{($V_g\hyperc{\pi}{I_sP_s^{m+1}}\leq1$)}\\
=&(1-q)^{m+1}(1-p)^{m} &\text{(by equation \eqref{eq:tneven})}
\end{align*}
\qed
\end{proof}

\section{Further results}
\label{sec:further}
In this section, we explore in more detail the cases in which our operators do not distribute over each other.

\begin{proposition}[Non-distributivity]
\label{prop:nondist}
In general, the following statements do \emph{not} hold
\begin{align*}
 (C_1 \vchoice{p}(C_2 \parallel C_3))&\refines ( (C_1 \vchoice{p} C_2) \parallel (C_1 \vchoice{p} C_3)), \\
 ((C_1 \vchoice{p} C_2) \parallel (C_1 \vchoice{p} C_3) )&\refines ( C_1 \vchoice{p}(C_2 \parallel C_3)), \\
 (C_1 \hchoice{p}(C_2 \parallel C_3))&\refines ( (C_1 \hchoice{p} C_2) \parallel (C_1 \hchoice{p} C_3)), \\
 ((C_1 \hchoice{p} C_2) \parallel (C_1 \hchoice{p} C_3) )&\refines ( C_1 \hchoice{p}(C_2 \parallel C_3)), \\
 (C_1 \hchoice{p}(C_2 \vchoice{q} C_3))&\refines ( (C_1 \hchoice{p} C_2) \vchoice{q} (C_1 \hchoice{p} C_3)), \\
 ((C_1 \hchoice{p} C_2) \vchoice{q} (C_1 \hchoice{p} C_3) )&\refines ( C_1 \hchoice{p}(C_2 \vchoice{q} C_3)).
\end{align*}
\end{proposition}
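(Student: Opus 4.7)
The plan is to prove each of the six non-refinement claims by exhibiting an explicit counterexample. By the Coriaceous Theorem, $A \refines B$ fails whenever there exist $\pi$ and $g$ with $V_g\hyperc{\pi}{A} < V_g\hyperc{\pi}{B}$, so each case reduces to a single vulnerability computation. Throughout, I will work with the identity gain function $g_{id}$ and uniform priors on small input sets.

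For the first two refinements (visible choice against parallel), the main tool is the distributive law $A \parallel (B \vchoice{p} C) \eqbij (A \parallel B) \vchoice{p} (A \parallel C)$ combined with Theorem~\ref{theorem:reseqvchoice}; applying them twice yields
\[
V_g\hyperc{\pi}{(C_1 \vchoice{p} C_2) \parallel (C_1 \vchoice{p} C_3)} = p^2 V_{11} + p(1-p)(V_{12} + V_{13}) + (1-p)^2 V_{23},
\]
with $V_{ij} = V_g\hyperc{\pi}{C_i \parallel C_j}$, to be compared against $V_g\hyperc{\pi}{C_1 \vchoice{p}(C_2 \parallel C_3)} = p V_g\hyperc{\pi}{C_1} + (1-p) V_{23}$. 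For the first refinement, take $|\calx|=2$, $C_1$ null, and $C_2=C_3$ transparent: the two quantities become $1-p^2/2$ (the right-hand side of the refinement) and $1-p/2$ (the left-hand side), so the left-hand side is strictly less vulnerable for every $p \in (0,1)$ and the refinement fails. For the second refinement, take $|\calx|=4$, $C_1$ null, $C_2$ revealing the first bit of the input index, and $C_3$ the second; at $p=1/2$ the closed form yields $9/16$ while the visible-of-parallel yields $10/16$, so now the left-hand side of this refinement is the more vulnerable, again violating the required inequality.

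For the third and fourth refinements (hidden choice against parallel), I reuse these two examples but take the output sets of $C_1$ and of $C_2,C_3$ pairwise disjoint; by the observation in Section~\ref{sec:oper} that hidden and visible choice agree on disjoint outputs, the inequalities transfer unchanged. For the fifth and sixth refinements (hidden against visible choice), output-set overlap becomes essential, since on disjoint outputs both sides of either refinement always coincide. For the sixth refinement, take $|\calx|=2$ with $C_1$ transparent and $C_2$ anti-transparent on $\{y_a, y_b\}$, and $C_3$ the constant channel returning $y_a$: at $p=q=1/2$ direct computation gives vulnerability $7/8$ for $C_1 \hchoice{p}(C_2 \vchoice{q} C_3)$ and only $5/8$ for $(C_1 \hchoice{p} C_2) \vchoice{q} (C_1 \hchoice{p} C_3)$. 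For the fifth refinement, I arrange the reverse overlap: take $C_1$ transparent with output set $\{(y_1,1),(y_2,2)\} \subseteq \caly_2 \sqcup \caly_3$ and $C_2,C_3$ complementary transparent channels on $\{y_1,y_2\}$; then each $C_1 \hchoice{p} C_i$ reduces to visible choice and so remains maximally leaky, while the mixture $C_1 \hchoice{p}(C_2 \vchoice{q} C_3)$ loses information through the overlap, yielding $7/8 < 1$ at $p=q=1/2$.

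The hardest case is the second refinement, where the naive intuition that two observations cannot leak less than one is misleading: the two visible labels in $(C_1 \vchoice{p} C_2) \parallel (C_1 \vchoice{p} C_3)$ are chosen independently, so with probability $p^2$ both halves reveal only $C_1$'s output and the informativeness of $C_2 \parallel C_3$ is entirely lost. Making this effect outweigh the contributions from the other three label combinations forces the input set to have at least four elements so that $C_2$ and $C_3$ can be chosen as genuinely orthogonal bit-projections. A secondary subtlety is the fifth refinement, where the only available mechanism of separation is the non-standard overlap between $C_1$'s range and the tagged output of $C_2 \vchoice{q} C_3$; without such overlap the two sides always coincide.
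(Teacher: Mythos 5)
Your six counterexamples all check out, but you take a genuinely different route from the paper. The paper refutes the two visible-choice-versus-parallel refinements with a single three-input example and a direct claim that no post-processing matrix $D$ exists in either direction (left as a linear-system check); for the four hidden-choice cases it engineers overlapping output sets so that one side collapses to a null channel while the other becomes transparent, which gives maximal separation but needs a fresh example per direction. You instead reduce everything to a vulnerability comparison via the Coriaceous theorem: the closed form $p^2V_{11}+p(1-p)(V_{12}+V_{13})+(1-p)^2V_{23}$, obtained from the distribution of $\parallel$ over $\vchoiceop{p}$ together with Theorem~\ref{theorem:reseqvchoice}, turns the first two refinements into one-line inequalities ($1-\nicefrac{p}{2}$ versus $1-\nicefrac{p^2}{2}$, and $\nicefrac{9}{16}$ versus $\nicefrac{10}{16}$, both verified) and actually exposes the mechanism, namely the $p^2$ mass on the doubly-$C_1$ branch; the transfer to the hidden-choice cases via pairwise disjoint output sets is legitimate, since there $\hchoiceop{p}$ and $\vchoiceop{p}$ yield matrices equal up to a column bijection and $\eqbij$ is a congruence for $\parallel$ (a fact worth stating explicitly, as it follows from Theorem~\ref{theorem:parord} rather than being automatic for $\equiv$). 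Your $\nicefrac{5}{8}<\nicefrac{7}{8}$ and $\nicefrac{7}{8}<1$ computations for the hidden-over-visible cases are also correct. The one claim to soften is that refuting the second refinement \emph{forces} four inputs: the paper's three-input example already refutes it, just not under the uniform prior with $g_{id}$, where both sides tie at $\nicefrac{2}{3}$; four inputs are needed only for your particular choice of prior and gain function. Overall your argument is complete and, by avoiding the paper's unproved ``solve the linear system'' step, somewhat more checkable.
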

\begin{proof}
$\bullet$ ($\vchoiceop{p}$ over $\parallel$) Let the following be three compatible channels,
$$
C_1=
\begin{bmatrix}
    1       \\
    1		\\
	1		\\
\end{bmatrix}
, \quad
C_2=
\begin{bmatrix}
    1       & 0   \\
    1       & 0   \\
    0       & 1   \\
\end{bmatrix}
, \quad
C_3=
\begin{bmatrix}
    1       & 0   \\
    0       & 1   \\
    0       & 1   \\
\end{bmatrix}
$$

Then, we have
$$
C_1 \vchoice{\nicefrac{1}{2}} (C_2 \parallel C_3) = 
\begin{bmatrix}
    \nicefrac{1}{2}       \;&\; \nicefrac{1}{2} \;&\; 0 \;&\; 0 \;&\; 0   \\
    \nicefrac{1}{2}       \;&\; 0 \;&\; \nicefrac{1}{2} \;&\; 0 \;&\; 0    \\
    \nicefrac{1}{2}       \;&\; 0 \;&\; 0 \;&\; 0 \;&\; \nicefrac{1}{2}  \\
\end{bmatrix}
$$
$$
(C_1 \vchoice{\nicefrac{1}{2}} C_2) \parallel (C_1 \vchoice{\nicefrac{1}{2}} C_3) = 
\begin{bmatrix}
    \nicefrac{1}{4} \;&\; \nicefrac{1}{4}  \;&\; 0 \;&\; \nicefrac{1}{4} \;&\; \nicefrac{1}{4}  \;&\; 0 \;&\; 0 \;&\; 0 \;&\; 0  \\
	\nicefrac{1}{4} \;&\; 0 \;&\; \nicefrac{1}{4} \;&\; \nicefrac{1}{4} \;&\; 0 \;&\; \nicefrac{1}{4} \;&\; 0\;&\; 0\;&\; 0 \\
	\nicefrac{1}{4} \;&\; 0\;&\; \nicefrac{1}{4} \;&\; 0 \;&\; 0\;&\; 0 \;&\; \nicefrac{1}{4} \;&\; 0 \;&\; \nicefrac{1}{4} \\
\end{bmatrix}
$$

It can be easily checked by solving linear systems that there is no channel $D$ such that
\begin{align*}(C_1 \vchoice{\nicefrac{1}{2}} (C_2 \parallel C_3) ) D &=(C_1 \vchoice{\nicefrac{1}{2}} C_2) \parallel (C_1 \vchoice{\nicefrac{1}{2}} C_3), \mbox{ or}\\
((C_1 \vchoice{\nicefrac{1}{2}} C_2) \parallel (C_1 \vchoice{\nicefrac{1}{2}} C_3)) D &=C_1 \vchoice{\nicefrac{1}{2}} (C_2 \parallel C_3).
\end{align*}

$\bullet$ ($\hchoiceop{p}$ over $\parallel$) Let $\calx=\{x_1,x_2\}$, $\caly=\{y_1,y_2\}$, and let $C_1$,$C_2$ and $C_3$ be of the same type, with input set $\calx$ and
output set $\caly$, given by

$$
C_1=
\begin{bmatrix}
    1       & 0   \\
    0		& 1   \\
\end{bmatrix}
, \quad
C_2=C_3=
\begin{bmatrix}
    0       & 1   \\
    1       & 0   \\
\end{bmatrix}
$$

Then, $C_1 \hchoice{\nicefrac{1}{2}}(C_2 \parallel C_3)$ is a transparent channel, 
and $(C_1 \hchoice{\nicefrac{1}{2}} C_2) \parallel (C_1 \hchoice{\nicefrac{1}{2}} C_3)$ is a null channel. 
Thus, $$(C_1 \hchoice{\nicefrac{1}{2}} C_2) \parallel (C_1 \hchoice{\nicefrac{1}{2}} C_3) \not\refines  C_1 \hchoice{\nicefrac{1}{2}}(C_2 \parallel C_3)$$.

Now let $C_4: \calx \times (\caly \times \caly) \rightarrow \reals$ be given by:
$$C_4 (x_i,(y_j, y_k) )=
\begin{cases}
1 \mbox{, if } i=j=k,\\
0 \mbox{ otherwise}
\end{cases}
$$
for $i,j,k \in \{1,2\}$.

Then, $C_4 \hchoice{\nicefrac{1}{2}}(C_2 \parallel C_3)$ is a null channel, 
and $(C_4 \hchoice{\nicefrac{1}{2}} C_2) \parallel (C_4 \hchoice{\nicefrac{1}{2}} C_3)$ is a transparent channel. 
Thus, $$C_4 \hchoice{\nicefrac{1}{2}}(C_2 \parallel C_3)\not\refines (C_4 \hchoice{\nicefrac{1}{2}} C_2) \parallel (C_4 \hchoice{\nicefrac{1}{2}} C_3) $$.

$\bullet$ ($\hchoiceop{p}$ over $\vchoiceop{p}$) Let $C_1$, $C_2$ and $C_3$ be the same as the proof for $\hchoiceop{p}$ over $\parallel$ above. We have that  $(C_1 \hchoice{\nicefrac{1}{2}} C_2) \vchoice{\nicefrac{1}{2}} (C_1 \hchoice{\nicefrac{1}{2}} C_3)$ is a null channel and $C_1 \hchoice{\nicefrac{1}{2}}(C_2 \vchoice{\nicefrac{1}{2}} C_3)$ is a transparent channel . 
Thus, $$(C_1 \hchoice{\nicefrac{1}{2}} C_2) \vchoice{\nicefrac{1}{2}} (C_1 \hchoice{\nicefrac{1}{2}} C_3) \not\refines  C_1 \hchoice{\nicefrac{1}{2}}(C_2 \vchoice{\nicefrac{1}{2}} C_3)$$.

Now, let, $C_4: \calx \times (\caly \sqcup \caly) \rightarrow \reals$ be given by:
$$C_4 (x_i,(y_j, k) )=
\begin{cases}
\nicefrac{1}{2} \mbox{, if } i=j,\\
0 \mbox{ otherwise}
\end{cases}
$$
for $i,j,k \in \{1,2\}$.

Therefore, $(C_4 \hchoice{\nicefrac{1}{2}} C_2) \vchoice{\nicefrac{1}{2}} (C_1 \hchoice{\nicefrac{1}{2}} C_3)$ is a transparent channel and $C_4 \hchoice{\nicefrac{1}{2}}(C_2 \vchoice{\nicefrac{1}{2}} C_3)$ is a null channel. 
Thus, $$C_4 \hchoice{\nicefrac{1}{2}}(C_2 \vchoice{\nicefrac{1}{2}} C_3)\not\refines (C_4 \hchoice{\nicefrac{1}{2}} C_2) \vchoice{\nicefrac{1}{2}} (C_1 \hchoice{\nicefrac{1}{2}} C_3) $$.
\qed
\end{proof}
}

\end{document}